\newcommand{\mynewtheorem}[2]{
  \newaliascnt{#1}{dummy}
  \newtheorem{#1}[#1]{#2}
  \aliascntresetthe{#1}
  \expandafter\def\csname #1autorefname\endcsname{#2}
}
\newcommand{\etal}{\emph{et al.}}
\newcommand{\V}{\mathcal{V}}
\newcommand{\F}{\mathcal{F}}
\newcommand{\GLT}{\textsc{GLT}}
\newcommand{\CSC}{\textsc{CSC}}
\newcommand{\Gr}{\mathsf{Gr}}
\newcommand{\ST}{\mathbf{ST}}
\newcommand{\PC}{\mathbf{PC}}
\newcommand{\C}{\mathbf{C}}
\newcommand{\CD}{\mathbf{C}}
\newcommand{\LL}{\mathbf{L}}
\newcommand{\node}{\mathbf{n}}
\newcommand{\cl}{\mathsf{cl}}
\newcommand{\queue}{\mathsf{Active}}
\newcommand{\visited}{\mathsf{Visited}}
\title{Circle graphs  can be recognized in linear time} 
\author{Christophe Paul}{CNRS, Université Montpellier, France \and \url{http://www.lirmm.fr/~paul} }{christophe.paul@lirmm.fr}{https://orcid.org/0000-0001-6519-975X}{Research supported by ANR-DFG project GODASse  ANR-24-CE48-4377}
\author{Ignaz Rutter}{University of Passau, Germany.}{rutter@fim.uni-passau.de}{https://orcid.org/0000-0002-3794-4406}{}
\authorrunning{C. Paul and I. Rutter} 
\keywords{graph classes, circle graphs, graph algorithms} 
\begin{document}

\maketitle

\begin{abstract}
To date, the best circle graph recognition algorithm, due to Gioan \etal~\cite{GioanPTC14Circle} runs in almost linear time as it relies on a split decomposition algorithm~\cite{GioanPTC14Split} that uses the union-find data-structure~\cite{GallerF64AnImproved,Tarjan75Efficiency}. We show that in the case of circle graphs, the PC-tree data-structure~\cite{ShihH99Anew} allows one to avoid the union-find data-structure to compute the split decomposition in linear time. As a consequence, we obtain the first linear-time recognition algorithm for circle graphs.

To date, the best circle graph recognition algorithm [Gioan, Paul, Tedder, and Corneil. Practical and efficient circle graph recognition. Algorithmica, 69:759–788, 2014] runs in almost linear time as it relies on a split decomposition algorithm [Gioan, Paul, Tedder, and Corneil. Practical and efficient split decomposition via graph-labelled trees. Algorithmica, 69:789–843, 2014] that uses the union-find data-structure. We show that in the case of circle graphs, the PC-tree data-structure [Shih and Hsu. A new planarity test. Theoretical Computer Science, 223:179–191, 1999] allows one to avoid the union-find data-structure to compute the split decomposition in linear time. As a consequence, we obtain the first linear-time recognition algorithm for circle graphs.
\end{abstract}

\section{Introduction}

A \emph{circle graph} is the intersection graph of a set of chords inscribed in a circle, called \emph{chord diagram} (see \autoref{fig_circle_graph} below for an example). 

\vspace{-0.6cm}
\begin{figure}[ht]
\begin{center}
\bigskip
\begin{tikzpicture}[thick,scale=0.65]
\tikzstyle{sommet}=[circle, draw, fill=black, inner sep=0pt, minimum width=4pt]
\tikzstyle{newsommet}=[circle, draw, fill=green!50!black, inner sep=0pt, minimum width=4pt]

\begin{scope}[xshift=-8cm,yshift=0cm,rotate=270]

\node[] (5) at (90+360/5:1.3) {} ;
\draw[] (5) node[sommet]{};
\node[] (55) at (90+360/5:1.7) {$b$};

\node[] (1) at (90+2*360/5:1.3) {} ;
\draw[] (1) node[sommet]{};
\node[] (11) at (90+2*360/5:1.7) {$e$};

\node[] (3) at (90+3*360/5:1.3) {} ;
\draw[] (3) node[sommet]{};
\node[] (33) at (90+3*360/5:1.7) {$d$};

\node[] (7) at (90+4*360/5:1.3) {} ;
\draw[] (7) node[sommet]{};
\node[] (77) at (90+4*360/5:1.7) {$c$};

\node[] (9) at (90:1.3) {} ;
\draw[] (9) node[sommet]{};
\node[] (99) at (90:1.7) {$a$};

\draw (1.center) -- (3.center) ;
\draw (1.center) -- (5.center) ;
\draw (3.center) -- (7.center) ;
\draw (5.center) -- (7.center) ;
\draw (5.center) -- (9.center) ;
\draw (7.center) -- (9.center) ;

\end{scope}

\begin{scope}[xshift=0cm,yshift=0cm,rotate=270,scale=0.8]

\draw[dotted] (0,0) circle [radius=2.2cm] ;

\node[] (a1) at (90-18+2*36:2.6) {$a_1$} ;
\node[] (a2) at (90-18+9*36:2.6) {$a_2$} ;
\node[] (aa1) at (90-18+2*36:2.2) {};
\node[] (aa2) at (90-18+9*36:2.2) {};
\draw[-] (aa1.center) -- (aa2.center) ;

\node[] (b1) at (90-18+36:2.6) {$b_1$} ;
\node[] (b2) at (90-18+7*36:2.6) {$b_2$} ;
\node[] (bb1) at (90-18+36:2.2) {};
\node[] (bb2) at (90-18+7*36:2.2) {};
\draw[-] (bb1.center) -- (bb2.center) ;

\node[] (c1) at (90-18:2.6) {$c_1$} ;
\node[] (c2) at (90-18+4*36:2.6) {$c_2$} ;
\node[] (cc1) at (90-18:2.2) {};
\node[] (cc2) at (90-18+4*36:2.2) {};
\draw[-] (cc1.center) -- (cc2.center) ;

\node[] (d1) at (90-18+3*36:2.6) {$d_2$} ;
\node[] (d2) at (90-18+6*36:2.6) {$d_1$} ;
\node[] (dd1) at (90-18+3*36:2.2) {};
\node[] (dd2) at (90-18+6*36:2.2) {};
\draw[-] (dd1.center) -- (dd2.center) ;

\node[] (e1) at (90-18+5*36:2.6) {$e_2$} ;
\node[] (e2) at (90-18+8*36:2.6) {$e_1$} ;
\node[] (ee1) at (90-18+5*36:2.2) {};
\node[] (ee2) at (90-18+8*36:2.2) {};
\draw[-] (ee1.center) -- (ee2.center) ;

\end{scope}

\end{tikzpicture}
\end{center}
\vspace{-0.2cm}
\caption{The House graph on the left and its chord diagram on the right.
\label{fig_circle_graph} 
}
\end{figure}
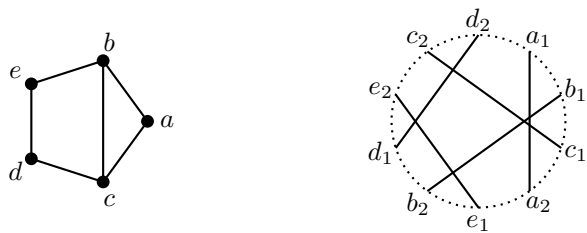

Since their introduction in the early 70's, circle graphs have been extensively studied. They were first defined by Even and Itai~\cite{EvenI71Queues}, who proved that the minimum number of parallel stacks needed to sort a permutation equals the chromatic number of a circle graph. 
Independently, Bouchet~\cite{Bouchet72Caracterisation} studied \emph{alternance graphs} to provide an algorithmic solution of the Gauss problem on self-intersection curves in the plane. 
A double occurence word on a set $L$ of letters is a word containing exactly two copies of every letter of $L$. An alternance graph is defined by a double occurence word on its set of vertices such that two vertices are adjacent if and only if their occurrences alternate in the word (see \autoref{sub_chord_diagram}). It is easy to see that alternance graphs are exactly circle graphs.
A first characterization of circle graphs was proposed by Fournier~\cite{Fournier78UneCaracterisation} in terms of ordered sets, while De Fraysseix~\cite{deFraysseix84Acharacterization} characterized circle graphs as intersection graphs of co-cyclic paths. 
As confirmed by recent results, circle graphs play a very important role in the theory of \emph{vertex minors} (see~\cite{Oum05Rankwidth}). A \emph{local-complementation} consists in replacing in a graph the neighbourhood of a vertex by its complement graph. A vertex-minor of a graph $G$ is a graph $H$ that can be obtained from $G$ by a series of local complementations and vertex deletions. It is easy to see from a chord diagram that every vertex-minor of a circle graph is itself a vertex minor (this was first noticed by Kotzig~\cite{Kotzig77Quelques} in a series of seminars). Bouchet~\cite{Bouchet94Circle} showed that a graph is a circle graph if and only if it excludes one of the three graphs of \autoref{fig_obstructions} as a vertex-minor. It is conjectured that the vertex-minor relation forms a well-quasi-ordering (see~\cite{Oum05Rankwidth}). To tackle this question, the \emph{rank-width} parameter has been introduced as the analog, for vertex-minor relation, of the tree-width for the minor relation~\cite{RoberstonS90Treewidth}. It turns out that circle graphs play the same role for the rank-width parameter, as planar graphs for treewidth. Indeed Geelen \etal~\cite{GeelenKMcCW23TheGrid} recently proved an analog of the grid minor exclusion theorem~\cite{RoberstonS86Excluding}: a graph has bounded rank-width if and only if it excludes as a vertex minor a sufficiently large comparability grid\footnote{For a positive integer $n$, the $n \times n$-comparability grid is the graph with vertex set
$\{(i, j) \mid i, j \in \{1, 2, . . . , n\}\}$ such that vertices $(i, j)$ and $(i', j')$ are adjacent
if either $i \leq i'$ and $j \leq j'$, or $i \geq i'$ and $j \geq j'$.}. In the same way as every planar graph is a minor of a large enough grid, every circle graph is a vertex minor of a sufficiently large comparability grid.

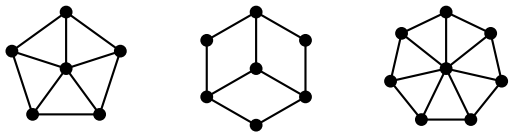
\begin{figure}[ht]
\begin{center}
\bigskip
\begin{tikzpicture}[thick,scale=0.5]
\tikzstyle{sommet}=[circle, draw, fill=black, inner sep=0pt, minimum width=4pt]
\tikzstyle{newsommet}=[circle, draw, fill=green!50!black, inner sep=0pt, minimum width=4pt]

\begin{scope}[xshift=-5cm,yshift=0cm]

\node[] (c) at (0:0) {} ;
\draw[] (c) node[sommet]{};
\foreach \i in {1,...,5}{
	\node[] (\i) at (90+\i*360/5:1.5) {} ;
	\draw[] (\i) node[sommet]{};
	\draw (c.center) -- (\i.center) ;
}

\draw (1.center) -- (2.center) ;
\draw (2.center) -- (3.center) ;
\draw (3.center) -- (4.center) ;
\draw (4.center) -- (5.center) ;
\draw (5.center) -- (1.center) ;
\end{scope}

\begin{scope}[xshift=0cm,yshift=0cm]

\node[] (c) at (0:0) {} ;
\draw[] (c) node[sommet]{};
\foreach \i in {1,...,6}{
	\node[] (\i) at (90+\i*360/6:1.5) {} ;
	\draw[] (\i) node[sommet]{};
}

\draw (1.center) -- (2.center) ;
\draw (2.center) -- (3.center) ;
\draw (3.center) -- (4.center) ;
\draw (4.center) -- (5.center) ;
\draw (5.center) -- (6.center) ;
\draw (6.center) -- (1.center) ;

\draw (c.center) -- (2.center) ;
\draw (c.center) -- (4.center) ;
\draw (c.center) -- (6.center) ;

\end{scope}
\begin{scope}[xshift=5cm,yshift=0cm]

\node[] (c) at (0:0) {} ;
\draw[] (c) node[sommet]{};
\foreach \i in {1,...,7}{
	\node[] (\i) at (90+\i*360/7:1.5) {} ;
	\draw[] (\i) node[sommet]{};
	\draw (c.center) -- (\i.center) ;
}

\draw (1.center) -- (2.center) ;
\draw (2.center) -- (3.center) ;
\draw (3.center) -- (4.center) ;
\draw (4.center) -- (5.center) ;
\draw (5.center) -- (6.center) ;
\draw (6.center) -- (7.center) ;
\draw (7.center) -- (1.center) ;
\end{scope}

\end{tikzpicture}
\end{center}
\caption{The vertex minor obstructions of circle graphs: $W_5$, the wheel of 5 vertices, $F_7$ and $W_7$.
\label{fig_obstructions} 
}
\end{figure}

As pointed out in~\cite{Bouchet87Reducing}, neither Fournier's  nor de Fraysseix's characterization yield a polynomial-time recognition algorithm for circle graph. It took about 15 years for the first recognition algorithm to appear. Actually, three polynomial-time recognition algorithms were independently obtained in the mid 80's by Naji~\cite{Naji85Reconnaissance}, by Gabor \etal~\cite{GaborHS89Recognizing}, and by Bouchet~\cite{Bouchet87Reducing}, respectively. Each of these three algorithms involves the \emph{split decomposition} of graphs, introduced by Cunningham and Edmonds~\cite{CunnighamE80Acombinatorial}. Indeed, they all reduce the circle graph recognition problem to \emph{prime} graphs, that is, graphs that are indecomposable with respect to the split decomposition. 
The $O(n^7)$-time algorithm of Naji~\cite{Naji85Reconnaissance} relies on the resolution of a system of linear equations that characterizes prime circle graphs. 
Bouchet's algorithm~\cite{Bouchet87Reducing} runs in $O(n^5)$-time and uses the property that every prime circle graph on $n$ vertices contains as a vertex minor a prime circle graph on $n-1$ vertices. The complexity of the algorithm of Gabor \etal~is $O(n^3)$. It relies on the property that prime circle graphs have a unique chord diagram, a property that was formally proved only later (see \cite{Bouchet87Reducing,Courcelle08Circle,GioanPTC14Circle}). The complexity of the circle graph recognition was later improved  down to $O(n^2)$-time by Spinrad~\cite{Spinrad94Recognition}. This was possible by the use of a novel $O(n^2)$-time algorithm to compute the split decomposition of a graph due to Ma and Spinrad~\cite{Mas94QuadraticSplit}. We observe that the linear-time split decomposition algorithm later obtained by Dahlhaus~\cite{Dahlhaus94SplitConf,Dahlaus00SplitJournal} has no impact on the complexity of Spinrad's circle graph recognition algorithm. The reason is that Spinrad's algorithm is a vertex-incremental algorithm, while Dahlhaus' is not.
Finally, until very recently no subquadratic time circle graph recognition algorithm was known. Gioan \etal~\cite{GioanPTC14Split,GioanPTC14Circle} broke the quadratic barrier by designing a novel almost linear-time split decomposition algorithm and showed how to apply that algorithm for the circle graph recognition problem. 
Their algorithm uses an important property of the last vertex of a Lexicographic Breadth-First-Search (LexBFS)  \cite{RoseTL76Algorithmic} with respect to the split-decomposition. In the case of circle graphs that property translates to the existence of a chord diagram where the neighbourhood of the last LexBFS vertex appears \emph{consecutively} (see \autoref{lem_good_vertex}, so-called \emph{good vertex lemma}). This allows Gioan \etal{} to design a LexBFS-based algorithm that incrementally updates the split decomposition and the corresponding representation of circle graphs. The hurdle to linear time complexity in the algorithm of Gioan \etal~is the use of the \emph{union-find} data-structure \cite{GallerF64AnImproved,Tarjan75Efficiency} to update the parent-children relationships in the modified split decomposition. The resulting time complexity is $O((n+m)\cdot \alpha(n,m))$, where $\alpha(\cdot)$ is the slowly growing inverse of the Ackermann function.

\medskip
\noindent
\textbf{Our result.} Gioan \etal~\cite{GioanPTC14Circle} left open the question of the existence of a linear-time circle graph recognition algorithm. 
To resolve that question,  we show that the consecutive property of the neighbourhood of the last vertex of a LexBFS ordering 
allows us to use a PC-tree data-structure~\cite{ShihH99Anew} (see also \cite{HsuMcC03PCTrees,FinkPR23Experimental}) rather than the union-find data-structure. As a consequence, we shave the $\alpha(n,m)$ factor in the time complexity of Gioan \etal's algorithm, yielding the first linear-time circle graph recognition algorithm.


An intriguing question that remains open is whether the LexBFS incremental split decomposition algorithm could also be turned into a linear time algorithm. But to circumvent the use of the union-find data-structure, in that case, we are still missing a generalization of good vertex lemma providing a consecutive property.

The linear time circle graph recognition algorithm has implications for other problems, where linear time was achieved only under the assumption that a representation of the input is provided.  This includes 
the isomorphism problem for circle graphs~\cite{KaliszKZ22} as well as the partial representation extension problem~\cite{BrucknerRS24Extending}, where one is given a chord diagram for an induced subgraph of the input graph and seeks to extend it to a representation of the full graph.  In addition, it may have pave the way for a new linear-time recognition algorithm of circular-arc graphs that follows Hsu's approach~\cite{Hsu95}.

\medskip
\noindent
\textbf{Organization of the paper.} \autoref{sec_prelim} introduces the basic definitons and presents the model of \emph{graph labelled trees} (GLT) proposed by \cite{GioanP12Split,GioanPTC14Split} to represent the split decomposition of a graph.
\autoref{sec_circle} is dedicated to circle graphs, their chord diagrams, the data-structures we use to store a circle graph.  The GLT representing the split decomposition tree is implemented by a PC-tree~\cite{HsuMcC03PCTrees,FinkPR23Experimental}, and the chord diagrams representing the prime nodes are encoded by Consistent Symmetric Cycles~\cite{GioanPTC14Circle}.  \autoref{sec:lexbfs} describes LexBFS and corresponding properties of circle graphs, including an alternative proof of the \emph{good vertex lemma}. The original proof of Gioan et al.~\cite{GioanPTC14Circle} was two-steps: it first proved the property for prime circle graphs and then the general case follows as a consequence of the recognition algorithm. Our proof is independent of the recognition algorithm. Finally, in~\autoref{sec_algo}, we present how to adapt the circle graph recognition of Gioan et al.~\cite{GioanPTC14Circle} to the new data-structure. The algorithm checks if a vertex ending a LexBFS can be inserted in the split PC-tree of a circle graph and if so updates the split PC-tree representation. Along with the correctness proof, we provide an amortized time complexity analysis based on the method of potentials~\cite{SleatorT85} (see also~\cite[Ch. 17.3]{cormen01introduction}). Interestingly, we observe that this complexity analysis also applies to the original algorithm~\cite{GioanPTC14Circle}, and to the split decomposition algorithm~\cite{GioanPTC14Split} as well, and simplifies the complexity therein.

\section{Preliminaries and split decomposition}
\label{sec_prelim}

\subsection{Basic definitions}
\label{sub_basic_def}

A \emph{word} over an alphabet $\Sigma$ is a finite sequence of letters of $\Sigma$. If $A$ is a word over $\Sigma$, then $A^r$ denotes its \emph{reversal} (the ordering between the letters of $A$ has been reversed). The concatenation between two words $A$ and $B$ over $\Sigma$ is a word over $\Sigma$ denoted $AB$. A subsequence $F$ of consecutive letters in a word $A$ is called a \emph{factor} of $A$.

A \emph{circular word} $C$ over $\Sigma$ is a circular sequence of letters of $\Sigma$. It can be represented by a word by considering that the first letter follows the last letter. Observe that such a representation fixes an arbitrary first letter. So if a circular word $C$ is represented by the word $AB$, then $BA$ also represents $C$.
To denote that equivalence, we write $C\sim AB\sim BA$. Observe that, if $C$ a circular word represented by the word $A$, that is $C\sim A$, then the reversal of $C$ is $C^r\sim A^r$. The notion of factor naturally extends to circular words: $F$ is a factor of $C$ if there exists a word $A$ such that $C\sim A$ and $F$ is a factor of $A$. 

Unless specified, we will assume that every graph $G=(V,E)$, on vertex set $V$ and edge set $E$, is connected. The neighbourhood of a vertex $x$ in $G$ will be denoted $N_{G}(x)$ or simply~$N(x)$ if the graph is clear from the context. A \emph{clique} is a graph where every vertex is adjacent to every other vertex. A \emph{star} is a tree with one universal vertex. For a graph $G=(V,E)$ and a vertex $x$ not belonging to $V$, we let denote $G'=G+x$~\footnote{Though the neighbourhood  of $x$ is not specified in the notation $G+x$, it will always be clear from the context.} the graph obtained by adding $x$ to $V$ and all the edges between $x$ and $N_{G'}(x)$ to $E$. Given a subset $S$ of vertices of $V$, we let $G[S]$ denote the subgraph induced by $S$. A \emph{vertex ordering} $\sigma$ is a total order on the vertices of a graph $G$. If $x$ is smaller than $y$ in $\sigma$, we write $x<_{\sigma} y$. For a subset $S$ of vertices, we let $\sigma[S]$ denote the subordering of $\sigma$ induced by the vertices of $S$, that is for every $x,y\in S$, $x<_{\sigma[S]} y$ if and only if $x<_{\sigma} y$.

\subsection{Split decomposition and graph labelled trees}
\label{sub_split_GLT}

A bipartition $(A,B)$ of a set $V$ is \emph{non-trivial} if $|A|>1$, $|B|>1$.

\begin{definition}[Split of a graph]
  A \emph{split} of a graph $G=(V,E)$ is a non-trivial bipartition $(A,B)$ of its vertex set $V$ with two subsets $A'\subseteq A$, $B'\subseteq B$, called \emph{frontiers} of the split, such that for every $x\in A$ and $y\in B$, it holds that $xy\in E$ if and only if $x\in A'$ and $y\in B'$. 
\end{definition}

Splitting a graph $G$ with respect to a split $(A,B)$ of $G$ returns two graphs $G_A$ and $G_B$ obtained from~$G$ by contracting $B$ into a vertex $x_A$ and $A$ into a vertex $x_B$ (the neighbourhood of $x_A$ in $G_A$ is $A'$ and of $x_B$ in $G_B$ is $B'$), respectively.
Let  $G=(V,E)$ and $G'=(V',E')$ be two graphs and let $x\in V$ and $x'\in V'$ be two vertices. The \emph{join} between $G$ and $G'$ with respect to $x$ and $x'$, denoted $(G,x)\otimes(G',x')$, is the graph on vertex set $(V\cup V')\setminus \{x,x'\}$ such that two vertices $y$ and $z$ are adjacent if and only if $yz\in E$, or $yz\in E'$, or $y\in N_G(x)$ and $z\in N_{G'}(x')$. So the join and the splitting are reverse operations for large enough graphs.
Observe that if  $G$ is the single vertex graph, then $(G,x)\otimes(G',x')$ is isomorphic to $G[V\setminus\{x'\}]$, and if $G$ is the clique of size $2$, then $(G,x)\otimes(G',x')$ is isomorphic to $G'$
Otherwise, if $G$ and $G'$ are connected, then $(V\setminus\{x\},V'\setminus \{x'\})$ is a split of $(G,x)\otimes(G',x')$. 
A graph $G$ is \emph{degenerate} if every non-trivial bipartition of $V$ is a split of $G$. It is known that if $G$ is degenerate, then $G$ is either a \emph{clique} or a \emph{star}. A graph $G$ is \emph{prime} if it does not contain any split.

\begin{definition}[Graph labelled tree]
A \emph{graph labelled tree} (\GLT) is a pair $(T,\F)$, where $T$ is a tree and $\mathcal{F}$ a set of graphs, called \emph{label graphs}, such that each node $u$ of $T$ is labelled by the graph $G(u)\in \F$ and is further equipped with a bijection~$\rho_u$ that bijectively maps the edges incident to~$u$ to the vertices of~$G(u)$.
\end{definition}

Let $(T,\F)$ be a \GLT{} and let $u$ be a node of $(T,\F)$. We say that the node $u$ is \emph{prime} if $G(u)$ is a prime graph and that $u$ is \emph{degenerate} if $G(u)$ is degenerate.
We let $V(u)$ denote the vertices of $G(u)$, hereafter called \emph{marker vertices}, and $E(u)$ denote the edges of $G(u)$, hereafter called \emph{label edges}.  Since a leaf has only one incident edge, we may abusively consider leaves as marker vertices (namely the leaf~$u$ is identified with the marker~$\rho_u(e)$, of the unique edge~$e$ incident to~$u$). If $e=uv$ is a tree-edge of $T$, then the marker vertices $\rho_u(e)$ and $\rho_v(e)$ are \emph{opposite} marker vertices called the \emph{extremities} of $e$. Observe that every marker vertex is the extremity of a tree-edge of $T$. 

A graph labelled tree is designed to represent a graph in a compact manner. To explain how, we need to introduce the notion of accessibility, which can be seen as an extension of the adjacency. Two marker vertices $q$ and $q'$ of distinct nodes are \emph{accessible} from one another if there exists a sequence of marker vertices $\langle q=q_1,\dots , q_{2k}=q'\rangle$ of even length such that~$q_i,q_{i+1}$ are the extremities of a tree-edge if~$i$ is odd and are adjacent marker vertices in some label graph if~$i$ is even.

Let $e=uu'$ be a tree-edge of the \GLT{} $(T,\F)$. Consider the marker vertex $q\in V(u)$ that is the extremity $e$. Then, we let $L(q)$ denote the set of leaves of the subtree of $T-e$ not containing~$u$. Among $L(q)$, we distinguish the subset $A(q)$ of leaves accessible from $q$. 

\begin{definition}[Accessiblity graph]
The \emph{accessibility graph} $\Gr(T,\F)$ of a \GLT{} $(T,\F)$ is the graph whose vertices are the leaves of $T$ and such that two vertices are adjacent if and only if the corresponding leaves are accessible from one another. 
\end{definition}

Let $q'\in V(u')$ be the extremity of $e$ distinct from~$q$. We observe that if $e$ is not incident to a leaf of $T$, then the bipartition $(L(q),L(q'))$ is a split of $\Gr(T,\F)$
and the frontiers of that split are $A(q)$ and $A(q')$. 
The \emph{node-join} of two non-leaf nodes $u$ and $u'$ in $(T,\F)$ returns the \GLT{} $(T',\F')$ obtained by contracting the tree-edge $e=uu'$, labelling the resulting node~$v$ by the graph $G(v)=(G(u),q)\otimes (G(u'),q')$, and letting every marker vertex $q\in V(v)$ be the extremity of the same tree-edge as in $(T,\F)$.
The \emph{node-split} operation of a \GLT{} is the reverse of the node-join operation. Suppose that $(A,B)$ is a split of the graph $G(u)$ for some node $u$ of $(T,\F)$. The node-split of $u$ with respect to $(A,B)$ returns the \GLT{} $(T',\F')$ where node $u$ has been replaced by two adjacent nodes $u_A$ and $u_B$  labelled by $G(u_A)=G(u)[A\cup\{q_A\}]$ for some vertex $q_A$ of the frontier $A'$ and $G(u_B)=G(u)[B\cup\{q_B\}]$ for some vertex $q_A$ of the frontier $A'$, respectively. The marker vertices $q_A\in V(u_A)$ and $q_A\in V(u_B)$ are made the extremities of the tree-edge $u_Au_B$. Every other marker vertex is an extremity of the same tree-edge as in $(T,\F)$.

\begin{theorem}[\cite{CunnighamE80Acombinatorial,GioanPTC14Split}]
For any connected graph $G$, there exists a unique \GLT{}, denoted $\ST(G)$ and called the \emph{split-tree} of $G$, whose labels are either prime or degenerate, having a minimum number of nodes and such that $\Gr(\ST(G))=G$.
\end{theorem}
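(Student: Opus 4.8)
The plan is to build \emph{some} \GLT{} with prime or degenerate labels whose accessibility graph is $G$, to compress it into a canonical form by node-joins, and then to argue that this canonical form is simultaneously node-minimal and unique. We may assume $|V|\ge 3$, since for $|V|\le 2$ the one-node \GLT{} whose label is $G$ itself already works and is trivially the unique minimiser.

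For existence of a fully decomposed \GLT{}, start from $(T_0,\F_0)$ where $T_0$ is the star with centre $u$ and leaf set $V$ and $G(u)$ is a copy of $G$, with $\rho_u$ sending each leaf-edge to the corresponding vertex; then $\Gr(T_0,\F_0)=G$ directly from the definition of accessibility. As long as some node $u$ has a label that is neither prime nor degenerate, $G(u)$ admits a non-trivial split $(A,B)$, and we perform the node-split of $u$ along $(A,B)$. One checks that node-split (like its inverse node-join) preserves $\Gr$: the observation following the definition of the accessibility graph already says that the two sides of the new tree-edge form a split of $\Gr$ with the prescribed frontiers. A node-split replaces $u$ by two nodes carrying $|A|+1\ge 3$ and $|B|+1\ge 3$ markers, hence of tree-degree $\ge 3$, so every internal node keeps degree at least $3$; since a tree with $n$ leaves has at most $n-2$ such internal nodes, only finitely many node-splits occur. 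The process therefore halts at a \GLT{} $(T,\F)$ with prime or degenerate labels and $\Gr(T,\F)=G$.

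Next we compress. The node-join of two adjacent clique nodes is again a clique node, and the node-join of two adjacent star nodes whose common tree-edge joins the centre of one to a leaf of the other is again a star node; one checks that no other pair of adjacent degenerate nodes joins to a degenerate node. Perform such node-joins exhaustively — each strictly lowers the node count — to obtain a \emph{reduced} \GLT{}: prime or degenerate labels, no two adjacent clique nodes, and no two adjacent star nodes meeting centre-to-leaf, still with $\Gr=G$.

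It remains to prove that a reduced \GLT{} is node-minimal and that all reduced \GLT{}s for $G$ are isomorphic; this is the crux, and it is exactly the Cunningham--Edmonds canonical split decomposition theorem~\cite{CunnighamE80Acombinatorial} transcribed into the \GLT{} language of~\cite{GioanPTC14Split}. The key step is to describe, intrinsically in $G$, which bipartitions occur as $(L(q),L(q'))$ over the internal tree-edges $e=qq'$ of a reduced \GLT{}: these are precisely the \emph{strong} splits of $G$, those not crossed by any other split, whereas each degenerate node packages a whole family of mutually crossing splits (a clique node one behaving like all bipartitions of a set, a star node one separating a distinguished element). From this one gets (i) every reduced \GLT{} realises exactly the family of strong splits as its internal tree-edges, with degenerate nodes forced by the crossing families, so any two reduced \GLT{}s coincide up to isomorphism; and (ii) for any \GLT{} with prime or degenerate labels and $\Gr=G$, every internal tree-edge induces a split of $G$, hence refines into strong splits, so such a \GLT{} has at least as many nodes as the reduced one. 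Together, (i) and (ii) give uniqueness of the node-minimal \GLT{}, which we denote $\ST(G)$. The main obstacle is precisely this last step: proving that the strong splits form a laminar-type family faithfully carried by a tree and that overlapping families collapse exactly into clique or star nodes; I would either develop the crossing relation between splits directly, or simply cite~\cite{CunnighamE80Acombinatorial,GioanPTC14Split}.
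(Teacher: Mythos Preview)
The paper does not prove this theorem: it is stated with the citations~\cite{CunnighamE80Acombinatorial,GioanPTC14Split} and then used as a black box, with no proof or even proof sketch in the text. There is thus nothing in the paper to compare your attempt against.

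Your proposal is a reasonable outline of the classical Cunningham--Edmonds argument, and you are candid that the crux---that strong splits form a tree-like family with crossing classes collapsing into degenerate nodes---is the hard part you would either develop or cite. That is an honest assessment: the existence and compression steps you give are routine, but the uniqueness step genuinely requires the Cunningham--Edmonds machinery (or one of its later reformulations), and your sketch of it is too thin to stand alone as a proof. One small technical point: in your compression step you should also observe that node-joining a prime node with anything, or two stars meeting leaf-to-leaf or centre-to-centre, never yields a prime or degenerate label, so the reduced form really is characterised by the two forbidden adjacencies you list; without this the claim that ``reduced'' is well-defined and reached from any starting \GLT{} is incomplete. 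But since the paper itself simply cites the result, your write-up already exceeds what the paper provides.
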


\section{Circle graphs and split PC-trees}
\label{sec_circle}

\subsection{Chord diagrams and circle graphs}
\label{sub_chord_diagram}

A \emph{chord} on a circle is defined by a pair $c=\{c_1,c_2\}$ of distinct points of the circle, called \emph{endpoints}. Let $\chi$ be a set of chords no pair of which share a common endpoint. An \emph{expanded chord diagram} on $\chi$
is naturally represented by a circular word $\dot{D}$ over the alphabet $\V=\bigcup_{c\in \chi}\{c_1,c_2\}$. A \emph{chord diagram} $D$ is obtained from the expanded chord diagram $\dot{D}$ by replacing every endpoint $c_i\in \V$ ($i\in\{1,2\}$) by the corresponding chord $c\in \chi$. Observe that a chord diagram $D$ is a \emph{double occurence circular word} over the alphabet $\chi$, that is every chord of $\chi$ appears exactly twice in $D$. We may abusively say that each occurence of a chord~$c$ in $D$ is an endpoints of $c$. A subset $S$ of chords of $\chi$ is \emph{consecutive} in $D$ if $D$ contains a factor~$F$ such that for every chord $c\in S$, $|c\cap F|=1$ and for every chord $c'\notin S$, $c'\cap F=\emptyset$. The first and the last chord of the factor $F$ are called the \emph{bookends} of $F$ (or of $S$).
Let $a$ and $b$ be two endpoints of the chord diagram $D$. Then $D(a,b)$ is the factor of $D$ containing the set of endpoints comprised between $a$ and $b$ (not containing $a$ and $b$), while $D(b,a)$ contains those comprised between $b$ and $a$. In other words, we have that $D\sim aD(a,b)bD(b,a)$. Observe that $D(a,b)^r=D^r(b,a)$ and that $D(b,a)^r=D^r(a,b)$.

Let $D$ and $D'$ be chord diagrams on $\chi$ and $\chi'$, respectively, and let $x\in \chi$ and $x'\in \chi'$. We define the \emph{circle-join} operation between $D$ and $D'$ with respect to $x=\{x_1,x_2\}$ and $x'=\{x_1',x_2'\}$ as follows:
\[(D,x)\odot (D',x')\sim D(x_1,x_2)D'(x'_1,x'_2)D(x_2,x_1)D'(x'_2,x'_1)\]
The result is a chord diagram on the set $(\chi\cup \chi')\setminus\{x,x'\}$ (see \autoref{fig_node_join}).
We observe that the circle-join is not a commutative operation: $(D,x)\odot (D',x')\neq (D',x')\odot (D,x)$.

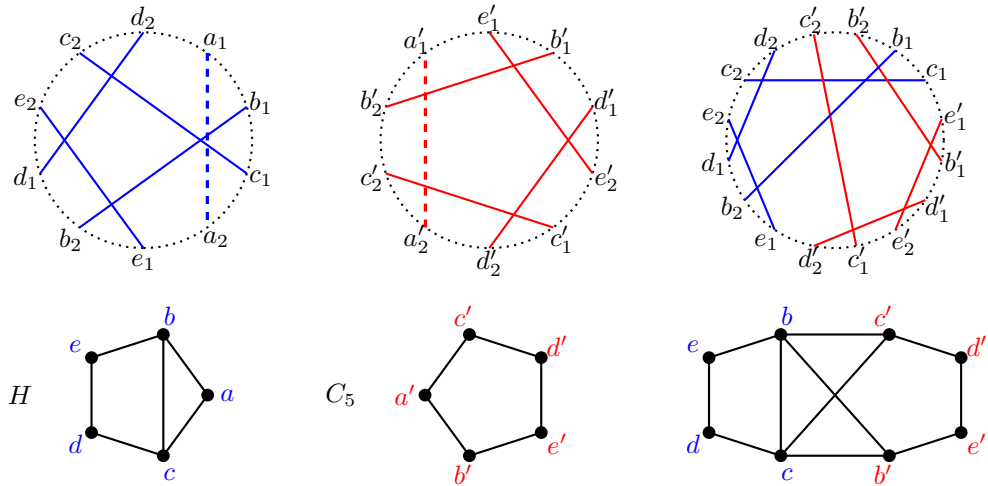
\begin{figure}[ht]
\begin{center}
\bigskip
\begin{tikzpicture}[thick,scale=0.65]
\tikzstyle{sommet}=[circle, draw, fill=black, inner sep=0pt, minimum width=4pt]
\tikzstyle{newsommet}=[circle, draw, fill=green!50!black, inner sep=0pt, minimum width=4pt]

\begin{scope}[xshift=0cm,yshift=-5.2cm,rotate=270]

\node[] (H) at (270:2.5) {$H$};

\node[] (5) at (90+360/5:1.3) {} ;
\draw[] (5) node[sommet]{};
\node[blue] (55) at (90+360/5:1.7) {$b$};

\node[] (1) at (90+2*360/5:1.3) {} ;
\draw[] (1) node[sommet]{};
\node[blue] (11) at (90+2*360/5:1.7) {$e$};

\node[] (3) at (90+3*360/5:1.3) {} ;
\draw[] (3) node[sommet]{};
\node[blue] (33) at (90+3*360/5:1.7) {$d$};

\node[] (7) at (90+4*360/5:1.3) {} ;
\draw[] (7) node[sommet]{};
\node[blue] (77) at (90+4*360/5:1.7) {$c$};

\node[] (9) at (90:1.3) {} ;
\draw[] (9) node[sommet]{};
\node[blue] (99) at (90:1.7) {$a$};

\draw (1.center) -- (3.center) ;
\draw (1.center) -- (5.center) ;
\draw (3.center) -- (7.center) ;
\draw (5.center) -- (7.center) ;
\draw (5.center) -- (9.center) ;
\draw (7.center) -- (9.center) ;

\end{scope}

\begin{scope}[xshift=0cm,yshift=0cm,rotate=270]

\draw[dotted] (0,0) circle [radius=2.2cm] ;

\node[] (a1) at (90-18+2*36:2.5) {$a_1$} ;
\node[] (a2) at (90-18+9*36:2.5) {$a_2$} ;
\node[] (aa1) at (90-18+2*36:2.2) {};
\node[] (aa2) at (90-18+9*36:2.2) {};
\draw[blue,very thick,dashed] (aa1.center) -- (aa2.center) ;

\node[] (b1) at (90-18+36:2.5) {$b_1$} ;
\node[] (b2) at (90-18+7*36:2.5) {$b_2$} ;
\node[] (bb1) at (90-18+36:2.2) {};
\node[] (bb2) at (90-18+7*36:2.2) {};
\draw[blue,-] (bb1.center) -- (bb2.center) ;

\node[] (c1) at (90-18:2.5) {$c_1$} ;
\node[] (c2) at (90-18+4*36:2.5) {$c_2$} ;
\node[] (cc1) at (90-18:2.2) {};
\node[] (cc2) at (90-18+4*36:2.2) {};
\draw[blue,-] (cc1.center) -- (cc2.center) ;

\node[] (d1) at (90-18+3*36:2.5) {$d_2$} ;
\node[] (d2) at (90-18+6*36:2.5) {$d_1$} ;
\node[] (dd1) at (90-18+3*36:2.2) {};
\node[] (dd2) at (90-18+6*36:2.2) {};
\draw[blue,-] (dd1.center) -- (dd2.center) ;

\node[] (e1) at (90-18+5*36:2.5) {$e_2$} ;
\node[] (e2) at (90-18+8*36:2.5) {$e_1$} ;
\node[] (ee1) at (90-18+5*36:2.2) {};
\node[] (ee2) at (90-18+8*36:2.2) {};
\draw[blue,-] (ee1.center) -- (ee2.center) ;

\end{scope}

\begin{scope}[xshift=7cm,yshift=-5.2cm,rotate=90]

\node[] (C) at (90:3) {$C_5$};

\node[] (5) at (90+360/5:1.3) {} ;
\draw[] (5) node[sommet]{};
\node[red] (55) at (90+360/5:1.7) {$b'$};

\node[] (1) at (90+2*360/5:1.3) {} ;
\draw[] (1) node[sommet]{};
\node[red] (11) at (90+2*360/5:1.7) {$e'$};

\node[] (3) at (90+3*360/5:1.3) {} ;
\draw[] (3) node[sommet]{};
\node[red] (33) at (90+3*360/5:1.7) {$d'$};

\node[] (7) at (90+4*360/5:1.3) {} ;
\draw[] (7) node[sommet]{};
\node[red] (77) at (90+4*360/5:1.7) {$c'$};

\node[] (9) at (90:1.3) {} ;
\draw[] (9) node[sommet]{};
\node[red] (99) at (90:1.7) {$a'$};

\draw (1.center) -- (3.center) ;
\draw (1.center) -- (5.center) ;
\draw (3.center) -- (7.center) ;
\draw (5.center) -- (9.center) ;
\draw (7.center) -- (9.center) ;

\end{scope}

\begin{scope}[xshift=7cm,yshift=0cm,rotate=90]

\draw[dotted] (0,0) circle [radius=2.2cm] ;

\node[] (a1) at (90-18+2*36:2.5) {$a'_2$} ;
\node[] (a2) at (90-18+9*36:2.5) {$a'_1$} ;
\node[] (aa1) at (90-18+2*36:2.2) {};
\node[] (aa2) at (90-18+9*36:2.2) {};
\draw[red,very thick,dashed] (aa1.center) -- (aa2.center) ;

\node[] (b1) at (90-18:2.5) {$b'_2$} ;
\node[] (b2) at (90-18+7*36:2.5) {$b'_1$} ;
\node[] (bb1) at (90-18:2.2) {};
\node[] (bb2) at (90-18+7*36:2.2) {};
\draw[red,-] (bb1.center) -- (bb2.center) ;

\node[] (c1) at (90-18+36:2.5) {$c'_2$} ;
\node[] (c2) at (90-18+4*36:2.5) {$c'_1$} ;
\node[] (cc1) at (90-18+36:2.2) {};
\node[] (cc2) at (90-18+4*36:2.2) {};
\draw[red,-] (cc1.center) -- (cc2.center) ;

\node[] (d1) at (90-18+3*36:2.5) {$d'_2$} ;
\node[] (d2) at (90-18+6*36:2.5) {$d'_1$} ;
\node[] (dd1) at (90-18+3*36:2.2) {};
\node[] (dd2) at (90-18+6*36:2.2) {};
\draw[red,-] (dd1.center) -- (dd2.center) ;

\node[] (e1) at (90-18+5*36:2.5) {$e'_2$} ;
\node[] (e2) at (90-18+8*36:2.5) {$e'_1$} ;
\node[] (ee1) at (90-18+5*36:2.2) {};
\node[] (ee2) at (90-18+8*36:2.2) {};
\draw[red,-] (ee1.center) -- (ee2.center) ;

\end{scope}

\begin{scope}[xshift=14cm,yshift=0cm,rotate=0]

\draw[dotted] (0,0) circle [radius=2.2cm] ;

\node[] (d1) at (180+11.25:2.5) {$d_1$} ;
\node[] (d2) at (180+11.25-3*22.5:2.5) {$d_2$} ;
\node[] (dd1) at (180+11.25:2.2) {};
\node[] (dd2) at (180+11.25-3*22.5:2.2) {};
\draw[blue,-] (dd1.center) -- (dd2.center) ;

\node[] (e1) at (180+11.25-22.5:2.5) {$e_2$} ;
\node[] (e2) at (180+11.25-14*22.5:2.5) {$e_1$} ;
\node[] (ee1) at (180+11.25-22.5:2.2) {};
\node[] (ee2) at (180+11.25-14*22.5:2.2) {};
\draw[blue,-] (ee1.center) -- (ee2.center) ;

\node[] (c1) at (180+11.25-2*22.5:2.5) {$c_2$} ;
\node[] (c2) at (180+11.25-7*22.5:2.5) {$c_1$} ;
\node[] (cc1) at (180+11.25-2*+22.5:2.2) {};
\node[] (cc2) at (180+11.25-7*22.5:2.2) {};
\draw[blue,-] (cc1.center) -- (cc2.center) ;

\node[] (c'1) at (180+11.25-4*22.5:2.5) {$c'_2$} ;
\node[] (c'2) at (180+11.25-12*22.5:2.5) {$c'_1$} ;
\node[] (cc'1) at (180+11.25-4*22.5:2.2) {};
\node[] (cc'2) at (180+11.25-12*22.5:2.2) {};
\draw[red,-] (cc'1.center) -- (cc'2.center) ;

\node[] (b'1) at (180+11.25-5*22.5:2.5) {$b'_2$} ;
\node[] (b''2) at (180+11.25-9*22.5:2.5) {$b'_1$} ;
\node[] (bb'1) at (180+11.25-5*22.5:2.2) {};
\node[] (bb'2) at (180+11.25-9*22.5:2.2) {};
\draw[red,-] (bb'1.center) -- (bb'2.center) ;

\node[] (b1) at (180+11.25-6*22.5:2.5) {$b_1$} ;
\node[] (b2) at (180+11.25-15*22.5:2.5) {$b_2$} ;
\node[] (bb1) at (180+11.25-6*22.5:2.2) {};
\node[] (bb2) at (180+11.25-15*22.5:2.2) {};
\draw[blue,-] (bb1.center) -- (bb2.center) ;

\node[] (e'1) at (180+11.25-8*22.5:2.5) {$e'_1$} ;
\node[] (e'2) at (180+11.25-11*22.5:2.5) {$e'_2$} ;
\node[] (ee'1) at (180+11.25-8*22.5:2.2) {};
\node[] (ee'2) at (180+11.25-11*22.5:2.2) {};
\draw[red,-] (ee'1.center) -- (ee'2.center) ;

\node[] (d'1) at (180+11.25-10*22.5:2.5) {$d'_1$} ;
\node[] (d'2) at (180+11.25-13*22.5:2.5) {$d'_2$} ;
\node[] (dd'1) at (180+11.25-10*22.5:2.2) {};
\node[] (dd'2) at (180+11.25-13*22.5:2.2) {};
\draw[red,-] (dd'1.center) -- (dd'2.center) ;

\end{scope}

\begin{scope}[xshift=14cm,yshift=-5.2cm,rotate=0]

\begin{scope}[xshift=-1.5cm,yshift=0cm,rotate=270]

\node[] (5) at (90+360/5:1.3) {} ;
\draw[] (5) node[sommet]{};
\node[blue] (55) at (90+360/5:1.7) {$b$};

\node[] (1) at (90+2*360/5:1.3) {} ;
\draw[] (1) node[sommet]{};
\node[blue] (11) at (90+2*360/5:1.7) {$e$};

\node[] (3) at (90+3*360/5:1.3) {} ;
\draw[] (3) node[sommet]{};
\node[blue] (33) at (90+3*360/5:1.7) {$d$};

\node[] (7) at (90+4*360/5:1.3) {} ;
\draw[] (7) node[sommet]{};
\node[blue] (77) at (90+4*360/5:1.7) {$c$};

\draw (1.center) -- (3.center) ;
\draw (1.center) -- (5.center) ;
\draw (3.center) -- (7.center) ;
\draw (5.center) -- (7.center) ;

\end{scope}

\begin{scope}[xshift=1.5cm,yshift=0cm,rotate=90]

\node[] (5') at (90+360/5:1.3) {} ;
\draw[] (5') node[sommet]{};
\node[red] (55') at (90+360/5:1.7) {$b'$};

\node[] (1') at (90+2*360/5:1.3) {} ;
\draw[] (1') node[sommet]{};
\node[red] (11') at (90+2*360/5:1.7) {$e'$};

\node[] (3') at (90+3*360/5:1.3) {} ;
\draw[] (3') node[sommet]{};
\node[red] (33') at (90+3*360/5:1.7) {$d'$};

\node[] (7') at (90+4*360/5:1.3) {} ;
\draw[] (7') node[sommet]{};
\node[red] (77') at (90+4*360/5:1.7) {$c'$};

\draw (1'.center) -- (3'.center) ;
\draw (1'.center) -- (5'.center) ;
\draw (3'.center) -- (7'.center) ;

\end{scope}

\draw (5.center) -- (5'.center) ;
\draw (5.center) -- (7'.center) ;
\draw (7.center) -- (5'.center) ;
\draw (7.center) -- (7'.center) ;

\end{scope}

\end{tikzpicture}
\end{center}
\vspace{-0.2cm}
\caption{From left to right: the chord diagram $D_1$ of the house graph $H$, the chord diagram $D_2$ of the $C_5$ and $D_1(a)\odot D_2(a')=D_1(a_1,a_2)D_2(a'_1,a'_2)D_1(a_2,a_1)D_2(a'_2,a'_1)$, the chord diagram of the graph $G=(H,a)\otimes (C_5,a')$ (depicted on the right). Dotted chords represent the vertices (respectively $a$ and $a'$) on which the join is performed. In $G$, vertices $b$ and $c$ (the neighbors of $a$ in the House graph) are both adjacent to vertices $b'$ and $c'$ (the neighbors of $a'$ in the $C_5$).
\label{fig_node_join} 
}
\end{figure}

\begin{lemma}[{\cite[Lemma 3.3]{GioanPTC14Circle}}]
\label{lem_consecutive_join_diagram}
Let $D$ and $D'$ be chord diagrams on the sets $V$ and $V'$ of chords, respectively. Let $S\subset V$ and $S'\subset V'$ be consecutive sets of chords in their respective chord diagrams such that $1<|S|<|V|$ and $1<|S'|<|V'|$. If $x$ and $x'$ are bookends of $S$ and $S'$, respectively, then $X=(S\setminus\{x\})\cup (S'\setminus\{x'\})$ is consecutive in (at least) one of the following chord diagrams:
$$ (D,x)\odot (D',x'),~~~~ (D',x')\odot (D,x),~~~~ (D,x)\odot (D'^r,x'),~~~~ (D'^r,x')\odot (D,x).$$
Moreover, the bookends of $X$ are those of $S$ and $S'$ distinct from $x$ and $x'$.
\end{lemma}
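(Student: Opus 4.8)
The plan is to ``unfold'' each chord diagram along its bookend chord, substitute the resulting forms into the definition of the circle-join, and observe that in exactly one of the four listed diagrams the portions of the two witnessing factors other than the endpoints of $x$ and $x'$ become adjacent.

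First I would record how a consecutive set sits next to a bookend. Fix a factor $F$ of $D$ witnessing that $S$ is consecutive and having $x$ as a bookend, and let $z$ be the other bookend of $F$; since $1<|S|$, the chord $z$ is distinct from $x$. Label the endpoints of $x$ so that $x_1$ is the one forming an end of the factor $F$, and set $F_1=F\setminus\{x_1\}$; then $F_1$ is a nonempty factor that contains exactly one endpoint of each chord of $S\setminus\{x\}$ and no other endpoint, and whose end opposite to $x_1$ is an endpoint of $z$. As $x_2\notin F$, after rotating $D$ to begin with $x_1$ we are in one of two situations: \emph{case}~$\mathrm{P}$, where $D\sim x_1\,F_1\,G_1\,x_2\,G_2$, so that $D(x_1,x_2)=F_1G_1$ and $D(x_2,x_1)=G_2$; or \emph{case}~$\mathrm{S}$, where $D\sim x_1\,G_1\,x_2\,G_2\,F_1$, so that $D(x_1,x_2)=G_1$ and $D(x_2,x_1)=G_2F_1$. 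The same analysis applied to $D',S',x'$ yields a nonempty factor $F_2$, the other bookend $z'$, and analogous \emph{cases}~$\mathrm{P}'$ and~$\mathrm{S}'$ for~$D'$.

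Next I would cut down the case analysis using the reversal of $D'$. Passing from $D'$ to $D'^r$ replaces $F_2$ by its reversal and exchanges cases $\mathrm{P}'$ and $\mathrm{S}'$ (the endpoint of $x'$ moves to the other end of the witnessing factor), while it exchanges the pair of candidate diagrams $\{(D,x)\odot(D',x'),\ (D',x')\odot(D,x)\}$ with the pair $\{(D,x)\odot(D'^r,x'),\ (D'^r,x')\odot(D,x)\}$. Thus, replacing $D'$ by $D'^r$ if necessary (and searching among the last two candidate diagrams instead), we may assume that $D$ and $D'$ lie in opposite cases, i.e., $(\mathrm{P},\mathrm{S}')$ or $(\mathrm{S},\mathrm{P}')$, and it then suffices to prove that $X$ is consecutive in $(D,x)\odot(D',x')$ or in $(D',x')\odot(D,x)$, respectively.

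Finally I would substitute and read off the witnessing factor. In case $(\mathrm{P},\mathrm{S}')$ we have $D\sim x_1F_1G_1x_2G_2$ and $D'\sim x'_1G_1'x'_2G_2'F_2$, so by definition $(D,x)\odot(D',x')\sim F_1G_1\cdot G_1'\cdot G_2\cdot G_2'F_2$; in this circular word $F_2$ is, cyclically, immediately followed by $F_1$, hence $F_2F_1$ is a factor of it. Since the chord sets $V$ and $V'$ are disjoint, $F_2F_1$ has length $(|S'|-1)+(|S|-1)=|X|$ and contains exactly one endpoint of each chord of $X$ and no other endpoint; therefore $X$ is consecutive, and its bookends are the first letter of $F_2$, an endpoint of $z'$, and the last letter of $F_1$, an endpoint of $z$ --- exactly the bookends of $S'$ and $S$ other than $x'$ and $x$. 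Case $(\mathrm{S},\mathrm{P}')$ is symmetric: from $D\sim x_1G_1x_2G_2F_1$ and $D'\sim x'_1F_2G_1'x'_2G_2'$ one gets $(D',x')\odot(D,x)\sim F_2G_1'\cdot G_1\cdot G_2'\cdot G_2F_1$, in which $F_1F_2$ is a factor whose bookends are an endpoint of $z$ followed by an endpoint of $z'$, and the same counting concludes.

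The step I expect to be the main obstacle is the bookkeeping in the first two paragraphs: correctly pinning down which endpoint of $x$ (resp.\ $x'$) abuts $F_1$ (resp.\ $F_2$) and on which side, so that the arcs $D(x_1,x_2)$ and $D(x_2,x_1)$ fall in the right slots of the (non-commutative) circle-join and the two inner factors end up adjacent rather than separated by one of the blocks $G_i$, $G_i'$; once the factor witnessing consecutiveness of $X$ is located, reading off its bookends is immediate. Small degenerate configurations, such as $|S|=2$ or some $G_i$ being empty, require no separate treatment, since $F_1$ and $F_2$ stay nonempty and empty blocks do not affect adjacency.
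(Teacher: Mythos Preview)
The paper does not prove this lemma; it is quoted verbatim from the cited reference \cite[Lemma 3.3]{GioanPTC14Circle}, so there is no in-paper proof to compare against. That said, your argument is correct and is exactly the kind of direct verification one expects: normalize each diagram so that the witnessing factor abuts the designated endpoint of the bookend chord, reduce the four configurations to two via the reversal symmetry (which swaps cases $\mathrm P'/\mathrm S'$ and simultaneously swaps the two pairs of candidate circle-joins), and then substitute into the defining formula $(D,x)\odot(D',x')\sim D(x_1,x_2)D'(x'_1,x'_2)D(x_2,x_1)D'(x'_2,x'_1)$ to see the two inner factors $F_1,F_2$ become cyclically adjacent. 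Your identification of the bookends of $F_2F_1$ (resp.\ $F_1F_2$) with $z',z$ (resp.\ $z,z'$) is correct, and the disjointness of $V$ and $V'$ is indeed implicit in the definition of the circle-join, so the length count goes through.
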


A \emph{circle graph} $G=(V,E)$ is the intersection graph of a chord diagram $D$ on a set $\chi$ of chords of a circle. In other words, there exists a bijection $\rho:V\rightarrow \chi$ such that two vertices $x,y\in V$ are adjacent in $G$ if and only if the  chords $\rho(x)$ and $\rho(y)$ intersect in~$D$, that is if and only $D(x_1,x_2)$ contains one endpoint among $y_1$ and $y_2$ and $D(x_2,x_1)$ the other. We say that $D$ \emph{represents} or \emph{encodes} $G$. We observe that, in general, a circle graph is encoded by many different chord diagrams. For example, cliques and stars are circle graphs that are represented by many chord diagrams: every clique $G$ on vertex set $V$ is represented by $D\sim AA^r$ where $A$ is an arbitrary permutation of $V$; every star $G$ with center vertex $x$ is represented by $D\sim xAxA$ where $A$ is an arbitrary permutation of $V\setminus\{x\}$. 

\begin{observation} \label{obs_join_circle}
Let $G$ and $G'$ be two circle graphs represented by chord diagrams $D$ and $D'$,  respectively. Then, for any vertex $x$ of $G$ and any vertex $x'$ of $G'$, the chord diagrams $(D,x)\odot (D',x')$ and $(D',x')\odot (D,x)$ represent the circle graph $H=(G,x)\otimes (G',x')$.
\end{observation}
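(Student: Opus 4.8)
The plan is a direct verification that the intersection graph of the chord diagram $\Delta:=(D,x)\odot(D',x')$ is exactly $H$. The statement for $(D',x')\odot(D,x)$ will then follow for free: the graph-join $\otimes$ is commutative in the sense that $(G,x)\otimes(G',x')=(G',x')\otimes(G,x)$, since the adjacency condition "$yz\in E$, or $yz\in E'$, or ($y\in N_G(x)$ and $z\in N_{G'}(x')$)" is symmetric under swapping the two factors; hence applying the first part with the roles of $(D,x)$ and $(D',x')$ exchanged shows $(D',x')\odot(D,x)$ represents $(G',x')\otimes(G,x)=H$. Note also that $\Delta$ is a chord diagram on $(\chi\cup\chi')\setminus\{x,x'\}$ (already recorded in the text just after the definition of $\odot$), so once its intersection graph is identified with $H$ we have in particular that $H$ is a circle graph.

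First I would record the elementary locality fact: two distinct chords $y,z$ of a chord diagram intersect if and only if they intersect in the double-occurrence word obtained by deleting all endpoints other than the four endpoints of $y$ and $z$; equivalently, adjacency of $y$ and $z$ depends only on the cyclic order of those four endpoints. In particular, deleting a chord $c$ from a chord diagram of $G$ yields a chord diagram of $G-c$. Then I would split into cases on where the two vertices $y,z$ of $H$ lie. If both $y,z\in\chi\setminus\{x\}$, the subword of $\Delta$ on the endpoints of $\chi\setminus\{x\}$ is $D(x_1,x_2)D(x_2,x_1)$, which is $D$ with the two occurrences of $x$ deleted; by the locality fact $y,z$ are adjacent in $\Delta$ iff they are adjacent in $D$, i.e.\ iff $yz\in E$, matching the join condition since neither endpoint lies in $\chi'$. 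Symmetrically, if both $y,z\in\chi'\setminus\{x'\}$, the relevant subword of $\Delta$ is $D'(x'_1,x'_2)D'(x'_2,x'_1)$, i.e.\ $D'$ with $x'$ deleted, so adjacency holds iff $yz\in E'$.

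The remaining case $y\in\chi\setminus\{x\}$, $z\in\chi'\setminus\{x'\}$ is the crux and I would argue from the two "blocks" $D(x_1,x_2)$ and $D(x_2,x_1)$ of $\Delta$. If both endpoints of $y$ lie in the same block, then $y$ does not separate $x_1$ from $x_2$ in $D$, so $y\notin N_G(x)$; moreover that block is a contiguous arc of the circular word $\Delta$ containing no endpoint of any chord of $\chi'$, so both endpoints of $z$ lie on one side of $y$ and $y,z$ are non-adjacent in $\Delta$, matching the join condition. If the two endpoints of $y$ lie in different blocks, then $y$ separates $x_1$ from $x_2$ in $D$, so $y\in N_G(x)$; in $\Delta$ the two arcs cut out by $y$ then contain, respectively, the whole block $D'(x'_1,x'_2)$ in one arc and the whole block $D'(x'_2,x'_1)$ in the other. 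Since every endpoint of a chord of $\chi'$ sits inside one of these two blocks, $z$ has exactly one endpoint in each arc of $y$ — i.e.\ $y$ and $z$ cross in $\Delta$ — if and only if $z$ has one endpoint in $D'(x'_1,x'_2)$ and the other in $D'(x'_2,x'_1)$, i.e.\ iff $z$ separates $x'_1$ from $x'_2$ in $D'$, i.e.\ iff $z\in N_{G'}(x')$. Hence $y,z$ are adjacent in $\Delta$ iff $y\in N_G(x)$ and $z\in N_{G'}(x')$, again matching the join condition. The three cases are exhaustive, so $\Delta$ represents $H$.

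The only real obstacle is bookkeeping in the last case: one must keep straight which contiguous arcs of $\Delta$ contain the two $D'$-blocks and avoid an off-by-one with the removed endpoints $x_1,x_2,x_1',x_2'$. Everything else is immediate from the definition of $\odot$ and the observation that chord intersection is determined locally by the cyclic order of the relevant endpoints.
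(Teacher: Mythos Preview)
Your proof is correct. The paper states this as an \emph{observation} and gives no proof at all, so there is nothing to compare against; your direct case-by-case verification via the locality of chord intersection is exactly the natural argument one would write out if pressed, and it goes through without issue.
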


However, we have the following important property, announced in \cite{GaborHS89Recognizing}, partially proved in \cite{Bouchet87Reducing}, formally proved in \cite{Courcelle08Circle} (an alternative proof was given in \cite{GioanPTC14Circle}):

\begin{theorem}[\cite{GaborHS89Recognizing,Bouchet87Reducing,Courcelle08Circle,GioanPTC14Circle}]
\label{th_unique_prime}
A circle graph $G=(V,E)$ is a prime graph if and only if it has a unique (up to reversal) chord diagram.
\end{theorem}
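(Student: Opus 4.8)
I would prove the two implications separately. The direction ``$G$ not prime $\Rightarrow$ $G$ has more than one chord diagram'' is the easier one and I would deduce it directly from a split together with the circle-join. The direction ``$G$ prime $\Rightarrow$ $G$ has a unique chord diagram up to reversal'' is the crux; I would prove it by induction on $|V|$, invoking the first direction inside the inductive step.

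\textbf{Non-prime graphs have at least two chord diagrams.} If $G$ is degenerate, i.e.\ a clique or a star on at least three vertices, it is immediate from the descriptions of the chord diagrams of cliques and stars recalled before \autoref{obs_join_circle} that permuting the non-center vertices produces at least two chord diagrams that are pairwise inequivalent up to rotation and reversal. Otherwise $G$ has a split $(A,B)$; taking the split given by a tree-edge of $\ST(G)$ we may assume $|A|,|B|\ge 2$, and since $G$ is connected the frontiers $A',B'$ are non-empty. Fixing $a^\star\in A'$ and $b^\star\in B'$, the contracted graphs satisfy $G_A\cong G[A\cup\{b^\star\}]$ and $G_B\cong G[B\cup\{a^\star\}]$, hence $G_A$ and $G_B$ are induced subgraphs of $G$ and in particular circle graphs; let $D_A$ and $D_B$ be chord diagrams for them. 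By \autoref{obs_join_circle}, each of the diagrams $(D_A,x_A)\odot(D_B,x_B)$, $(D_B,x_B)\odot(D_A,x_A)$, and the two variants obtained by reversing $D_A$ or $D_B$, represents $G$. A short computation (writing $D_A\sim x_1 A_1 x_2 A_2$ and $D_B\sim x_1'B_1x_2'B_2$, so that $(D_A,x_A)\odot(D_B,x_B)$ becomes the cyclic word $A_1B_1A_2B_2$ after contracting the markers, and comparing the above diagrams as cyclic words up to reversal) shows that at least two of them are inequivalent; the only coincidences that can occur force $A_1=A_2$ or $B_1=B_2$ as words, i.e.\ force $G_A$ or $G_B$ to be a clique or a star on at least three vertices, and such a factor already has many chord diagrams, each of which yields a distinct chord diagram of $G$ by \autoref{obs_join_circle}. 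This case analysis is bookkeeping-heavy but elementary.

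\textbf{Prime graphs have a unique chord diagram.} For $|V|\le 5$ there are only finitely many prime circle graphs (e.g.\ $C_5$, the house, the gem, \dots), and one verifies directly that each of them has a unique chord diagram up to reversal; for $|V|\le 2$ this is trivial. For the inductive step, let $G$ be a prime circle graph with $|V|\ge 6$. By Bouchet's reduction theorem for prime (circle) graphs there is a graph $G'$ obtained from $G$ by a sequence of local complementations, together with a vertex $v$ of $G'$, such that $G'-v$ is a prime circle graph on $|V|-1\ge 5$ vertices. Local complementation at a vertex $w$ corresponds, on the level of chord diagrams, to reversing one of the two arcs into which $w$'s chord splits the circle, and this is a bijection between the chord diagrams of a graph and those of its local complement; hence it suffices to show that $G'$ has a unique chord diagram. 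By the induction hypothesis $G'-v$ has a unique chord diagram $D_0$ up to reversal. Now let $D$ be any chord diagram of $G'$: removing the two endpoints of $v$ from $D$ yields a chord diagram of $G'-v$, so, after possibly replacing $D$ by its reversal, we may assume it equals $D_0$; thus $D$ is obtained from $D_0$ by inserting the two endpoints of $v$ into two gaps of $D_0$ so that $v$ crosses exactly $N_{G'}(v)$. If two such insertions produced chord diagrams of $G'$ that were inequivalent up to rotation and reversal, then $G'$ would have at least two chord diagrams, contradicting the first direction applied to the prime graph $G'$. Hence all valid insertions give the same chord diagram up to reversal, so $G'$, and therefore $G$, has a unique chord diagram up to reversal.

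\textbf{Where the difficulty lies.} The genuinely hard ingredient is Bouchet's vertex-minor reduction guaranteeing a prime circle-graph vertex-minor on one fewer vertex; proving it from scratch would be the main work, and this is precisely why the statement is credited to several papers. A secondary point requiring care is the description of local complementation as ``reverse an arc'' on chord diagrams and the verification that this is a bijection; the exceptional cases in the first direction are tedious but not conceptually hard. I note that \autoref{lem_consecutive_join_diagram} is not needed for this structural statement—it is tailored to the recognition algorithm.
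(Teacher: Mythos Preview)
The paper does not give its own proof of this theorem; it merely cites it from the literature (Gabor et al., Bouchet, Courcelle, and Gioan et al.), so there is no in-paper argument to compare your proposal against.

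That said, your inductive step for the direction ``prime $\Rightarrow$ unique'' contains a genuine circularity. You write that if two distinct insertions of $v$ into $D_0$ yielded inequivalent chord diagrams of $G'$, this would ``contradict the first direction applied to the prime graph $G'$.'' But the first direction you proved is ``not prime $\Rightarrow$ at least two diagrams,'' whose contrapositive is ``unique diagram $\Rightarrow$ prime.'' Neither statement forbids a \emph{prime} graph from having several chord diagrams; that is precisely the implication you are in the process of establishing, and since $G'$ has the same number of vertices as $G$, the inductive hypothesis does not apply to $G'$ either. What is actually needed at this point is a direct argument that if the chord $v$ can be inserted into the (unique) diagram $D_0$ of the prime graph $G'-v$ in two essentially different positions realising the same neighbourhood $N_{G'}(v)$, then $G'$ admits a split. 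This is not a bookkeeping detail: it is the substantive content of the uniqueness proof, not a corollary of what you have already shown. Bouchet's reduction does hand you a prime $G'-v$, and local complementation does give a bijection on chord diagrams, but the step linking ``two valid insertions of $v$ into $D_0$'' to ``$G'$ is not prime'' is missing, and filling it is where the real work lies in the cited proofs.
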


Since degenerate graphs (cliques and stars) are circle graphs, a consequence of 
\autoref{obs_join_circle} is the following well-known characterization of circle graphs.

\begin{theorem}[\cite{GaborHS89Recognizing,Bouchet87Reducing,Naji85Reconnaissance}] \label{theo_circle_split}
A graph $G$ is a circle graph if and only if every prime node $u$ of $\ST(G)$ is labelled by a circle graph $G(u)$.
\end{theorem}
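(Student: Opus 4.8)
The plan is to prove the two implications separately, using the node-join and node-split operations on graph-labelled trees, which both leave the accessibility graph unchanged (this is immediate from the definition of accessibility; cf.\ \cite{GioanPTC14Split}), together with two elementary facts: every induced subgraph of a circle graph is a circle graph (delete the corresponding chords from a chord diagram), and cliques and stars are circle graphs. The case $|V|\le 2$ is trivial, so I may assume $\ST(G)$ has at least one internal node, and I note that splitting a connected graph yields connected quotients, so all label graphs encountered below stay connected.

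For ``$\Leftarrow$'', I would first observe that if every prime node of $\ST(G)$ carries a circle graph then \emph{every} internal node does, since the degenerate labels are cliques or stars. Then I would induct on the number $k$ of internal nodes of a \GLT{} $(T,\F)$ with $\Gr(T,\F)=G$ all of whose internal labels are circle graphs. If $k=1$, the unique internal node $u$ is adjacent to all leaves and unfolding the accessibility relation gives $\Gr(T,\F)\cong G(u)$, so $G$ is a circle graph. If $k\ge 2$, two internal nodes are adjacent (a degree-$1$ node cannot lie on a path between two other nodes, so the internal nodes induce a subtree); node-joining them produces a node with label $(G(u),q)\otimes(G(u'),q')$, a circle graph by \autoref{obs_join_circle}, and the resulting \GLT{} has the same accessibility graph, all internal labels circle, and $k-1$ internal nodes, so the induction hypothesis applies. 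Instantiating $(T,\F)=\ST(G)$ then gives that $G=\Gr(\ST(G))$ is a circle graph; this is precisely the ``consequence of \autoref{obs_join_circle}'' announced before the statement.

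For ``$\Rightarrow$'', the key step I would isolate is a lemma: if $(A,B)$ is a split of a connected graph $H$ with frontiers $A'$ and $B'$, then the quotient $H_A$ is isomorphic to the induced subgraph $H[A\cup\{b\}]$ for any $b\in B'$ (and $B'\ne\emptyset$ because $H$ is connected). Indeed, the isomorphism fixes $A$ and sends $x_A$ to $b$: in $H[A\cup\{b\}]$ the neighbourhood of $b$ is exactly $A'$ since $(A,B)$ is a split, matching $N_{H_A}(x_A)$, while the remainder of both graphs equals $H[A]$. Symmetrically $H_B\cong H[B\cup\{a\}]$ for $a\in A'$. Hence, if $H$ is a circle graph, so are $H_A$ and $H_B$. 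Now I would invoke that $\ST(G)$ is obtained from the one-node \GLT{} carrying $G$ by a finite sequence of node-splits, each along a split of a current label \cite{GioanPTC14Split}. Since a node-split replaces a label $G(u)$ by the quotients $(G(u))_A$ and $(G(u))_B$, the lemma shows this sequence preserves the property ``every label is a circle graph''; as the initial label $G$ is a circle graph by hypothesis, every label of $\ST(G)$ — in particular every prime one — is a circle graph. (Equivalently, the lemma iterated along this sequence shows every $G(u)$ is isomorphic to an induced subgraph of $G$.)

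I expect the ``$\Rightarrow$'' direction to be where the real care is needed: one must be precise that $\ST(G)$ is indeed reachable from the one-node \GLT{} by node-splits, so that the contracted vertex $x_A$ appearing in a quotient can be realized by a genuine vertex of $G$; I would take this from the split-tree construction of \cite{GioanPTC14Split}. The remaining ingredients — invariance of $\Gr$ under node-join/node-split and the base-case identity $\Gr(T,\F)\cong G(u)$ for a tree with a single internal node — are routine verifications.
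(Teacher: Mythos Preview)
Your proof is correct. The paper does not actually give its own proof of this theorem: it is stated with citations to \cite{GaborHS89Recognizing,Bouchet87Reducing,Naji85Reconnaissance}, preceded only by the one-line remark that, since degenerate graphs are circle graphs, the statement is a consequence of \autoref{obs_join_circle}. Your ``$\Leftarrow$'' argument is precisely this hint fleshed out as an induction on the number of internal nodes via node-joins. For ``$\Rightarrow$'' the paper offers nothing beyond the citations; your lemma that the quotient $H_A$ is isomorphic to the induced subgraph $H[A\cup\{b\}]$ for any $b$ in the opposite frontier $B'$ is the standard route (and is in fact how the paper itself defines the label after a node-split, modulo a typo there), and your conclusion---node-splits preserve the property ``every label is a circle graph'', so the sequence of node-splits producing $\ST(G)$ from the one-node \GLT{} does too---is sound.
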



\subsection{Consistent symmetric cycles}

We describe the \emph{consistent symmetric cycle} data-structure (\CSC{}) introduced in~\cite{GioanPTC14Circle}  that represents chord diagrams and allows to efficiently perform the following operations: consecutive test (see \autoref{lem_consecutive_test}), consecutive preserving join (see~\autoref{lem_consecutive_join}) and vertex insertion (see~\autoref{lem_chord_insertion}).
Let $D$ be a chord diagram on the set $\chi$ of chords. The \emph{consistent symmetric cycle} (\CSC{}), denoted $\C(D)$, representing $D$ is a directed graph whose vertices are the chord endpoints of $\chi$, that is $\mu=\bigcup_{x\in\chi} \{x_1,x_2\}$. Every pair of consecutive endpoints in $D$ are linked by two symmetric arcs, hence forming a symmetric directed cycle. It follows that every endpoint has two out-neighbours, which we denote $+_{\C(D)}(\cdot)$ and $-_{\C(D)}(\cdot)$. For every chord $x$ we have the property $+_{\C(D)}(x_1)$ and $+_{\C(D)}(x_2)$ belong to one of the two connected components of $\C(D)\setminus\{x_1,x_2\}$ and $-_{\C(D)}(x_1)$ and $-_{\C(D)}(x_2)$ to the other.  To complete $\C(D)$, the two endpoints of every chord are linked with symmetric arcs (see~\autoref{fig_CSC}).

\begin{figure}[ht]
\begin{center}
\bigskip
\begin{tikzpicture}[thick,scale=0.6]
\tikzstyle{sommet}=[circle, draw, fill=black, inner sep=0pt, minimum width=4pt]
\tikzstyle{newsommet}=[circle, draw, fill=green!50!black, inner sep=0pt, minimum width=4pt]

\begin{scope}[xshift=-5cm,yshift=0cm,rotate=180]

\node[] (5) at (90+360/5:1.5) {} ;
\draw[] (5) node[sommet]{};
\node[] (55) at (90+360/5:1.9) {$b$};

\node[] (1) at (90+2*360/5:1.5) {} ;
\draw[] (1) node[sommet]{};
\node[] (11) at (90+2*360/5:1.9) {$e$};

\node[] (3) at (90+3*360/5:1.5) {} ;
\draw[] (3) node[sommet]{};
\node[] (33) at (90+3*360/5:1.9) {$d$};

\node[] (7) at (90+4*360/5:1.5) {} ;
\draw[] (7) node[sommet]{};
\node[] (77) at (90+4*360/5:1.9) {$c$};

\node[] (9) at (90:1.5) {} ;
\draw[] (9) node[sommet]{};
\node[] (99) at (90:1.9) {$a$};

\draw (1.center) -- (3.center) ;
\draw (1.center) -- (5.center) ;
\draw (3.center) -- (7.center) ;
\draw (5.center) -- (7.center) ;
\draw (5.center) -- (9.center) ;
\draw (7.center) -- (9.center) ;

\end{scope}

\begin{scope}[xshift=1.5cm,yshift=0cm,rotate=180]

\draw[dotted] (0,0) circle [radius=2.2cm] ;

\node[] (a1) at (90-18+2*36:2.6) {$a_1$} ;
\node[] (a2) at (90-18+9*36:2.6) {$a_2$} ;
\node[] (aa1) at (90-18+2*36:2.2) {};
\node[] (aa2) at (90-18+9*36:2.2) {};
\draw[-] (aa1.center) -- (aa2.center) ;

\node[] (b1) at (90-18+36:2.6) {$b_1$} ;
\node[] (b2) at (90-18+7*36:2.6) {$b_2$} ;
\node[] (bb1) at (90-18+36:2.2) {};
\node[] (bb2) at (90-18+7*36:2.2) {};
\draw[-] (bb1.center) -- (bb2.center) ;

\node[] (c1) at (90-18:2.6) {$c_1$} ;
\node[] (c2) at (90-18+4*36:2.6) {$c_2$} ;
\node[] (cc1) at (90-18:2.2) {};
\node[] (cc2) at (90-18+4*36:2.2) {};
\draw[-] (cc1.center) -- (cc2.center) ;

\node[] (d1) at (90-18+3*36:2.6) {$d_1$} ;
\node[] (d2) at (90-18+6*36:2.6) {$d_2$} ;
\node[] (dd1) at (90-18+3*36:2.2) {};
\node[] (dd2) at (90-18+6*36:2.2) {};
\draw[-] (dd1.center) -- (dd2.center) ;

\node[] (e1) at (90-18+5*36:2.6) {$e_1$} ;
\node[] (e2) at (90-18+8*36:2.6) {$e_2$} ;
\node[] (ee1) at (90-18+5*36:2.2) {};
\node[] (ee2) at (90-18+8*36:2.2) {};
\draw[-] (ee1.center) -- (ee2.center) ;

\end{scope}

\begin{scope}[scale=1.1,xshift=8cm,yshift=0cm,rotate=180]

\node[] (a1) at (90-18+2*36:2.6) {$a_1$} ;
\node[] (a2) at (90-18+9*36:2.6) {$a_2$} ;
\node[] (aa1) at (90-18+2*36:2.2) {};
\node[] (aa2) at (90-18+9*36:2.2) {};
\draw[<->] (aa1.center) -- (aa2.center) ;

\node[] (b1) at (90-18+36:2.6) {$b_1$} ;
\node[] (b2) at (90-18+7*36:2.6) {$b_2$} ;
\node[] (bb1) at (90-18+36:2.2) {};
\node[] (bb2) at (90-18+7*36:2.2) {};
\draw[<->] (bb1.center) -- (bb2.center) ;

\node[] (c1) at (90-18:2.6) {$c_1$} ;
\node[] (c2) at (90-18+4*36:2.6) {$c_2$} ;
\node[] (cc1) at (90-18:2.2) {};
\node[] (cc2) at (90-18+4*36:2.2) {};
\draw[<->] (cc1.center) -- (cc2.center) ;

\node[] (d1) at (90-18+3*36:2.6) {$d_1$} ;
\node[] (d2) at (90-18+6*36:2.6) {$d_2$} ;
\node[] (dd1) at (90-18+3*36:2.2) {};
\node[] (dd2) at (90-18+6*36:2.2) {};
\draw[<->] (dd1.center) -- (dd2.center) ;

\node[] (e1) at (90-18+5*36:2.6) {$e_1$} ;
\node[] (e2) at (90-18+8*36:2.6) {$e_2$} ;
\node[] (ee1) at (90-18+5*36:2.2) {};
\node[] (ee2) at (90-18+8*36:2.2) {};
\draw[<->] (ee1.center) -- (ee2.center) ;

\node[rotate=0*36] (p1) at (90+0*36:2.6) {\tiny $+$} ;
\node[rotate=0*36] (m1) at (90+0*36:2.3) {\tiny$+$} ;
\node[rotate=1*36] (p2) at (90+1*36:2.6) {\tiny $-$} ;
\node[rotate=1*36] (m2) at (90+1*36:2.3) {\tiny$+$} ;
\node[rotate=2*36] (p3) at (90+2*36:2.6) {\tiny $-$} ;
\node[rotate=2*36] (m3) at (90+2*36:2.3) {\tiny$+$} ;
\node[rotate=3*36+180] (p4) at (90+3*36:2.6) {\tiny $+$} ;
\node[rotate=3*36+180] (m4) at (90+3*36:2.3) {\tiny$-$} ;
\node[rotate=4*36+180] (p5) at (90+4*36:2.6) {\tiny $-$} ;
\node[rotate=4*36+180] (m5) at (90+4*36:2.3) {\tiny$+$} ;
\node[rotate=5*36+180] (p6) at (90+5*36:2.6) {\tiny $-$} ;
\node[rotate=5*36+180] (m6) at (90+5*36:2.3) {\tiny$-$} ;
\node[rotate=6*36+180] (p7) at (90+6*36:2.6) {\tiny $-$} ;
\node[rotate=6*36+180] (m7) at (90+6*36:2.3) {\tiny$+$} ;
\node[rotate=7*36+180] (p8) at (90+7*36:2.6) {\tiny $-$} ;
\node[rotate=7*36+180] (m8) at (90+7*36:2.3) {\tiny$+$} ;
\node[rotate=8*36] (p9) at (90+8*36:2.6) {\tiny $-$} ;
\node[rotate=8*36] (m9) at (90+8*36:2.3) {\tiny$+$} ;
\node[rotate=9*36] (p10) at (90+9*36:2.6) {\tiny $-$} ;
\node[rotate=9*36] (m10) at (90+9*36:2.3) {\tiny$+$} ;

\draw[dashed,->] (a1) to [bend right=30] (b1);
\draw[dashed,->] (a2) to [bend left=30] (c1);
\draw[dotted,->] (a1) to [bend right=30] (d1);
\draw[dotted,->] (a2) to [bend left=30] (e2);

\draw[dashed,->] (b1) to [bend right=30] (c1);
\draw[dashed,->] (b2) to [bend left=30] (e2);
\draw[dotted,->] (b1) to [bend right=30] (a1);
\draw[dotted,->] (b2) to [bend left=30] (d2);

\draw[dashed,->] (c1) to [bend right=30] (b1);
\draw[dashed,->] (c2) to [bend left=30] (d1);
\draw[dotted,->] (c1) to [bend left=30] (a2);
\draw[dotted,->] (c2) to [bend right=30] (e1);

\draw[dashed,->] (d1) to [bend right=30] (a1);
\draw[dashed,->] (d2) to [bend left=30] (b2);
\draw[dotted,->] (d1) to [bend left=30] (c2);
\draw[dotted,->] (d2) to [bend left=30] (e1);

\draw[dashed,->] (e1) to [bend right=30] (c2);
\draw[dashed,->] (e2) to [bend left=30] (a2);
\draw[dotted,->] (e1) to [bend left=30] (d2);
\draw[dotted,->] (e2) to [bend left=30] (b2);
\end{scope}

\end{tikzpicture}
\end{center}
\vspace{-0.2cm}
\caption{The house graph on the left. On the right, the \CSC{} $\C(D)$ of the (unique) chord diagram $D$ of the house graph (drawn in the middle). The dashed arc leaving an endpoint $x$ represents $+_{\C(D)}(x)$ and the dotted arc represents $-_{\C(D)}(x)$. For example, $-_{\C(D)}(a)=e$ and $+_{\C(D)}(a)=c$.\label{fig_CSC} 
}
\end{figure}
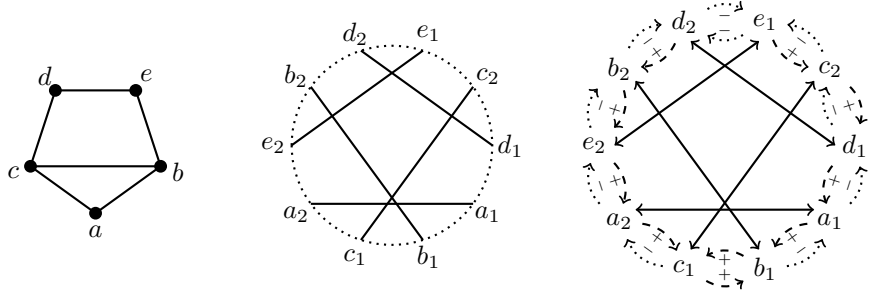

Let $\C(D)$ be a \CSC{} on the set $\chi$ of chords. Let $x_1$ and $x_2$ be the two endpoints of a chord $x$. We let $+_{\C(D)}(x_1,x_2)$ denote the factor of $\C(D)$ obtained by searching $\C(D)$ from $+_{\C(D)}(x_1)$ to $+_{\C(D)}(x_2)$. The factor $-_{\C(D)}(x_1,x_2)$ is defined similarly. In the example of~\autoref{fig_CSC}, we have $+_{\C(D)}(e_1,e_2)=c_2d_1a_1b_1c_1a_2$ and $-_{\C(D)}(e_1,e_2)=d_2b_2$.
We observe that $+_{\C(D)}(x_1,x_2)=+_{\C(D)}(x_2,x_1)^r$ and that either $+_{\C(D)}(x_1,x_2)=D(x_1,x_2)$ or $+_{\C(D)}(x_1,x_2)=D^r(x_1,x_2)$.

The following three lemmas state the complexity of basic operations on CSC's that the algorithm has to perform. Their proofs are straightforward from the data-structure description of CSC's. They have been formally discussed in \cite{GioanPTC14Circle} as Lemmas 5.5--5.7.

\begin{lemma}[Consecutivity test] \label{lem_consecutive_test}
Let $D$ be a chord diagram of a circle graph $G=(V,E)$. Given a subset $S$ of vertices of $G$ and $\C(D)$ the \CSC{} representing $D$, we can test in $O(|S|)$-time if $S$ is consecutive in $D$.
\end{lemma}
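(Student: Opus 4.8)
The plan is to first recast consecutivity as a purely local, sliding-window condition on the endpoints of the chords in $S$, and then to realize that condition on the \CSC{} in time $O(|S|)$.

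\emph{Reformulation.} Call an endpoint of $D$ an \emph{$S$-endpoint} if its chord belongs to $S$; there are exactly $2|S|$ of them. I would first prove that $S$ is consecutive in $D$ if and only if $D$ has a factor $W$ consisting of exactly $|S|$ cyclically consecutive $S$-endpoints that belong to $|S|$ pairwise distinct chords. For the forward direction, if $F$ witnesses consecutivity then $F$ contains no endpoint of any chord outside $S$ and exactly one endpoint of each chord of $S$, so $F$ consists of exactly $|S|$ $S$-endpoints coming from $|S|$ distinct chords, and $W := F$ has the claimed form. For the backward direction, such a $W$ satisfies $|W|=|S|$; the $|S|$ chords occurring in $W$ are distinct and all lie in $S$, hence they are exactly the chords of $S$, so $|c\cap W|=1$ for every $c\in S$; and every chord $c'\notin S$ has both its endpoints outside the set of $S$-endpoints, so $c'\cap W=\emptyset$. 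Thus $W$ witnesses consecutivity, and moreover (being contiguous and made only of $S$-endpoints) any such $W$ is contained in a maximal block of consecutive $S$-endpoints of $D$; so it suffices to examine these maximal blocks.

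\emph{Algorithm on the \CSC{}.} Using $S$, I would first mark the $2|S|$ $S$-endpoints; an $O(|S|)$-time membership oracle is obtained by the standard time-stamping of the endpoint array, so that no $\Theta(n)$-time reset is needed. Using the two cyclic neighbours $+_{\C(D)}(\cdot)$ and $-_{\C(D)}(\cdot)$ stored by $\C(D)$, I would then enumerate the maximal blocks of consecutive $S$-endpoints: an $S$-endpoint begins a block iff one of its two cyclic neighbours is not an $S$-endpoint, and from such a start one walks along $S$-endpoints until the next cyclic neighbour is not an $S$-endpoint. Since every $S$-endpoint is visited only a constant number of times (marking visited endpoints during a walk), this costs $O(|S|)$ in total; the sole special case, $S=V$, in which there is no block boundary, is handled by treating $D$ as a single circular block. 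For each block of length $r$: if $r<|S|$ there is nothing to do; otherwise I slide a window of length $|S|$ over the block and test whether some position is free of repeated chords, which is the classical $O(r)$-time two-pointer (last-occurrence) computation, again using time-stamping on the chord array to stay within $O(|S|)$ overall. By the reformulation, $S$ is consecutive in $D$ iff some window of some block contains no repeated chord, and since $\sum_R r = 2|S|$ the whole test runs in $O(|S|)$ time. The bookends of $S$ are then the chords of the two extreme endpoints of the witnessing window.

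I expect the only genuinely delicate points to be bookkeeping rather than conceptual: keeping both the $S$-endpoint membership test and the per-block duplicate detection in time $O(|S|)$ instead of $O(n)$ (the time-stamping trick handles this), and correctly treating the degenerate case $S=V$, where $D$ itself is one cyclic block of $S$-endpoints. The correctness of the reformulation is immediate from the definition of consecutivity, and once it is in place the \CSC{} supports all required navigation in constant time per step.
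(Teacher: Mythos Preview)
Your argument is correct. The paper does not actually supply a proof of this lemma: it states that the proofs of this and the two companion lemmas are ``straightforward from the data-structure description of CSC's'' and refers the reader to Lemmas~5.5--5.7 of~\cite{GioanPTC14Circle}, so there is no in-paper proof to compare against. Your reformulation of consecutivity as the existence of a length-$|S|$ factor consisting of $S$-endpoints from pairwise distinct chords is exactly right, and the block enumeration followed by a sliding-window duplicate test realises the $O(|S|)$ bound since the maximal blocks of $S$-endpoints have total length~$2|S|$.

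Two minor implementation remarks. First, time-stamping is not strictly needed: only $O(|S|)$ endpoint and chord marks are ever set, so a final pass over~$S$ to clear them keeps the cost at~$O(|S|)$. Second, recall that in the \CSC{} the $+/-$ labels are determined per chord and do \emph{not} give a globally consistent orientation of the cycle; walking along a block therefore requires remembering the previous endpoint and always stepping to the other neighbour. This is a one-line detail and does not affect your analysis.
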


\begin{lemma}[Consecutivity preserving join] \label{lem_consecutive_join}
Let $D$ and $D'$ be chord diagrams of the circle graphs $G=(V,E)$ and $G'=(V',E')$, respectively, and let $S$ and $S'$ be consecutive chords in $D$ and $D'$, respectively. Let further $x$ and $y$ denote the bookends of $S$, and $x'$ and $y'$ the bookends of $S'$, and $\C(D)$ and $\C(D')$ the \CSC{}s representing $D$ and $D'$.  Then a \CSC{} representing a chord diagram of $(G,x)\otimes (G',x')$ such that $(S\setminus\{x\})\cup(S\setminus\{x'\})$ is consecutive and has bookends $y$ and $y'$ can be computed in $O(1)$ time.
\end{lemma}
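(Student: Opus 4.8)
The plan is to reduce the statement to \autoref{lem_consecutive_join_diagram} and then show that, given the \CSC{}s, one can (i) decide in constant time which of the four candidate diagrams of that lemma has the required consecutivity, and (ii) build the \CSC{} of that diagram in constant time. For the reduction, set $X := (S\setminus\{x\})\cup(S'\setminus\{x'\})$. Since $D'^r$ also represents $G'$ and, by \autoref{obs_join_circle}, the circle-join of two representing diagrams (in either order) represents their join, all four diagrams listed in \autoref{lem_consecutive_join_diagram} are chord diagrams of $H := (G,x)\otimes(G',x')$; and that lemma guarantees that $X$ is consecutive, with bookends $y$ and $y'$, in at least one of them. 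So it suffices to produce the \CSC{} of such a diagram, in $O(1)$ time, from $\C(D)$ and $\C(D')$. (There are only four candidates rather than eight because reversing the whole circular word leaves the \CSC{} unchanged, i.e.\ $\C(D)=\C(D^r)$.)

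The core is building the \CSC{} of a circle-join in $O(1)$ time. Deleting the two endpoints $x_1,x_2$ of $x$ from the symmetric cycle underlying $\C(D)$ breaks it into two directed paths; by the definition of the $+$/$-$ pointers these paths are exactly $+_{\C(D)}(x_1,x_2)$ and $-_{\C(D)}(x_1,x_2)$, and up to reversal they equal the factors $D(x_1,x_2)$ and $D(x_2,x_1)$. Likewise, cutting $x'$ out of $\C(D')$ yields the two paths corresponding to $D'(x'_1,x'_2)$ and $D'(x'_2,x'_1)$. The defining identity $(D,x)\odot(D',x')\sim D(x_1,x_2)\,D'(x'_1,x'_2)\,D(x_2,x_1)\,D'(x'_2,x'_1)$ then says that the \CSC{} of the join is obtained by splicing these four paths into one symmetric cycle in the indicated interleaved cyclic order: delete the constantly many cycle-arcs incident to $x_1,x_2,x'_1,x'_2$ together with the two chord-arc pairs of $x$ and of $x'$, and add four symmetric pairs of cycle-arcs gluing consecutive path-ends. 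No chord-arc of a surviving chord is touched, so the outcome is literally $\C(D'')$ for $D''=(D,x)\odot(D',x')$, and it is a valid \CSC{} because $D''$ is a genuine double-occurrence circular word over $(\chi\setminus\{x\})\cup(\chi'\setminus\{x'\})$. The other three candidates of \autoref{lem_consecutive_join_diagram} are built the same way: replacing $D'$ by $D'^r$ amounts to traversing the two $\C(D')$-paths in the opposite direction, and swapping the operands amounts to permuting the gluing pattern. All of this is $O(1)$ pointer surgery.

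It remains to pick the right candidate. Consecutivity of $S$ in $D$ with bookends $x,y$ means that $S\setminus\{x\}$ occupies a sub-path of one of the two cut-paths of $\C(D)$ whose near end is a cycle-neighbour of some occurrence $x_a$ of $x$ and whose far end is an occurrence of $y$; which cut-path, which occurrence $x_a$, and which occurrence of $y$ these are is local information read off from $\C(D)$ in constant time, and similarly from $\C(D')$ for $S'\setminus\{x'\}$. In each of the four candidate diagrams the two sub-paths carrying $S\setminus\{x\}$ and $S'\setminus\{x'\}$ land in prescribed positions, separated (or not) by exactly one of the four junctions of the splice; checking whether together they form a single contiguous factor with outer ends $y$ and $y'$ therefore reduces to an $O(1)$ inspection of that one junction. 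By \autoref{lem_consecutive_join_diagram} at least one candidate passes, and we output the \CSC{} of that one, constructed as above.

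The step I expect to be the main obstacle is precisely this last identification: \autoref{lem_consecutive_join_diagram} only asserts that \emph{some} candidate works, and we cannot afford the $O(|S|)$ consecutivity test of \autoref{lem_consecutive_test} to locate it. The resolution is that all the information needed is local orientation data at $x$ in $\C(D)$ and at $x'$ in $\C(D')$ — essentially on which side of $x$, and past which of its occurrences, the set $S$ lies — so the search collapses to a finite case distinction over whether $D'$ or $D'^r$ is used as the second operand and over the two orders of the join, carried out in constant time; there are no asymptotic subtleties beyond this bookkeeping.
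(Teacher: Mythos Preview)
Your argument is correct. The paper itself gives no proof here: it declares the three \CSC{} lemmas ``straightforward from the data-structure description'' and points to Lemmas~5.5--5.7 of~\cite{GioanPTC14Circle}. What you have written is precisely the intended argument---reduce to \autoref{lem_consecutive_join_diagram}, implement each of the four candidate circle-joins as $O(1)$ pointer surgery on the two \CSC{}s (delete the cycle-arcs at $x_1,x_2,x'_1,x'_2$ and the two chord-arc pairs, then splice the four resulting paths), and select the right candidate by a constant-size case analysis on the local orientation data at $x$ and $x'$.

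The one implicit assumption worth flagging is that the specific \emph{occurrences} of the bookends delimiting the consecutive factors are available, not just the chords $x,y,x',y'$. Your sentence ``which cut-path, which occurrence $x_a$, and which occurrence of $y$ these are is local information read off from $\C(D)$ in constant time'' glosses over this: from $\C(D)$ and the chord $x$ alone, deciding which of $x_1,x_2$ abuts $S\setminus\{x\}$ requires testing whether a cycle-neighbour lies in $S$, i.e.\ an $O(1)$ membership oracle. In the algorithm this is never an issue---the bookend endpoints are produced by the preceding consecutivity test (\autoref{lem_consecutive_test}) or are marked perfect/mixed---but as a standalone statement the lemma only hands you the chords, so this dependence should be made explicit.
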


\begin{lemma}[Chord insertion] \label{lem_chord_insertion}
Let $D$ be a chord diagram of a circle graph $G=(V,E)$ and let~$S$ be a subset of vertices of $G$ that is consecutive in $D$. Given $\C(D)$ the \CSC{} representing $D$ and the bookends $b$ and $b'$ of the factor of $D$ certifying that $S$ is consecutive, a \CSC{} of $G+x$ where the neighborhood of $x$ is $S$ can be computed in $O(1)$ time.
\end{lemma}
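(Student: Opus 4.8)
The plan is to exhibit an explicit chord diagram $D^+$ of $G+x$ with $N(x)=S$ that differs from $D$ only by a ``local'' edit, and then to observe that this edit amounts to a constant number of pointer updates in the \CSC{}. Let $F$ be the factor of $D$ with bookends $b$ and $b'$ that certifies that $S$ is consecutive, and write $D\sim FF'$, where $F$ holds exactly one endpoint of each chord of $S$ and no endpoint of any chord outside $S$, while $F'$ holds the remaining endpoint of each chord of $S$ together with both endpoints of every chord not in $S$. Inserting a new chord $x=\{x_1,x_2\}$ with $x_1$ placed immediately before the first letter of $F$ and $x_2$ immediately after the last letter of $F$ yields the circular word $D^+\sim x_1\,F\,x_2\,F'$ over $\chi\cup\{x\}$. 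First I would check that $D^+$ does the job: since $D^+(x_1,x_2)=F$, a chord $c\neq x$ crosses $x$ in $D^+$ iff $F$ contains exactly one occurrence of $c$, i.e.\ iff $c\in S$; and the cyclic order of the chords of $\chi$ is unchanged, so every other adjacency is preserved. Hence $D^+$ is a chord diagram of $G+x$ whose chord $x$ has neighbourhood $S$.

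Next I would translate the edit $D\mapsto D^+$ into the \CSC{}. Let $\beta$ be the endpoint of $b$ that is the first letter of $F$ and $\gamma$ its cyclic predecessor in $D$ (the last letter of $F'$); symmetrically let $\beta'$ be the endpoint of $b'$ that is the last letter of $F$ and $\gamma'$ its cyclic successor (the first letter of $F'$). Building $\C(D^+)$ then amounts to: adding two new vertices $x_1,x_2$; subdividing by $x_1$ the pair of symmetric cyclic arcs between $\gamma$ and $\beta$, and by $x_2$ the pair between $\beta'$ and $\gamma'$, where each half of a subdivided arc keeps the $\pm$-label the original arc carried at the endpoint it leaves; adding a pair of symmetric chord arcs between $x_1$ and $x_2$; and fixing the new orientations as $+_{\C(D^+)}(x_1)=\beta$, $-_{\C(D^+)}(x_1)=\gamma$, $+_{\C(D^+)}(x_2)=\beta'$, $-_{\C(D^+)}(x_2)=\gamma'$. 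That is a constant number of updates. I would then verify that the result is a legitimate \CSC{}: for $x$ itself, deleting $x_1$ and $x_2$ disconnects the symmetric cycle into the $F$-arc and the $F'$-arc, and by construction $+_{\C(D^+)}(x_1),+_{\C(D^+)}(x_2)$ lie in the former while $-_{\C(D^+)}(x_1),-_{\C(D^+)}(x_2)$ lie in the latter, as required; for any chord $c\neq x$, subdividing two cyclic arcs does not change, for the cut obtained by deleting $c_1$ and $c_2$, on which side each former cyclic neighbour of $c_1$ or $c_2$ falls, and the labels of the arcs leaving $c_1$ and $c_2$ are untouched, so the consistency property for $c$ is inherited from $\C(D)$. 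Together with the previous paragraph this gives the desired \CSC{}, computed in $O(1)$ time.

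The step that needs the most care is the $O(1)$ identification of the four endpoints $\beta,\gamma,\beta',\gamma'$ and the accompanying orientation bookkeeping: from a bookend \emph{chord} $b$ one must know which of its two endpoints lies in $F$ (equivalently, which cyclic neighbour lies outside $F$), and one must copy the $\pm$-labels onto the subdivided arcs in the correct direction. I would dispatch this by letting the consecutivity certificate of \autoref{lem_consecutive_test} hand over the bookend \emph{endpoints} (the first and last letters of $F$) rather than merely the bookend chords, which costs nothing extra there; then $\gamma$ and $\gamma'$ are simply the cyclic neighbours of $\beta$ and $\beta'$ that are not internal to $F$, and every pointer manipulation is immediate from the data-structure description. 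The remaining content — that $D^+$ realises $G+x$ with $N(x)=S$ and that $\C(D^+)$ stays consistent — is then routine, and the degenerate cases ($|S|=1$, where $F$ is a single letter and $b=b'$, and $|V|=1$) are covered verbatim by the same construction.
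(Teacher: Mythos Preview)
Your proposal is correct and is precisely the natural argument one expects here. The paper itself does not spell out a proof: it merely states that the lemma is ``straightforward from the data-structure description of CSC's'' and points to~\cite{GioanPTC14Circle} for a formal discussion, so there is no meaningful approach to compare against beyond the fact that your explicit construction $D^+\sim x_1\,F\,x_2\,F'$ and the corresponding constant-size pointer surgery on the symmetric cycle are exactly what ``straightforward'' is meant to convey.
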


\subsection{PC-trees}

The data structure we use to encode the split-tree $\ST(G)$ of a circle graph $G$ is based on PC-trees~\cite{HsuMcC03PCTrees,FinkPR23Experimental}. We call it the \emph{split PC-tree} of $G$ and it is denoted $\PC(G)$
(see \autoref{fig_PCtree} for an example). It stores all marker vertices of $\ST(G)$ and encodes the structure of $\ST(G)$ on top of them as follows.
We select an arbitrary leaf of $\ST(G)$ as the root of $\PC(G)$, which we denoteby  $r$. Let $u$ be a node of the split-tree $\ST(G)$. For each marker vertex $q\in V(u)$, we store an outgoing arc to its opposite $q'$, which is a marker vertex of $G(u')$ for some node $u'$ adjacent to $u$ in $\ST(G)$.
Observe that this way, each tree-edge $e$ of $\ST(G)$ corresponds to a pair of oppositely oriented arcs. We assume moreover that for an arc $a$ of $\PC(G)$, the opposite arc, denoted $\bar{a}$, can be accessed in $O(1)$-time. The way marker vertices of $V(u)$ are stored depends on the type of the node $u$ in $\ST(G)$.
\begin{itemize}
\item Suppose $u$ is degenerate. Then $u$ is represented in $\PC(G)$ by a node object $\node(u)$ that stores (i)  the type of~$u$, (ii) a (doubly-linked) list $\LL(u)$ of the marker vertices in~$V(u)$  (iii) a pointer to the node of $V(u)$ whose arc points towards the root~$r$, and (iv) in case~$u$ is a star, a pointer to the marker vertex that is the center of the star.  Further, each marker vertex of~$V(u)$ is equipped with a pointer to the node object $\node(u)$. 
\item Suppose $u$ is not degenerate. Then $\PC(G)$ does not store any node object for $u$. Instead, the vertices of $V(u)$ are stored in a \CSC{} $\C(u)$ representing a chord diagram of $G(u)$.
For each vertex $x\in V(u)$, one of the two endpoints of the corresponding chord is made incident to the symmetric pointers representing the tree-edge $e$ of which $x$ is the extremity.
Further, a flag is used to mark the vertex whose arc points towards the root~$r$ of $\ST(G)$.
\end{itemize}

\begin{figure}[h]
\begin{center}
\bigskip
\begin{tikzpicture}[thick,scale=0.6]
\tikzstyle{sommet}=[circle, draw, fill=black, inner sep=0pt, minimum width=4pt]
\tikzstyle{newsommet}=[circle, draw, fill=green!50!black, inner sep=0pt, minimum width=4pt]

\begin{scope}[xshift=-6cm,yshift=3cm]

\node[] (5) at (90+360/5:1.5) {} ;
\draw[] (5) node[sommet]{};
\node[] (55) at (90+360/5:1.9) {$5$};

\node[] (1) at (80+2*360/5:1.5) {} ;
\draw[] (1) node[sommet]{};
\node[] (11) at (80+2*360/5:1.9) {$1$};

\node[] (4) at (85+1.62*360/5:2.7) {} ;
\draw[] (4) node[sommet]{};
\node[above] (44) at (85+1.62*360/5:2.7) {$4$};

\node[] (2) at (65+3*360/5:1.5) {} ;
\draw[] (2) node[sommet]{};
\node[] (22) at (65+3*360/5:1.9) {$2$};

\node[] (3) at (95+3*360/5:1.5) {} ;
\draw[] (3) node[sommet]{};
\node[] (33) at (95+3*360/5:1.9) {$3$};

\node[] (6) at (65+4*360/5:1.5) {} ;
\draw[] (6) node[sommet]{};
\node[] (66) at (65+4*360/5:1.9) {$6$};

\node[] (7) at (90+4*360/5:1.5) {} ;
\draw[] (7) node[sommet]{};
\node[] (77) at (90+4*360/5:1.9) {$7$};

\node[] (8) at (115+4*360/5:1.5) {} ;
\draw[] (8) node[sommet]{};
\node[] (88) at (115+4*360/5:1.9) {$8$};

\node[] (9) at (100:1.5) {} ;
\draw[] (9) node[sommet]{};
\node[] (99) at (100:1.9) {$9$};

\draw (1.center) -- (2.center) ;
\draw (1.center) -- (3.center) ;
\draw (1.center) -- (4.center) ;
\draw (1.center) -- (5.center) ;
\draw (2.center) -- (6.center) ;
\draw (2.center) -- (7.center) ;
\draw (2.center) -- (8.center) ;
\draw (3.center) -- (6.center) ;
\draw (3.center) -- (7.center) ;
\draw (3.center) -- (8.center) ;
\draw (5.center) -- (6.center) ;
\draw (5.center) -- (7.center) ;
\draw (5.center) -- (9.center) ;
\draw (6.center) -- (7.center) ;
\draw (6.center) -- (8.center) ;
\draw (6.center) -- (9.center) ;
\draw (7.center) -- (8.center) ;
\draw (7.center) -- (9.center) ;
\draw (8.center) -- (9.center) ;

\end{scope}

\begin{scope}[xshift=-6cm,yshift=-4cm,rotate=160]
\draw[dotted] (0,0) circle [radius=2.5cm] ;

\foreach \i in {1,...,16}{
	\node (v\i) at (\i*20:2.5){} ;
	}

\node (11) at (-15*20:2.9){$1$};
\node (12) at (-2*20:2.9){$1$};
\draw (-15*20:2.5) -- (-2*20:2.5);

\node (21) at (-18*20:2.9){$2$};
\node (22) at (-7*20:2.9){$2$};
\draw (-17*20:2.5) -- (-8*20:2.5);

\node (31) at (-17*20:2.9){$3$};
\node (32) at (-8*20:2.9){$3$};
\draw (-18*20:2.5) -- (-7*20:2.5);

\node (41) at (-1*20:2.9){$4$};
\node (42) at (-3*20:2.9){$4$};
\draw (-1*20:2.5) -- (-3*20:2.5);

\node (51) at (-10*20:2.9){$5$};
\node (52) at (-16*20:2.9){$5$};
\draw (-10*20:2.5) -- (-16*20:2.5);

\node (61) at (-12*20:2.9){$6$};
\node (62) at (-5*20:2.9){$6$};
\draw (-12*20:2.5) -- (-5*20:2.5);

\node (81) at (-13*20:2.9){$8$};
\node (82) at (-6*20:2.9){$8$};
\draw (-13*20:2.5) -- (-6*20:2.5);

\node (71) at (-11*20:2.9){$7$};
\node (72) at (-4*20:2.9){$7$};
\draw (-11*20:2.5) -- (-4*20:2.5);

\node (91) at (-9*20:2.9){$9$};
\node (92) at (-14*20:2.9){$9$};
\draw (-9*20:2.5) -- (-14*20:2.5);

\end{scope}

\begin{scope}[xshift=5.5cm,yshift=0cm,rotate=180]

\draw[fill=black!5, color=black!5] (0,0) circle [radius=3cm];

\node[] (a1) at (90-18+2*36:2.5) {$a$} ;
\node[] (a2) at (90-18+9*36:2.5) {$a$} ;
\node[] (aa1) at (90-18+2*36:2.2) {};
\node[] (aa2) at (90-18+9*36:2.2) {};
\draw[<->] (aa1.center) -- (aa2.center) ;

\node[] (b1) at (90-18+36:2.5) {$b$} ;
\node[] (b2) at (90-18+7*36:2.5) {$b$} ;
\node[] (bb1) at (90-18+36:2.2) {};
\node[] (bb2) at (90-18+7*36:2.2) {};
\draw[<->] (bb1.center) -- (bb2.center) ;

\node[] (c1) at (90-18:2.5) {$c$} ;
\node[] (c2) at (90-18+4*36:2.5) {$c$} ;
\node[] (cc1) at (90-18:2.3) {};
\node[] (cc2) at (90-18+4*36:2.3) {};
\draw[<->] (cc1.center) -- (cc2.center) ;

\node[] (d1) at (90-18+3*36:2.5) {$d$} ;
\node[] (d2) at (90-18+6*36:2.5) {$d$} ;
\node[] (dd1) at (90-18+3*36:2.2) {};
\node[] (dd2) at (90-18+6*36:2.2) {};
\draw[<->] (dd1.center) -- (dd2.center) ;

\node[red] (e1) at (90-18+5*36:2.5) {$e$} ;
\node[] (e2) at (90-18+8*36:2.5) {$e$} ;
\node[] (ee1) at (90-18+5*36:2.2) {};
\node[] (ee2) at (90-18+8*36:2.2) {};
\draw[<->] (ee1.center) -- (ee2.center) ;

\draw[dashed,->] (a1) to [bend right=30] (b1);
\draw[dashed,->] (a2) to [bend left=30] (c1);
\draw[dotted,->] (a1) to [bend right=30] (d1);
\draw[dotted,->] (a2) to [bend left=30] (e2);

\draw[dashed,->] (b1) to [bend right=30] (c1);
\draw[dashed,->] (b2) to [bend left=30] (e2);
\draw[dotted,->] (b1) to [bend right=30] (a1);
\draw[dotted,->] (b2) to [bend left=30] (d2);

\draw[dashed,->] (c1) to [bend right=30] (b1);
\draw[dashed,->] (c2) to [bend left=30] (d1);
\draw[dotted,->] (c1) to [bend left=30] (a2);
\draw[dotted,->] (c2) to [bend right=30] (e1);

\draw[dashed,->] (d1) to [bend right=30] (a1);
\draw[dashed,->] (d2) to [bend left=30] (b2);
\draw[dotted,->] (d1) to [bend left=30] (c2);
\draw[dotted,->] (d2) to [bend left=30] (e1);

\draw[dashed,->] (e1) to [bend right=30] (c2);
\draw[dashed,->] (e2) to [bend left=30] (a2);
\draw[dotted,->] (e1) to [bend left=30] (d2);
\draw[dotted,->] (e2) to [bend left=30] (b2);

\node[] (9) at (72+2*36:5) {$9$} ;
\draw[very thick,dotted,red,->] (a1) to [bend right=20] (9);
\draw[very thick,red,->] (9) to [bend right=20] (a1);

\node[] (5) at (90-18+36:5) {$5$} ;
\draw[very thick,dotted,red,->] (b1) to [bend right=20] (5);
\draw[very thick,red,->] (5) to [bend right=20] (b1);

\begin{scope}[shift=(72:5),rotate=-18]
\draw[fill=black!5, color=black!5] (-2.5,-0.5) rectangle (4.5,0.5);

\node[red] (oc) at (0,0) {$c'$};
\draw[very thick,dotted,red,->] (c1) to [bend right=20] (oc);
\draw[very thick,red,->] (oc) to [bend right=20] (c1);

\begin{scope}[]
\node[red] (o6) at (180:2) {$6'$};
\node[] (o7) at (0:2) {$7'$};
\node[] (o8) at (0:4) {$8'$};

\draw[dashed,->] (o6) to [bend left=15] (oc);
\draw[dashed,->] (oc) to [bend left=15] (o6);
\draw[dashed,->] (o7) to [bend left=15] (oc);
\draw[dashed,->] (oc) to [bend left=15] (o7);
\draw[dashed,->] (o8) to [bend left=15] (o7);
\draw[dashed,->] (o7) to [bend left=15] (o8);

\begin{scope}[shift=(180:2)]
\node[] (6) at (90:2.5) {$6$};
\draw[very thick,dotted,red,->] (o6) to [bend right=20] (6);
\draw[very thick,red,->] (6) to [bend right=20] (o6);
\end{scope}

\begin{scope}[shift=(0:2)]
\node[] (7) at (90:2.5) {$7$};
\draw[very thick,dotted,red,->] (o7) to [bend right=20] (7);
\draw[very thick,red,->] (7) to [bend right=20] (o7);
\end{scope}

\begin{scope}[shift=(0:4)]
\node[] (8) at (90:2.5) {$8$};
\draw[very thick,dotted,red,->] (o8) to [bend right=20] (8);
\draw[very thick,red,->] (8) to [bend right=20] (o8);
\end{scope}

\draw[blue,dotted,->] (o7) to [bend right=60] (oc);
\draw[blue,dotted,->] (o6) to [bend left=60] (oc);
\draw[blue,dotted,->] (o8) to [bend left=45] (oc);

\end{scope}
\end{scope}

\begin{scope}[shift=(288:5)]
\draw[fill=black!5, color=black!5] (-0.5,-0.5) rectangle (4.5,0.5);
\node[red] (od) at (0,0) {$d$};
\node[blue,above] (star) at (0,-0.2) {$\star$};

\begin{scope}[rotate=0]
\node[] (o2) at (0:2) {$2'$};
\node[] (o3) at (0:4) {$3'$};

\draw[very thick,dotted,red,->] (d2) to [bend right=20] (od);
\draw[very thick,red,->] (od) to [bend right=20] (d2);

\draw[dashed,->] (o2) to [bend left=15] (od);
\draw[dashed,->] (od) to [bend left=15] (o2);
\draw[dashed,->] (o3) to [bend left=15] (o2);
\draw[dashed,->] (o2) to [bend left=15] (o3);

\begin{scope}[shift=(0:2)]
\node[] (2) at (90:2.5) {$2$};
\draw[very thick,dotted,red,->] (o2) to [bend right=20] (2);
\draw[very thick,red,->] (2) to [bend right=20] (o2);
\end{scope}
%
\begin{scope}[shift=(0:4)]
\node[] (3) at (90:2.5) {$3$};
\draw[very thick,dotted,red,->] (o3) to [bend right=20] (3);
\draw[very thick,red,->] (3) to [bend right=20] (o3);
\end{scope}

\draw[blue,dotted,->] (o2) to [bend left=30] (od);
\draw[blue,dotted,->] (o3) to [bend left=30] (od);

\end{scope}
\end{scope}

\begin{scope}[shift=(252:5)]
\draw[fill=black!5, color=black!5] (0.5,-0.5) rectangle (-4.5,0.5);

\node[] (oe) at (0,0) {$e'$};
\node[blue] (star) at (-2,0.5) {$\star$};

\draw[very thick,dotted,red,->] (oe) to [bend right=20] (e1);
\draw[very thick,red,->] (e1) to [bend right=20] (oe);

\begin{scope}[rotate=0]
\node[red] (o1) at (180:2) {$1'$};
\node[] (o4) at (180:4) {$4'$};

\draw[dashed,->] (o4) to [bend left=15] (o1);
\draw[dashed,->] (o1) to [bend left=15] (o4);
\draw[dashed,->] (o1) to [bend left=15] (oe);
\draw[dashed,->] (oe) to [bend left=15] (o1);

\begin{scope}[shift=(0:-2)]
\node[red] (1) at (270:2.5) {$1$};
\draw[very thick,red,->] (o1) to [bend right=20] (1);
\draw[very thick,dotted,red,->] (1) to [bend right=20] (o1);
\end{scope}
%
\begin{scope}[shift=(0:-4)]
\node[] (4) at (90:2.5) {$4$};
\draw[very thick,dotted,red,->] (o4) to [bend right=20] (4);
\draw[very thick,red,->] (4) to [bend right=20] (o4);
\end{scope}

\draw[blue,dotted,->] (oe) to [bend right=60] (o1);
\draw[blue,dotted,->] (o4) to [bend left=60] (o1);

\end{scope}
\end{scope}

\end{scope}

\end{tikzpicture}
\end{center}
\vspace{-0.2cm}
\caption{
A circle graph $G$, a chord diagram of $G$ and the split PC-tree $\PC(G)$ rooted at leaf~$1$. The red plain arcs correspond to the tree-edges from nodes to their parent. The split-tree $\ST(G)$ contains a unique prime node $u$ labelled by the house graph and represented by a \CSC{}. The \emph{root of a node $u$}  is the extremity of the arcs between $u$ and its parent. Roots are  marker vertices or the  endpoints colored red. Every marker vertex of a degenerate node has a pointer towards its root. In a star node $u$, the center is identified by a blue star (which can be distinct from the root of $u$).
\label{fig_PCtree} 
}
\end{figure}
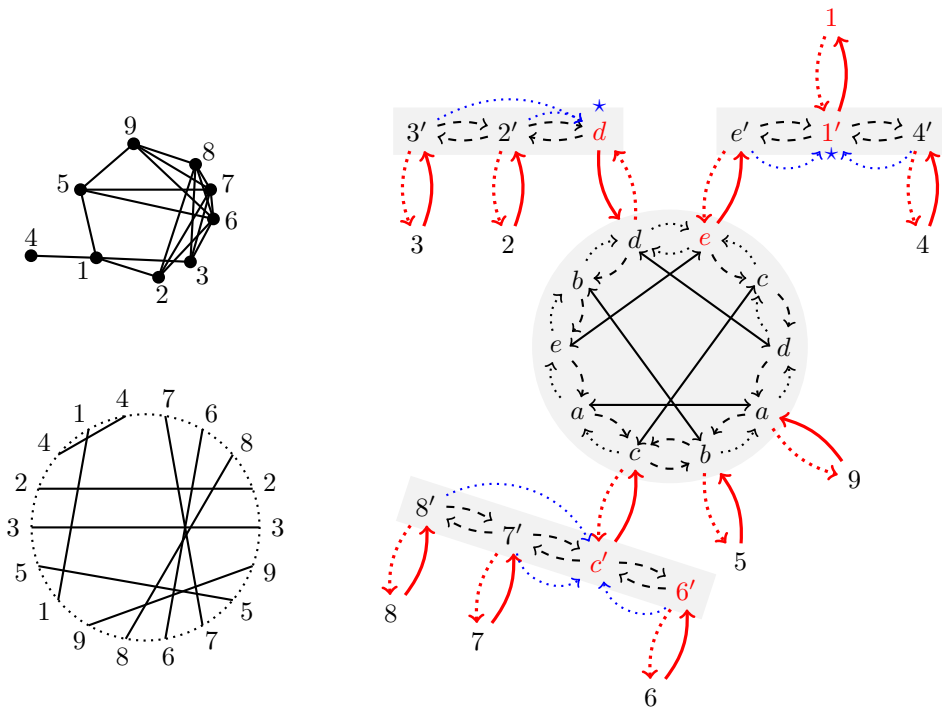

Observe that, given any marker vertex from~$V(u)$ of some degenerate node~$u$, we can determine the corresponding node object and thus also its parent arc in~$O(1)$ time.  On the other hand, due to the lack of a node object to represent a non-degenerate node $u$, finding the parent arc of $u$ is a costly operation. Indeed, given a marker vertex from~$V(u)$, one has to traverse~$\C(u)$ until  the endpoint is found whose flag indicates that it is incident to the parent arc. 
On the positive side, by~\autoref{lem_consecutive_join}, given two \CSC{}'s, it is possible to perform a node-join operation in $O(1)$ time and since the result is always a non-degenerate node, there is no need to update pointers to a node object, especially the pointer to the parent of the resulting node.

\section{Lexicographic Breadth-First-Search}
\label{sec:lexbfs}

The algorithm LexBFS (Lexicographic Breadth-First-Search) was introduced by Rose, Tarjan and Lueker~\cite{RoseTL76Algorithmic} to recognize chordal graphs in linear time. It has  since been used in many different contexts. We refer to~\cite{GioanPTC14Split} and references therein for a description of LexBFS. The circle graph recognition algorithm uses LexBFS as a pre-processing step to compute an ordering of the vertices of the input graph $G$. That ordering is then used to incrementally build $\PC(G)$ if $G$ is a circle graph by adding one vertex at a time.

A \emph{LexBFS ordering} of a graph $G$ is an ordering $\sigma$ in which LexBFS has visited the vertices of $G$. 
Let $y$ be a vertex of $G$. The \emph{slice} $S(y)$ is the largest factor of $\sigma$ starting at $y$ such that for every $x\prec_{\sigma} y$, $x$ is adjacent to $y$ if and only if $x$ is adjacent to every vertex of $S(y)$. It is known that 
the restriction $\sigma[S(y)]$ to the vertices of the slice $S(y)$ is a LexBFS ordering of $G[S(y)]$~\cite{CorneilOS09TheLBFS}. A vertex $x$ of a graph $G$ is \emph{good} if there exists a LexBFS ordering $\sigma$ whose last vertex is $x$. Due to the following lemma, good vertices are crucial to the LexBFS incremental recognition algorithm of circle graphs. This lemma is stated and proved in  \cite{GioanPTC14Circle} for the case of prime graphs. The case of arbitrary graph follows from the algorithm of  \cite{GioanPTC14Circle}. We provide a self-contained proof independent from their algorithm.

\begin{lemma}[Good vertex lemma]
\label{lem_good_vertex}
Let $G$ be a circle graph. If $x$ is a good vertex of $G$, then $G$ has a chord diagram $D$ in which $N(x)$ is consecutive.
\end{lemma}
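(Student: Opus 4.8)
The plan is to reduce the statement to the case of prime circle graphs, which is already available --- the good vertex lemma for prime circle graphs is proved directly in \cite{GioanPTC14Circle}, without recourse to their recognition algorithm --- by induction on the number of non-leaf nodes of the split-tree $\ST(G)$ (which satisfies $\Gr(\ST(G))=G$).

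\emph{Base case.} If $\ST(G)$ has a single non-leaf node, then $G$ is prime or degenerate. If $G$ is prime, then by \autoref{th_unique_prime} it has a unique chord diagram $D$, and ``$N(x)$ is consecutive in $D$'' is precisely the prime case. If $G$ is a clique, any vertex ordering $A$ ending at $x$ yields a chord diagram in which $N(x)=V\setminus\{x\}$ is consecutive. If $G$ is a star with centre $c$, its good vertices are exactly its leaves (the centre, when there are at least two leaves, is never the last vertex of a LexBFS), and for a leaf $x$ the set $N(x)=\{c\}$ is a single chord, hence trivially consecutive.

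\emph{Inductive step.} Root $\ST(G)$ at the leaf $x$, let $u_0$ be the node adjacent to $x$ and $q_0\in V(u_0)$ the marker vertex opposite $x$. Tracing accessibility out of $x$, one checks that $N_G(x)$ is the disjoint union, over the marker vertices $w\in N_{G(u_0)}(q_0)$, of sets $A^*(w)$, where $A^*(w)=\{w\}$ if $w$ is a leaf of $\ST(G)$ and otherwise $A^*(w)=N_{G^+_w}(x_w)$ for the circle graph $G^+_w$ obtained from the subtree hanging at $w$ by attaching a pendant leaf $x_w$ ($G^+_w$ is a circle graph by \autoref{theo_circle_split} and has strictly fewer non-leaf nodes than $\ST(G)$). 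The argument then needs two inputs. First, $q_0$ is a good vertex of $G(u_0)$: passing from $G$ to $G(u_0)$ amounts to contracting, one after another, the sides not containing $x$ of a family of splits of $G$, and contracting such a side preserves the goodness of $x$; together with the base case this gives a chord diagram $D_0$ of $G(u_0)$ in which $N_{G(u_0)}(q_0)$ is consecutive. Second, for each non-leaf $w\in N_{G(u_0)}(q_0)$, the set $A^*(w)=N_{G^+_w}(x_w)$ is consecutive in some chord diagram $D_w$ of $G^+_w$. This second input is the delicate one: it does \emph{not} follow by merely applying the induction hypothesis to $G^+_w$ and $x_w$, because contracting the side of a split that contains a good vertex need not produce a good vertex (the contracted vertex can, for instance, be the centre of a large star). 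So the induction hypothesis has to be strengthened, or $A^*(w)$ controlled directly, by exploiting the finer structure of a LexBFS ending at $x$ --- its slice decomposition and its interaction with the split-tree, in the spirit of the LexBFS-based split decomposition of \cite{GioanPTC14Split}. This is the step I expect to carry the weight of the proof.

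Granting both inputs, I reassemble $G$ from $G(u_0)$ and the graphs $G^+_w$ by a sequence of circle-joins at the pairs of markers $w,x_w$. By \autoref{obs_join_circle} each such join produces a chord diagram of the corresponding intermediate graph, and \autoref{lem_consecutive_join_diagram} is applied with the consecutive sets $N_{G(u_0)}(q_0)\cup\{q_0\}$ and $A^*(w)\cup\{x_w\}$ (a direct computation with the circle-join formula does the same thing by hand and also covers the degenerate cases where $q_0$ is adjacent to all of $V(G(u_0))$). Here I use a small observation, recorded once at the start: in any chord diagram in which $N(v)$ is consecutive, the closed neighbourhood $N[v]$ is also consecutive with $v$ as one of its two bookends, which supplies the bookend hypotheses for free. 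Choosing orientations so that, for each $w$, the consecutive block produced from $G^+_w$ lands at the position of $w$ inside the consecutive block of $N_{G(u_0)}(q_0)$ in $D_0$, one sees that in the resulting chord diagram of $G$ the concatenation of these blocks, which is exactly $\bigcup_w A^*(w)=N_G(x)$, is consecutive. This closes the induction.
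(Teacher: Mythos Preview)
Your proposal has a genuine gap, and you name it yourself: the ``second input'' --- that for each non-leaf marker $w \in N_{G(u_0)}(q_0)$ the set $A^*(w)$ is consecutive in some chord diagram of $G^+_w$ --- is never established. You correctly observe that the induction hypothesis as stated does not apply, since the contracted vertex $x_w$ need not be good in $G^+_w$, and you then proceed by ``granting both inputs''. That is precisely where the substance of the lemma lives; without it, what remains is a reduction framework (with a sound base case and a correct reassembly step via \autoref{lem_consecutive_join_diagram}), not a proof. Gesturing at ``the finer structure of a LexBFS'' and the machinery of \cite{GioanPTC14Split} does not fill the hole.

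The paper's argument sidesteps this obstacle by taking a route that does not touch the split-tree at all. It inducts on the number of LexBFS slices of $\sigma$ that contain $x$. The base case (only the trivial slices $S(z),S(z'),S(x)$ contain $x$) is handled by a direct contradiction argument on an arbitrary chord diagram of $G$. In the inductive step one picks the latest non-trivial slice $S(y)$ containing $x$, decomposes $V$ into $B=\{v:v\le_\sigma y\}$ and $A=S(y)\cup\{y'\}$ for an earlier neighbour $y'$ of $y$, and uses that $\sigma[B]$ and $\sigma[A]$ are LexBFS orderings of $G[B]$ and $G[A]$ ending at $y$ and $x$, respectively, each with strictly fewer slices containing its last vertex. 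The induction hypothesis then applies to \emph{both} pieces, and a single circle-join at $y,y'$ reassembles a chord diagram of $G$ with $N(x)$ consecutive. The point is that the slice decomposition, unlike your split-tree decomposition, preserves goodness on both sides --- exactly the property you cannot guarantee for $x_w$.
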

\begin{proof}
Let $\sigma$ be a LexBFS ordering of $G$ ending at $x$. Let $z$ and $z'$ be the first two vertices of $\sigma$ in order. We prove the statement by induction on the number of slices containing $x$. Without loss of generality, we may assume that $G$ is connected. 

Suppose that for every vertex $w\notin\{z,z',x\}$, $S(w)$ is not a slice containing $x$.
For the sake of contradiction, assume that $N(x)$ is not consecutive in any chord diagram of $G=(V,E)$. Observe that in every chord diagram $D$, either one endpoint of $z$ appears in $D(x_1,x_2)$ and the other in $D(x_2,x_1)$, or the two endpoints of $z$ appear in one of $D(x_1,x_2)$ and $D(x_2,x_1)$. Without loss of generality, suppose that $D(x_1,x_2)$ contains at most one endpoint of $z$. Since $N(x)$ is not consecutive, at least one chord has its two endpoints in $D(x_1,x_2)$. Amongst such chords, let $y$ be the one occurring the earliest in $\sigma$. Observe  that by the choice of $y$, every neighbor $v$ of $y$ such that $v<_{\sigma} y$ is adjacent to $x$. Since $y<_{\sigma} x$, we also have that every non-neighbor of $v'$ of $y$ is a non-neighbor of $x$ and thereby $x\in S(y)$. Observe moreover that, by assumption on  $D(x_1,x_2)$, we have $y\neq z$. It follows that if $y\neq z'$, we have a contradiction. So assume that $y=z'$. Observe that by connectivity of $G$, we have that $z$ is universal in $G$ and thereby its chord has one endpoint in $D(x_1,x_2)$ and the other in $D(x_2,x_1)$. Then again, since $N(x)$ is not consecutive,  at least one chord has its two endpoints in $D(x_2,x_1)$. Amongst such chords, let $y'$ be the one occurring the earliest in $\sigma$. By the same argument than before we can prove that $S(y')$ is a slice containing $x$ and moreover $y'\notin\{z,z',x\}$, which is a contradiction.

Let us assume the property holds if the last vertex of a LexBFS ordering  belongs to $k\geq 2$ slices. 
Suppose that $x$ belongs to $k+1$ slices of $\sigma$ with $k\geq 2$. Let $y$ be the vertex of $G$ occurring last in $\sigma$ such that $y\notin\{z,z',x\}$ and $x\in S(y)$. If such a vertex does not exist, then by the discussion above we are done. Let us consider $B=\{v\in V\mid v\leq_{\sigma} y\}$ and $A=S(y)\cup \{y'\}$ where $y'$ is a neighbor of $y$ such that $y'<_{\sigma} y$. Observe that $\sigma[B]$ and $\sigma[A]$ are LexBFS orderings of $G[B]$ and $G[A]$ respectively ending at $y$ and $x$. Since $\sigma[B]$ has fewer slices containing $y$ than $\sigma$ has containing $x$, by the induction hypothesis, $G[B]$ has a chord diagram $D_B$ in which $N(y)\cap B$ is consecutive. Also the only slices containing $x$ in $\sigma[A]$ are $S(v)=A$ and $S(x)=\{x\}$. Again by induction hypothesis, $G[A]$ has a chord diagram $D_A$ in which $N(x)\cap A$ is consecutive. Hence, since $y'$ is universal in $G_A$, we may assume that $D_A(x_1,x_2)=X\cdot y'_2\cdot X'$ certifies the consecutiveness of $N(x)\cap A$.
Observe then that $D_A(y'_1,y'_2)\odot D_B(y_1,y_2)=D_A(y'_1,y'_2)\cdot D_B(y_1,y_2)\cdot D_A(y'_2,y'_1)\cdot D_B(y_2,y_1)$ is a chord diagram of $G$ in which $N(x)=(N(x)\cap A)\cup (N(y\cap B)$ is consecutive.
\end{proof}

Let $\sigma$ be a LexBFS ordering of a graph $G$ and let $u$ be a node of a \GLT{} $(T,\F)$ such that $\Gr(T,\F)=G$. Then we can define from $\sigma$ a LexBFS ordering $\sigma_u$ of $G(u)$ in the following way~\cite{GioanPTC14Split}: for two marker vertices $q$ and $q'$ of $G(u)$, let us denote $x$ the earliest vertex of $A(q)$ in $\sigma$ and $x'$ the earliest vertex of $A(q')$ in $\sigma$, then  $q<_{\sigma_u} q'$ if and only if $x<_{\sigma} x'$. 

\begin{lemma}[\cite{GioanPTC14Split}] \label{lem_LBFS_hereditary}
Let $\sigma$ be a LexBFS ordering of a graph $G$ and let $u$ be a node of $\ST(G)$. Then $\sigma_u$ is a LexBFS ordering of $G(u)$.
\end{lemma}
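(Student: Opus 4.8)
The plan is to verify for $\sigma_u$ the standard four-point characterisation of LexBFS orderings: an ordering $\tau$ of a graph $H$ is a LexBFS ordering of $H$ if and only if for every $x<_\tau y<_\tau z$ with $xz\in E(H)$ and $xy\notin E(H)$ there is a vertex $d<_\tau x$ with $dy\in E(H)$ and $dz\notin E(H)$. Throughout I will use the following description of how $\ST(G)$ encodes $G$ through the node $u$: for leaves $x\in L(q)$ and $y\in L(q')$ with $q\neq q'$ in $V(u)$, one has $xy\in E(G)$ if and only if $x\in A(q)$, $y\in A(q')$ and $qq'\in E(u)$. This is the node-version of the split property relating tree-edges of a \GLT{} to splits of its accessibility graph; it follows from the definition of accessibility, using that an accessibility walk between leaves of two different branches of $u$ crosses $u$ exactly once. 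In particular a vertex of $L(q)$ has a neighbour outside $L(q)$ precisely when it lies in $A(q)$; since $G$ is connected and each branch $L(q)$ is a non-empty proper subset of $V(G)$, this also gives $A(q)\neq\emptyset$, so $f(q):=\min_\sigma A(q)$ is a well-defined injection $V(u)\to V(G)$ and $\sigma_u$ is a genuine total order. (If $|V(u)|=1$, i.e.\ $u$ is a leaf, the statement is trivial, so assume $u$ is internal.)

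Now suppose for contradiction that $\sigma_u$ is not a LexBFS ordering of $G(u)$, witnessed by marker vertices $q_a<_{\sigma_u}q_b<_{\sigma_u}q_c$ with $q_aq_c\in E(u)$, $q_aq_b\notin E(u)$ and no $q_d<_{\sigma_u}q_a$ satisfying $q_dq_b\in E(u)$ and $q_dq_c\notin E(u)$. Put $a=f(q_a)$, $b=f(q_b)$, $c=f(q_c)$, so that $a<_\sigma b<_\sigma c$, and by the description above $ac\in E(G)$ while $ab\notin E(G)$. The four-point property for the LexBFS ordering $\sigma$ of $G$ then yields a vertex $d<_\sigma a$ with $db\in E(G)$ and $dc\notin E(G)$; let $q\in V(u)$ be the branch marker with $d\in L(q)$. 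If $q\notin\{q_a,q_b,q_c\}$, then the description applied to $d\sim_G b$ forces $d\in A(q)$ and $qq_b\in E(u)$, and applied to $d\not\sim_G c$ it forces $qq_c\notin E(u)$; since moreover $f(q)\le_\sigma d<_\sigma a=f(q_a)$ we get $q<_{\sigma_u}q_a$, so $q$ is a forbidden witness, a contradiction. If $q=q_a$ or $q=q_c$, then $d$ precedes $f(q)=\min_\sigma A(q)$, so $d\notin A(q)$, so every neighbour of $d$ lies in $L(q)$, contradicting $d\sim_G b$ because $b\notin L(q)$. The only remaining possibility is $q=q_b$.

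The case $q=q_b$ is the crux. Here $d\in L(q_b)$ and $d<_\sigma a<_\sigma b=f(q_b)=\min_\sigma A(q_b)$, so $d\notin A(q_b)$. To contradict this I use that $\sigma$ is obtained by a graph search on the connected graph $G$, so every vertex other than the first vertex $v_1$ of $\sigma$ has an earlier neighbour in $\sigma$. Since $\emptyset\neq L(q_b)\neq V(G)$, there are edges of $G$ joining $L(q_b)$ to its complement, and by the description above their $L(q_b)$-endpoints are exactly the vertices of $A(q_b)$. If $v_1\in L(q_b)$, let $w$ be the first vertex of $\sigma$ outside $L(q_b)$ and $p<_\sigma w$ a neighbour of $w$; then $p\in L(q_b)$ (every vertex before $w$ is), so $p\in A(q_b)$, hence $b=\min_\sigma A(q_b)\le_\sigma p<_\sigma w\le_\sigma a<_\sigma b$, which is absurd (the third inequality holds because $a\notin L(q_b)$ and $w$ is the earliest vertex outside $L(q_b)$). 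If $v_1\notin L(q_b)$, let $w$ be the first vertex of $\sigma$ inside $L(q_b)$ and $p<_\sigma w$ a neighbour of $w$; then $p\notin L(q_b)$, so $w\in A(q_b)$, hence $b=\min_\sigma A(q_b)\le_\sigma w\le_\sigma d<_\sigma a<_\sigma b$, again absurd. So $q=q_b$ is impossible as well; no violating triple exists, and $\sigma_u$ is a LexBFS ordering of $G(u)$.

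The step I expect to be the main obstacle is precisely the case $q=q_b$: a priori the LexBFS witness $d$ obtained in $G$ may lie in the same branch of $u$ as the middle marker $q_b$, where the accessibility description says nothing about $E(u)$ and so gives no candidate witness in $G(u)$. It is resolved by recognising $A(q_b)$ as the boundary of the branch $L(q_b)$ and using that a search entering $L(q_b)$ from outside (or leaving it from inside) must do so through that boundary, which is incompatible with the representative $b=f(q_b)$ occurring after $d$ in $\sigma$. The other ingredients -- the node-version of the split property and the well-definedness of $\sigma_u$ -- are routine.
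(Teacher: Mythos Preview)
Your proof is correct. The paper does not actually prove this lemma; it is simply quoted from~\cite{GioanPTC14Split}, so there is no in-paper argument to compare against. Your route via the four-point characterisation of LexBFS orderings is the standard one and is carried out carefully; in particular, the delicate case $q=q_b$, where the LexBFS witness $d$ falls into the same branch as the middle marker, is handled correctly by exploiting that $A(q_b)$ is exactly the boundary of $L(q_b)$ together with the earlier-neighbour property of any graph search on a connected graph. One small remark: in Case~1 you silently use that $d\in A(q)$ (obtained from $db\in E(G)$) and $c\in A(q_c)$ to deduce $qq_c\notin E(u)$ from $dc\notin E(G)$; this is fine, but it may be worth making explicit that the implication only goes through because both accessibility conditions are already in hand.
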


To illustrate \autoref{lem_LBFS_hereditary}, consider the graph $G$ of \autoref{fig_PCtree}. The numbering of its vertices from $1$ to $9$ is a LexBFS ordering. Considering the unique prime node $u$ of $\ST(G)$, then $\sigma_u=\langle e,d,b,c,a\rangle$ is a LexBFS ordering of the house graph $G(u)$.

\autoref{lem_good_vertex}, \autoref{lem_LBFS_hereditary} together with \autoref{lem_consecutive_join} allows us to sketch a circle graph recognition algorithm. The idea is to iteratively process the vertices of an input graph $G$ according to a LexBFS ordering in order to build $\PC(G)$. By \autoref{lem_good_vertex}, at each step, if $G$ is a circle graph, the neighbourhood of the vertex $x$ to be inserted is consecutive in some chord diagram of the so-far processed subgraph. By \autoref{lem_LBFS_hereditary}, this property is valid for the label graph of every node the current split PC-tree. Since adding a vertex may kill some existing split, we may have to perform node-join operations to compute the updated split PC-tree. By \autoref{lem_consecutive_join}, this can be done while preserving the consecutiveness of the neighbourhood of $x$. The challenge is to identify the part of the current split PC-tree that has to be shrunk into a single node. As we will see, either the input graph~$G$ is a circle graph and this can be done efficiently, or we are able to conclude that $G$  is not a circle graph.

\section{The vertex insertion algorithm}
\label{sec_algo}

Throughout this section, 
we assume that~$G=(V,E)$ is a circle graph, that $\PC(G)$ is the split PC-tree of $G$ encoding its split-tree $\ST(G)=(T,\F)$\footnote{Given that we can transform $\ST(G)$ into $\PC(G)$ and vice versa, depending on the context, if more convenient we may abusively switch between these two objects.}. 
We consider a new vertex $x$ with neighborhood $S\subseteq V$ that is the last vertex of a LexBFS ordering of $G'=G+x$, that is, $x$ is good in the graph $G'$. 
We assume without loss of generality that~$G$ is connected and $S \ne \emptyset$.

We let $T(S)$ denote the minimal subtree of~$T$ that contains all vertices in~$S$.  Let~$q$ be a leaf of $T$ or a marker vertex (possibly an endpoint) of some node $u$ of~$T$.  Following the work of Gioan et al.~\cite{GioanPTC14Split,GioanPTC14Circle}, the state of $q$ (with respect to $S$) is \emph{perfect} if $S \cap L(q) = A(q)$; \emph{empty} if $S \cap L(q) = \emptyset$; and \emph{mixed} otherwise (see \autoref{fig_marked_PCtree}).  We let $P(u)$ denote the set of perfect marker vertices of $u$ and $MP(u)$ the set of perfect or mixed marker vertices of $u$. For a degenerate node $u$, we define:
\[P^*(u)=\{q\in V(u)\mid q \mbox{ is perfect and not the center of a star}\},\]
\[E^*(u)=\{q\in V(u)\mid q \mbox{ is empty, or $q$ is perfect and the center of a star}\}.\]

\begin{figure}[ht]
\begin{center}
\bigskip
\begin{tikzpicture}[thick,scale=0.6]
\tikzstyle{sommet}=[circle, draw, fill=black, inner sep=0pt, minimum width=4pt]
\tikzstyle{newsommet}=[circle, draw, fill=green!50!black, inner sep=0pt, minimum width=4pt]

\begin{scope}[xshift=-6cm,yshift=3cm]

\node[] (5) at (90+360/5:1.5) {} ;
\draw[] (5) node[sommet]{};
\node[] (55) at (90+360/5:1.9) {$5$};

\node[] (1) at (80+2*360/5:1.5) {} ;
\draw[] (1) node[sommet]{};
\node[] (11) at (80+2*360/5:1.9) {$1$};

\node[] (4) at (85+1.62*360/5:2.7) {} ;
\draw[] (4) node[sommet]{};
\node[above] (44) at (85+1.62*360/5:2.7) {$4$};

\node[] (2) at (65+3*360/5:1.5) {} ;
\draw[] (2) node[sommet]{};
\node[] (22) at (65+3*360/5:1.9) {$2$};

\node[] (3) at (95+3*360/5:1.5) {} ;
\draw[] (3) node[sommet]{};
\node[] (33) at (95+3*360/5:1.9) {$3$};

\node[] (6) at (65+4*360/5:1.5) {} ;
\draw[] (6) node[sommet]{};
\node[] (66) at (65+4*360/5:1.9) {$6$};

\node[] (7) at (90+4*360/5:1.5) {} ;
\draw[] (7) node[sommet]{};
\node[] (77) at (90+4*360/5:1.9) {$7$};

\node[] (8) at (115+4*360/5:1.5) {} ;
\draw[] (8) node[sommet]{};
\node[] (88) at (115+4*360/5:1.9) {$8$};

\node[] (9) at (100:1.5) {} ;
\draw[] (9) node[sommet]{};
\node[] (99) at (100:1.9) {$9$};

\node[] (10) at (115+4*360/5:3) {} ;
\draw[green!50!black] (10) node[newsommet]{};
\node[green!50!black] (100) at (115+4*360/5:3.4) {$10$};
\draw[green!50!black] (10.center) -- (9.center);
\draw[green!50!black] (10.center) -- (7.center);
\draw[green!50!black] (10.center) to [bend left=60] (3.center);
\draw[green!50!black] (10.center) to [bend right=90] (5.center);

\draw (1.center) -- (2.center) ;
\draw (1.center) -- (3.center) ;
\draw (1.center) -- (4.center) ;
\draw (1.center) -- (5.center) ;
\draw (2.center) -- (6.center) ;
\draw (2.center) -- (7.center) ;
\draw (2.center) -- (8.center) ;
\draw (3.center) -- (6.center) ;
\draw (3.center) -- (7.center) ;
\draw (3.center) -- (8.center) ;
\draw (5.center) -- (6.center) ;
\draw (5.center) -- (7.center) ;
\draw (5.center) -- (9.center) ;
\draw (6.center) -- (7.center) ;
\draw (6.center) -- (8.center) ;
\draw (6.center) -- (9.center) ;
\draw (7.center) -- (8.center) ;
\draw (7.center) -- (9.center) ;
\draw (8.center) -- (9.center) ;

\end{scope}

\begin{scope}[xshift=-6cm,yshift=-4cm,rotate=160]
\draw[dotted] (0,0) circle [radius=2.5cm] ;

\foreach \i in {1,...,16}{
	\node (v\i) at (\i*18:2.5){} ;
	}

\node (11) at (-17*18:2.9){$1$};
\node (12) at (-2*18:2.9){$1$};
\draw (-17*18:2.5) -- (-2*18:2.5);

\node (21) at (-20*18:2.9){$2$};
\node (22) at (-7*18:2.9){$2$};
\draw (-20*18:2.5) -- (-7*18:2.5);

\node (31) at (-19*18:2.9){$3$};
\node (32) at (-9*18:2.9){$3$};
\draw (-19*18:2.5) -- (-9*18:2.5);

\node (41) at (-1*18:2.9){$4$};
\node (42) at (-3*18:2.9){$4$};
\draw (-1*18:2.5) -- (-3*18:2.5);

\node (51) at (-11*18:2.9){$5$};
\node (52) at (-18*18:2.9){$5$};
\draw (-11*18:2.5) -- (-18*18:2.5);

\node (61) at (-14*18:2.9){$6$};
\node (62) at (-5*18:2.9){$6$};
\draw (-14*18:2.5) -- (-5*18:2.5);

\node (71) at (-12*18:2.9){$7$};
\node (72) at (-4*18:2.9){$7$};
\draw (-12*18:2.5) -- (-4*18:2.5);

\node (81) at (-15*18:2.9){$8$};
\node (82) at (-6*18:2.9){$8$};
\draw (-15*18:2.5) -- (-6*18:2.5);

\node (91) at (-10*18:2.9){$9$};
\node (92) at (-16*18:2.9){$9$};
\draw (-10*18:2.5) -- (-16*18:2.5);

\node[green!50!black] (101) at (-13*18:2.9){$10$};
\node[green!50!black]  (102) at (-8*18:2.9){$10$};
\draw[green!50!black,very thick]  (-8*18:2.5) -- (-13*18:2.5);

\end{scope}

\begin{scope}[xshift=6cm,yshift=0cm,rotate=180]

\draw[fill=black!5, color=black!5] (0,0) circle [radius=3cm];

\node[green!50!black] (la1) at (90-18+2*36:3.3) {\tiny $\mathsf{P}$};
\node[green!50!black] (la2) at (90-18+2*36:4.2) {\tiny $\mathsf{M}$};
\node[] (a1) at (90-18+2*36:2.5) {$a$} ;
\node[] (a2) at (90-18+9*36:2.5) {$a$} ;
\node[] (aa1) at (90-18+2*36:2.2) {};
\node[] (aa2) at (90-18+9*36:2.2) {};
\draw[<->] (aa1.center) -- (aa2.center) ;

\node[green!50!black] (lb1) at (90-18+36:3.3) {\tiny $\mathsf{P}$};
\node[green!50!black] (lb2) at (90-18+36:4.2) {\tiny $\mathsf{M}$};
\node[] (b1) at (90-18+36:2.5) {$b$} ;
\node[] (b2) at (90-18+7*36:2.5) {$b$} ;
\node[] (bb1) at (90-18+36:2.2) {};
\node[] (bb2) at (90-18+7*36:2.2) {};
\draw[<->] (bb1.center) -- (bb2.center) ;

\node[green!50!black] (lc1) at (90-18:3.3) {\tiny $\mathsf{M}$};
\node[green!50!black] (lc2) at (90-18:4.2) {\tiny $\mathsf{M}$};
\node[] (c1) at (90-18:2.5) {$c$} ;
\node[] (c2) at (90-18+4*36:2.5) {$c$} ;
\node[] (cc1) at (90-18:2.3) {};
\node[] (cc2) at (90-18+4*36:2.3) {};
\draw[<->] (cc1.center) -- (cc2.center) ;

\node[green!50!black] (ld1) at (90-18+6*36:3.3) {\tiny $\mathsf{M}$};
\node[green!50!black] (ld2) at (90-18+6*36:4.2) {\tiny $\mathsf{M}$};
\node[] (d1) at (90-18+3*36:2.5) {$d$} ;
\node[] (d2) at (90-18+6*36:2.5) {$d$} ;
\node[] (dd1) at (90-18+3*36:2.2) {};
\node[] (dd2) at (90-18+6*36:2.2) {};
\draw[<->] (dd1.center) -- (dd2.center) ;

\node[green!50!black] (le1) at (90-18+5*36:3.3) {\tiny $\mathsf{\emptyset}$};
\node[green!50!black] (le2) at (90-18+5*36:4.2) {\tiny $\mathsf{M}$};
\node[red] (e1) at (90-18+5*36:2.5) {$e$} ;
\node[] (e2) at (90-18+8*36:2.5) {$e$} ;
\node[] (ee1) at (90-18+5*36:2.2) {};
\node[] (ee2) at (90-18+8*36:2.2) {};
\draw[<->] (ee1.center) -- (ee2.center) ;

\draw[dashed,->] (a1) to [bend right=30] (b1);
\draw[dashed,->] (a2) to [bend left=30] (c1);
\draw[dotted,->] (a1) to [bend right=30] (d1);
\draw[dotted,->] (a2) to [bend left=30] (e2);

\draw[dashed,->] (b1) to [bend right=30] (c1);
\draw[dashed,->] (b2) to [bend left=30] (e2);
\draw[dotted,->] (b1) to [bend right=30] (a1);
\draw[dotted,->] (b2) to [bend left=30] (d2);

\draw[dashed,->] (c1) to [bend right=30] (b1);
\draw[dashed,->] (c2) to [bend left=30] (d1);
\draw[dotted,->] (c1) to [bend left=30] (a2);
\draw[dotted,->] (c2) to [bend right=30] (e1);

\draw[dashed,->] (d1) to [bend right=30] (a1);
\draw[dashed,->] (d2) to [bend left=30] (b2);
\draw[dotted,->] (d1) to [bend left=30] (c2);
\draw[dotted,->] (d2) to [bend left=30] (e1);

\draw[dashed,->] (e1) to [bend right=30] (c2);
\draw[dashed,->] (e2) to [bend left=30] (a2);
\draw[dotted,->] (e1) to [bend left=30] (d2);
\draw[dotted,->] (e2) to [bend left=30] (b2);

\node[green!50!black]  (9) at (72+2*36:5) {$9$} ;
\draw[very thick,dotted,red,->] (a1) to [bend right=20] (9);
\draw[very thick,red,->] (9) to [bend right=20] (a1);

\node[green!50!black] (5) at (90-18+36:5) {$5$} ;
\draw[very thick,dotted,red,->] (b1) to [bend right=20] (5);
\draw[very thick,red,->] (5) to [bend right=20] (b1);

\begin{scope}[shift=(72:5),rotate=-18]
\draw[fill=black!5, color=black!5] (-2.5,-0.5) rectangle (4.5,0.5);

\node[red] (oc) at (0,0) {$c'$};
\draw[very thick,dotted,green!50!black,->] (c1) to [bend right=20] (oc);
\draw[very thick,green!50!black,->] (oc) to [bend right=20] (c1);

\begin{scope}[]
\node[] (o6) at (180:2) {$6'$};
\node[] (o7) at (0:2) {$7'$};
\node[] (o8) at (0:4) {$8'$};

\draw[dashed,->] (o6) to [bend left=15] (oc);
\draw[dashed,->] (oc) to [bend left=15] (o6);
\draw[dashed,->] (o7) to [bend left=15] (oc);
\draw[dashed,->] (oc) to [bend left=15] (o7);
\draw[dashed,->] (o8) to [bend left=15] (o7);
\draw[dashed,->] (o7) to [bend left=15] (o8);

\begin{scope}[shift=(180:2)]
\node[] (6) at (90:2.5) {$6$};
\node[green!50!black] (l61) at (90:1.7) {\tiny $\mathsf{M}$};
\node[green!50!black] (l62) at (90:0.8) {\tiny $\mathsf{\emptyset}$};
\draw[very thick,dotted,red,->] (o6) to [bend right=20] (6);
\draw[very thick,red,->] (6) to [bend right=20] (o6);
\end{scope}

\begin{scope}[shift=(0:2)]
\node[green!50!black]  (7) at (90:2.5) {$7$};
\node[green!50!black] (l71) at (90:1.7) {\tiny $\mathsf{M}$};
\node[green!50!black] (l72) at (90:0.8) {\tiny $\mathsf{P}$};
\draw[very thick,dotted,red,->] (o7) to [bend right=20] (7);
\draw[very thick,red,->] (7) to [bend right=20] (o7);
\end{scope}

\begin{scope}[shift=(0:4)]
\node[] (8) at (90:2.5) {$8$};
\node[green!50!black] (l81) at (90:1.7) {\tiny $\mathsf{M}$};
\node[green!50!black] (l82) at (90:0.8) {\tiny $\mathsf{\emptyset}$};
\draw[very thick,dotted,red,->] (o8) to [bend right=20] (8);
\draw[very thick,red,->] (8) to [bend right=20] (o8);
\end{scope}

\draw[blue,dotted,->] (o7) to [bend right=60] (oc);
\draw[blue,dotted,->] (o6) to [bend left=60] (oc);
\draw[blue,dotted,->] (o8) to [bend left=45] (oc);

\end{scope}
\end{scope}

\begin{scope}[shift=(288:5)]
\draw[fill=black!5, color=black!5] (-0.5,-0.5) rectangle (4.5,0.5);

\node[red] (od) at (0,0) {$d$};
\node[blue,above] (star) at (0,-0.2) {$\star$};

\begin{scope}[rotate=0]
\node[] (o2) at (0:2) {$2'$};
\node[] (o3) at (0:4) {$3'$};

\draw[very thick,dotted,green!50!black,->] (d2) to [bend right=20] (od);
\draw[very thick,green!50!black,->] (od) to [bend right=20] (d2);

\draw[dashed,->] (o2) to [bend left=15] (od);
\draw[dashed,->] (od) to [bend left=15] (o2);
\draw[dashed,->] (o3) to [bend left=15] (o2);
\draw[dashed,->] (o2) to [bend left=15] (o3);

\begin{scope}[shift=(0:2)]
\node[] (2) at (90:2.5) {$2$};
\node[green!50!black] (l21) at (90:1.7) {\tiny $\mathsf{M}$};
\node[green!50!black] (l22) at (90:0.8) {\tiny $\mathsf{\emptyset}$};
\draw[very thick,dotted,red,->] (o2) to [bend right=20] (2);
\draw[very thick,red,->] (2) to [bend right=20] (o2);
\end{scope}
%
\begin{scope}[shift=(0:4)]
\node[green!50!black]  (3) at (90:2.5) {$3$};
\node[green!50!black] (l31) at (90:1.7) {\tiny $\mathsf{M}$};
\node[green!50!black] (l32) at (90:0.8) {\tiny $\mathsf{P}$};
\draw[very thick,dotted,red,->] (o3) to [bend right=20] (3);
\draw[very thick,red,->] (3) to [bend right=20] (o3);
\end{scope}

\draw[blue,dotted,->] (o2) to [bend left=30] (od);
\draw[blue,dotted,->] (o3) to [bend left=30] (od);

\end{scope}
\end{scope}

\begin{scope}[shift=(252:5)]
\draw[fill=black!5, color=black!5] (0.5,-0.5) rectangle (-4.5,0.5);
\node[] (oe) at (0,0) {$e'$};
\node[blue] (star) at (-2,0.5) {$\star$};

\draw[very thick,dotted,red,->] (oe) to [bend right=20] (e1);
\draw[very thick,red,->] (e1) to [bend right=20] (oe);

\begin{scope}[rotate=0]
\node[red] (o1) at (180:2) {$1'$};
\node[] (o4) at (180:4) {$4'$};

\draw[dashed,->] (o4) to [bend left=15] (o1);
\draw[dashed,->] (o1) to [bend left=15] (o4);
\draw[dashed,->] (o1) to [bend left=15] (oe);
\draw[dashed,->] (oe) to [bend left=15] (o1);

\begin{scope}[shift=(0:-2)]
\node[red] (1) at (270:2.5) {$1$};
\node[green!50!black] (l11) at (270:1.7) {\tiny $\mathsf{M}$};
\node[green!50!black] (l12) at (270:0.8) {\tiny $\mathsf{\emptyset}$};
\draw[very thick,red,->] (o1) to [bend right=20] (1);
\draw[very thick,dotted,red,->] (1) to [bend right=20] (o1);
\end{scope}
%
\begin{scope}[shift=(0:-4)]
\node[] (4) at (90:2.5) {$4$};
\node[green!50!black] (l41) at (90:1.7) {\tiny $\mathsf{M}$};
\node[green!50!black] (l42) at (90:0.8) {\tiny $\mathsf{\emptyset}$};
\draw[very thick,dotted,red,->] (o4) to [bend right=20] (4);
\draw[very thick,red,->] (4) to [bend right=20] (o4);
\end{scope}

\draw[blue,dotted,->] (oe) to [bend right=60] (o1);
\draw[blue,dotted,->] (o4) to [bend left=60] (o1);

\end{scope}
\end{scope}

\end{scope}

\end{tikzpicture}
\end{center}
\caption{A circle graph $G'=G+10$ with vertex $10$ being a good vertex of $G'$ adjacent to $S=\{3,5,7,9\}$, a chord diagram of $G'$ and the split PC-tree $\PC(G)$ marked with respect to $\{3,5,7,9\}$. Every marker vertex is assigned green tags: $\mathsf{P}$ if it is perfect; $\mathsf{M}$ if it is mixed; $\emptyset$ otherwise. The two green tree-edges are mixed and $\PC(G)$ does not contain any hybrid node. Thereby endpoint $e$ is marked empty since $L(e)=\{1,4\}$ does not intersect $S$; the endpoint $a$ is marked perfect since $L(e)=\{9\}=A(e)\subseteq S$; and the endpoint $c$ is marker mixed sinsce $L(c)=\{6,7,8\}=A(c)$ while $S\cap L(c)\neq A(c)$.
\label{fig_marked_PCtree} 
}
\end{figure}
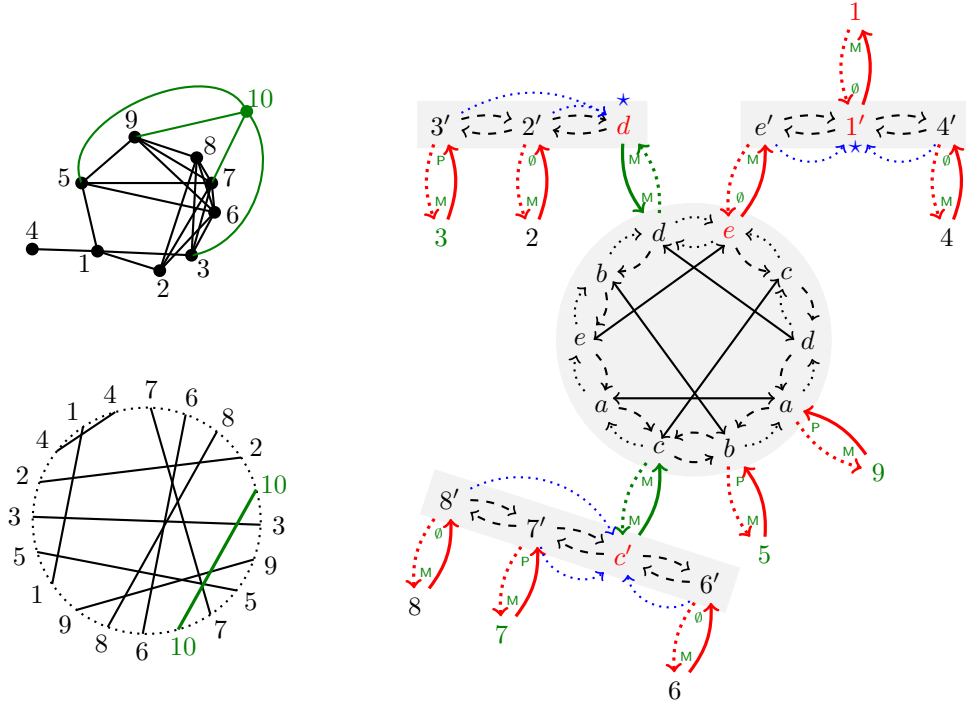

A node~$u$ of $T$ is \emph{hybrid} if every marker vertex~$q$ of $u$ has the property that $q$ is perfect or empty and its opposite is mixed.  A tree-edge is \emph{mixed} if both its extremities are mixed. It can be proved~\cite{GioanPTC14Split} that the subtree $T'$ of $T$ defined by the mixed tree-edges is unique and connected, it is called the \emph{fully-mixed subtree} of $\ST(G)$. The following lemma is central to the correctness of the circle graph recognition algorithm:

\begin{lemma}[\cite{GioanPTC14Circle}] \label{lem_fully_mixed}
Let $G'=G+x$ be a prime graph such that $x$ is a good vertex of $G'$ and $G$ is a circle graph. Then $G'$ is a circle graph if and only if for every node $u$ in $\ST(G)$, marked with respect to $N_{G'}(x)$, $G(u)$ has a chord diagram in which $MP(u)$ is consecutive, with the mixed marker vertices being bookends.
\end{lemma}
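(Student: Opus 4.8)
The plan is to first establish the easy direction and then work on the substantive one. For the "if" direction, suppose that every marked node $u$ of $\ST(G)$ has a chord diagram $D_u$ in which $MP(u)$ is consecutive with the mixed marker vertices as bookends. I would reconstruct a chord diagram of $G'$ by repeatedly applying \autoref{lem_consecutive_join} along the fully-mixed subtree $T'$: since the mixed tree-edges form a connected subtree, I can root $T'$ arbitrarily and perform node-joins from the leaves of $T'$ inward, each time using the two mixed marker vertices of a node as bookends of the consecutive set $MP(u)$. \autoref{lem_consecutive_join} guarantees that after each join the image of $MP$ (minus the two joined marker vertices) remains consecutive with the surviving mixed marker vertices as bookends, so after contracting all of $T'$ we obtain a single node whose label graph is a circle graph and in which the set of leaves of $G'$ reachable through the former mixed marker vertices — which is exactly $S = N_{G'}(x)$ — is consecutive. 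Then \autoref{lem_chord_insertion} (or rather \autoref{lem_good_vertex} combined with \autoref{obs_join_circle}) lets us insert the chord $x$, giving a chord diagram of $G'$; hence $G'$ is a circle graph.

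For the "only if" direction, suppose $G'$ is a circle graph. Since $G' = G+x$ is prime and $x$ is good, \autoref{lem_good_vertex} gives a chord diagram $D'$ of $G'$ in which $N(x) = S$ is consecutive; removing the chord $x$ yields a chord diagram $D$ of $G$ in which $S$ is consecutive, with the two endpoints formerly adjacent to the endpoints of $x$ serving as bookends. Now I would push this consecutiveness down through the split structure: for each node $u$ of $\ST(G)$, I apply \autoref{lem_LBFS_hereditary} and the projection of $D$ onto the label graph $G(u)$ (obtained by the inverse of the node-join operations, i.e.\ by successively splitting $D$ according to the tree-edges of $\ST(G)$). The key bookkeeping claim is that under this projection, a marker vertex $q$ of $u$ is perfect precisely when the factor of $D$ certifying $S$-consecutiveness contains all of $A(q)$ and no leaf outside, empty when it contains none of $L(q)$, and mixed otherwise — so that $MP(u)$ is exactly the set of marker vertices of $u$ whose associated block of endpoints lies (wholly or partially) inside the consecutive factor. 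Consecutiveness of $S$ in $D$ then forces $MP(u)$ to be consecutive in the projected diagram of $G(u)$, and the mixed marker vertices, being the ones that straddle the boundary, must be the bookends.

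The main obstacle I expect is the second direction's bookkeeping: one has to argue carefully that splitting a chord diagram along a tree-edge of $\ST(G)$ both (a) preserves consecutiveness of the relevant induced set and (b) translates the global perfect/empty/mixed classification into the correct local one on each $G(u)$. In particular, one must verify that the two bookends of the $S$-certifying factor, after projection, land on exactly the mixed marker vertices of $u$ and not on perfect ones — this uses that a marker vertex is mixed iff its leaf-set $A(q)$ is split by the factor boundary, whereas a perfect or empty marker vertex has $A(q)$ lying entirely on one side. I would handle this by induction on the number of nodes of $\ST(G)$, splitting off one pendant node at a time via the node-split operation and invoking \autoref{lem_consecutive_join_diagram} to control which of the (up to four) possible circle-joins realizes the consecutiveness, then applying the induction hypothesis to the smaller split-tree. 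The primeness of $G'$ is used only to invoke \autoref{lem_good_vertex} cleanly and to ensure the target chord diagram behaves rigidly; the degenerate-node cases are where $P^*(u)$ and $E^*(u)$ enter, and those I would treat separately using the explicit chord-diagram descriptions of cliques and stars given earlier.
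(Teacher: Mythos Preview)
Your proposal is correct and follows essentially the same route as the paper's own discussion: for the forward direction, start from the unique chord diagram of the prime circle graph $G'$ (via \autoref{th_unique_prime} and \autoref{lem_good_vertex}), delete the chord $x$, and push consecutiveness down to each node by iterated circle-splits; for the converse, rebuild a chord diagram of $G$ by iterated circle-joins preserving consecutiveness and then insert the chord for $x$. Two small points: the paper does not give a full proof here (the lemma is cited from~\cite{GioanPTC14Circle}) but only the intuition paragraph right after the statement, which your plan matches; and for the converse you should invoke \autoref{lem_consecutive_join_diagram} (the chord-diagram statement) rather than \autoref{lem_consecutive_join} (the \CSC{} version), and the joins must be performed over \emph{all} tree-edges of $\ST(G)$, not only those of the fully-mixed subtree, to recover a chord diagram of $G$ itself.
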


Let us discuss the above statement and provide some intuition of its correctness. First if $G'$ is a circle graph, then by \autoref{th_unique_prime}, it has a chord diagram~$D'$, which is unique up to reversal. Since $x$ is good, by \autoref{lem_good_vertex}, $N_{G'}(x)$ is consecutive in $D'$. Observe that $N_{G'}(x)$ is still consecutive in the chord diagram $D$ of $G$ obtained by removing the chord $x$ from $D'$. However, $G$ may not be a prime graph. If so we may retrieve the chord diagrams of every prime node of $\ST(G)$ by performing a series of \emph{circle-split} operations, the reverse operation of circle-join. When doing so, we can observe that the consecutive property is preserved as indicated in \cref{lem_fully_mixed}.
Conversely, assume that every node of $\ST(G)$ has a chord diagram satisfying the property stated in \cref{lem_fully_mixed}. Then by \cref{lem_consecutive_join_diagram}, iteratively performing for every tree-edge $uv$, a circle-join between the chord diagrams $D(u)$ and $D(v)$ with respect to the extremities of the tree-edge $uv$ eventually returns a chord diagram $D$ of $G$ in which $N_{G'}(x)$ is consecutive. It is then possible to insert in $D$ a chord $x$ crossing exactly $N_{G'}(x)$, certifying that $G'=G+x$ is a circle graph.

As announced earlier, the recognition algorithm iteratively inserts vertices according to a LexBFS ordering. Of course, for the sake of time complexity, we cannot afford at each insertion step contracting $\ST(G)$ in a single node. \cref{lem_fully_mixed} allows us to maintain a chord diagram for each prime node providing local certificates of membership to the class of circle graphs.

\subsection{Updating the split-tree}
\label{sub_split_tree_update}

Gioan \etal~\cite[Theorem 4.14, Proposition 4.15--4.20]{GioanPTC14Split} show that exactly one of the following cases occurs and in each case describe how to obtain the split-tree $\ST(G')$, with $G'=G+x$, from $\ST(G)=(T,\F)$.

\begin{enumerate}
\item \emph{$T$ contains a clique node $u$ whose marker vertices are all perfect; or $T$ contains a star node $u$ whose marker vertices are all empty except its center, which is perfect; or $T$ contains a unique hybrid node~$u$, which is prime.}

\smallskip
In each of these cases, the node~$u$ is unique and the split-tree $\ST(G')$ is obtained by adding to $T$ a leaf $\ell_x$ adjacent to $u$, adding to $G(u)$ a new marker vertex~$q_x$ adjacent to $P(u)$ and opposite to $\ell_x$.

We observe that the type of the updated node $u$ is the same as in $\ST(G)$. This implies that in the case $u$ is degenerate, by \autoref{theo_circle_split}, $G+x$ is a circle graph. So let us discuss the case that $u$ is a prime node. It is then represented in $\PC(G)$ by a CSC $\C(u)$ encoding its chord diagram, which is unique. In $\C(u)$, for $G+x$ to be a circle graph, $MP(u)$ has to be consecutive with the mixed marker vertices being the bookends (see~\autoref{lem_fully_mixed}). If so, adding $q_x$ consists in inserting a new chord whose extremities become the new bookends of $P(u)$.
Otherwise, we can conclude that $G+x$ is not a circle graph. 

\smallskip
\item \emph{$T$ contains a tree-edge~$e$ whose extremities are both perfect, and this edge is unique; or $T$ contains a tree-edge with one perfect and one empty extremity, and this edge is unique.}

\smallskip
In this case the tree-edge~$e$ is unique and $\ST(G')$ is obtained by: i) subdividing $e$ with a new node $u_x$; ii) adding a leaf $\ell_x$ adjacent to $u_x$; and iii) labelling $u_x$ by a clique of size $3$ (if both extremities of $e$ are perfect) or a star on two leaves whose center is opposite to the extremity of $e$ that is empty. Observe that since the new node $u_x$ is degenerate, by \autoref{theo_circle_split}, $G+x$ is a circle graph.

\smallskip
\item \emph{$T$ contains a hybrid node $u$, and this node is degenerate.}

\smallskip
In this case the node~$u$ is unique and $\ST(G')$ is obtained as follows. First perform a node-split on $u$ according to the split $(P^*(u),E^*(u))$, creating a new tree-edge~$e$, whose extremities are both perfect or one is perfect and the other is empty. Then $e$ can be processed as in the previous case. Again, since the modified nodes and the new node are degenerate, by \autoref{theo_circle_split}, $G+x$ is a circle graph.

\smallskip
\item \emph{$T$ contains a (unique) fully mixed subtree $M$.}

\smallskip
In this case, the fully mixed subtree is first cleaned and then contracted into a single node~$v$ by performing node-joins. When finally adding $x$, the graph $G(v)$ becomes a prime graph. 
Along the series of the node-joins, we need to compute chord diagrams that preserve the consecutive property of $N_{G'}(x)$ (see \autoref{lem_consecutive_join}). More precisely, the split-tree  $\ST(G')$ is obtained in three steps as follows: 
\smallskip
\begin{itemize}
\item[(i)] First, we clean the fully mixed subtree by performing on each degenerate node $u$ of $M$ a node-split with respect to the splits $(P^*(u),V(u)\setminus P^*(u))$ and/or $(E^*(u),V(u)\setminus E^*(u))$.

\smallskip
The resulting \GLT{} is denoted $\cl(\ST(G))$. Observe that the set of mixed tree-edges is left unchanged and thereby $M$ can still be abusively considered as the fully mixed subtree of $\cl(\ST(G))$. The difference with $\ST(G)$ is now that every degenerate node of $M$ contains at most one perfect and at most one empty marker (see \autoref{fig_cleaned_PCtree}).
Moreover by \autoref{lem_fully_mixed}, if $G+x$ is a circle graph, every node $u$ of $M$ has at most two mixed marker vertices which form the bookends of the factor certifying the consecutiveness of $MP(u)$. It follows that the degenerate nodes of $\cl(\ST(G))$ have bounded degree and each of them can be equipped with the accurate CSC in constant time. Otherwise, we can conclude that $G+x$ is not a circle graph.

\begin{figure}[h]
\begin{center}
\bigskip
\begin{tikzpicture}[thick,scale=0.65]
\tikzstyle{sommet}=[circle, draw, fill=black, inner sep=0pt, minimum width=4pt]
\tikzstyle{newsommet}=[circle, draw, fill=green!50!black, inner sep=0pt, minimum width=4pt]

\begin{scope}[xshift=-6cm,yshift=3cm]

\node[] (5) at (90+360/5:1.5) {} ;
\draw[] (5) node[sommet]{};
\node[] (55) at (90+360/5:1.9) {$5$};

\node[] (1) at (80+2*360/5:1.5) {} ;
\draw[] (1) node[sommet]{};
\node[] (11) at (80+2*360/5:1.9) {$1$};

\node[] (4) at (85+1.62*360/5:2.7) {} ;
\draw[] (4) node[sommet]{};
\node[above] (44) at (85+1.62*360/5:2.7) {$4$};

\node[] (2) at (65+3*360/5:1.5) {} ;
\draw[] (2) node[sommet]{};
\node[] (22) at (65+3*360/5:1.9) {$2$};

\node[] (3) at (95+3*360/5:1.5) {} ;
\draw[] (3) node[sommet]{};
\node[] (33) at (95+3*360/5:1.9) {$3$};

\node[] (6) at (65+4*360/5:1.5) {} ;
\draw[] (6) node[sommet]{};
\node[] (66) at (65+4*360/5:1.9) {$6$};

\node[] (7) at (90+4*360/5:1.5) {} ;
\draw[] (7) node[sommet]{};
\node[] (77) at (90+4*360/5:1.9) {$7$};

\node[] (8) at (115+4*360/5:1.5) {} ;
\draw[] (8) node[sommet]{};
\node[] (88) at (115+4*360/5:1.9) {$8$};

\node[] (9) at (100:1.5) {} ;
\draw[] (9) node[sommet]{};
\node[] (99) at (100:1.9) {$9$};

\node[] (10) at (115+4*360/5:3) {} ;
\draw[green!50!black] (10) node[newsommet]{};
\node[green!50!black] (100) at (115+4*360/5:3.4) {$10$};
\draw[green!50!black] (10.center) -- (9.center);
\draw[green!50!black] (10.center) -- (7.center);
\draw[green!50!black] (10.center) to [bend left=60] (3.center);
\draw[green!50!black] (10.center) to [bend right=90] (5.center);

\draw (1.center) -- (2.center) ;
\draw (1.center) -- (3.center) ;
\draw (1.center) -- (4.center) ;
\draw (1.center) -- (5.center) ;
\draw (2.center) -- (6.center) ;
\draw (2.center) -- (7.center) ;
\draw (2.center) -- (8.center) ;
\draw (3.center) -- (6.center) ;
\draw (3.center) -- (7.center) ;
\draw (3.center) -- (8.center) ;
\draw (5.center) -- (6.center) ;
\draw (5.center) -- (7.center) ;
\draw (5.center) -- (9.center) ;
\draw (6.center) -- (7.center) ;
\draw (6.center) -- (8.center) ;
\draw (6.center) -- (9.center) ;
\draw (7.center) -- (8.center) ;
\draw (7.center) -- (9.center) ;
\draw (8.center) -- (9.center) ;

\end{scope}

\begin{scope}[xshift=-6cm,yshift=-4cm,rotate=160]
\draw[dotted] (0,0) circle [radius=2.5cm] ;

\foreach \i in {1,...,16}{
	\node (v\i) at (\i*18:2.5){} ;
	}

\node (11) at (-17*18:2.9){$1$};
\node (12) at (-2*18:2.9){$1$};
\draw (-17*18:2.5) -- (-2*18:2.5);

\node (21) at (-20*18:2.9){$2$};
\node (22) at (-7*18:2.9){$2$};
\draw (-20*18:2.5) -- (-7*18:2.5);

\node (31) at (-19*18:2.9){$3$};
\node (32) at (-9*18:2.9){$3$};
\draw (-19*18:2.5) -- (-9*18:2.5);

\node (41) at (-1*18:2.9){$4$};
\node (42) at (-3*18:2.9){$4$};
\draw (-1*18:2.5) -- (-3*18:2.5);

\node (51) at (-11*18:2.9){$5$};
\node (52) at (-18*18:2.9){$5$};
\draw (-11*18:2.5) -- (-18*18:2.5);

\node (61) at (-14*18:2.9){$6$};
\node (62) at (-5*18:2.9){$6$};
\draw (-14*18:2.5) -- (-5*18:2.5);

\node (71) at (-12*18:2.9){$7$};
\node (72) at (-4*18:2.9){$7$};
\draw (-12*18:2.5) -- (-4*18:2.5);

\node (81) at (-15*18:2.9){$8$};
\node (82) at (-6*18:2.9){$8$};
\draw (-15*18:2.5) -- (-6*18:2.5);

\node (91) at (-10*18:2.9){$9$};
\node (92) at (-16*18:2.9){$9$};
\draw (-10*18:2.5) -- (-16*18:2.5);

\node[green!50!black] (101) at (-13*18:2.9){$10$};
\node[green!50!black]  (102) at (-8*18:2.9){$10$};
\draw[green!50!black,very thick]  (-8*18:2.5) -- (-13*18:2.5);

\end{scope}

\begin{scope}[xshift=6cm,yshift=0cm,rotate=180]

\draw[fill=black!5, color=black!5] (0,0) circle [radius=3cm];

\node[green!50!black] (la1) at (90-18+2*36:3.3) {\tiny $\mathsf{P}$};
\node[green!50!black] (la2) at (90-18+2*36:4.2) {\tiny $\mathsf{M}$};
\node[] (a1) at (90-18+2*36:2.5) {$a$} ;
\node[] (a2) at (90-18+9*36:2.5) {$a$} ;
\node[] (aa1) at (90-18+2*36:2.2) {};
\node[] (aa2) at (90-18+9*36:2.2) {};
\draw[<->] (aa1.center) -- (aa2.center) ;

\node[green!50!black] (lb1) at (90-18+36:3.3) {\tiny $\mathsf{P}$};
\node[green!50!black] (lb2) at (90-18+36:4.2) {\tiny $\mathsf{M}$};
\node[] (b1) at (90-18+36:2.5) {$b$} ;
\node[] (b2) at (90-18+7*36:2.5) {$b$} ;
\node[] (bb1) at (90-18+36:2.2) {};
\node[] (bb2) at (90-18+7*36:2.2) {};
\draw[<->] (bb1.center) -- (bb2.center) ;

\node[green!50!black] (lc1) at (90-18:3.3) {\tiny $\mathsf{M}$};
\node[green!50!black] (lc2) at (90-18:4.2) {\tiny $\mathsf{M}$};
\node[] (c1) at (90-18:2.5) {$c$} ;
\node[] (c2) at (90-18+4*36:2.5) {$c$} ;
\node[] (cc1) at (90-18:2.3) {};
\node[] (cc2) at (90-18+4*36:2.3) {};
\draw[<->] (cc1.center) -- (cc2.center) ;

\node[green!50!black] (ld1) at (90-18+6*36:3.3) {\tiny $\mathsf{M}$};
\node[green!50!black] (ld2) at (90-18+6*36:4.2) {\tiny $\mathsf{M}$};
\node[] (d1) at (90-18+3*36:2.5) {$d$} ;
\node[] (d2) at (90-18+6*36:2.5) {$d$} ;
\node[] (dd1) at (90-18+3*36:2.2) {};
\node[] (dd2) at (90-18+6*36:2.2) {};
\draw[<->] (dd1.center) -- (dd2.center) ;

\node[green!50!black] (le1) at (90-18+5*36:3.3) {\tiny $\mathsf{\emptyset}$};
\node[green!50!black] (le2) at (90-18+5*36:4.2) {\tiny $\mathsf{M}$};
\node[red] (e1) at (90-18+5*36:2.5) {$e$} ;
\node[] (e2) at (90-18+8*36:2.5) {$e$} ;
\node[] (ee1) at (90-18+5*36:2.2) {};
\node[] (ee2) at (90-18+8*36:2.2) {};
\draw[<->] (ee1.center) -- (ee2.center) ;

\draw[dashed,->] (a1) to [bend right=30] (b1);
\draw[dashed,->] (a2) to [bend left=30] (c1);
\draw[dotted,->] (a1) to [bend right=30] (d1);
\draw[dotted,->] (a2) to [bend left=30] (e2);

\draw[dashed,->] (b1) to [bend right=30] (c1);
\draw[dashed,->] (b2) to [bend left=30] (e2);
\draw[dotted,->] (b1) to [bend right=30] (a1);
\draw[dotted,->] (b2) to [bend left=30] (d2);

\draw[dashed,->] (c1) to [bend right=30] (b1);
\draw[dashed,->] (c2) to [bend left=30] (d1);
\draw[dotted,->] (c1) to [bend left=30] (a2);
\draw[dotted,->] (c2) to [bend right=30] (e1);

\draw[dashed,->] (d1) to [bend right=30] (a1);
\draw[dashed,->] (d2) to [bend left=30] (b2);
\draw[dotted,->] (d1) to [bend left=30] (c2);
\draw[dotted,->] (d2) to [bend left=30] (e1);

\draw[dashed,->] (e1) to [bend right=30] (c2);
\draw[dashed,->] (e2) to [bend left=30] (a2);
\draw[dotted,->] (e1) to [bend left=30] (d2);
\draw[dotted,->] (e2) to [bend left=30] (b2);

\node[green!50!black]  (9) at (72+2*36:5) {$9$} ;
\draw[very thick,dotted,red,->] (a1) to [bend right=20] (9);
\draw[very thick,red,->] (9) to [bend right=20] (a1);

\node[green!50!black] (5) at (90-18+36:5) {$5$} ;
\draw[very thick,dotted,red,->] (b1) to [bend right=20] (5);
\draw[very thick,red,->] (5) to [bend right=20] (b1);

\begin{scope}[shift=(72:5),rotate=-18]

\begin{scope}[xshift=-2cm,yshift=2.5cm]
\draw[fill=black!5, color=black!5] (-2.5,-0.5) rectangle (2.5,0.5);

\node[red] (oqq) at (0,0) {$q'$};

\begin{scope}[]
\node[] (o6) at (180:2) {$6'$};
\node[] (o8) at (0:2) {$8'$};

\draw[dashed,->] (o6) to [bend left=15] (oqq);
\draw[dashed,->] (oqq) to [bend left=15] (o6);
\draw[dashed,->] (o8) to [bend left=15] (oqq);
\draw[dashed,->] (oqq) to [bend left=15] (o8);

\begin{scope}[shift=(180:2)]
\node[] (6) at (90:2.5) {$6$};
\node[green!50!black] (l61) at (90:1.7) {\tiny $\mathsf{M}$};
\node[green!50!black] (l62) at (90:0.8) {\tiny $\mathsf{\emptyset}$};
\draw[very thick,dotted,red,->] (o6) to [bend right=20] (6);
\draw[very thick,red,->] (6) to [bend right=20] (o6);
\end{scope}

\begin{scope}[shift=(0:2)]
\node[] (8) at (90:2.5) {$8$};
\node[green!50!black] (l81) at (90:1.7) {\tiny $\mathsf{M}$};
\node[green!50!black] (l82) at (90:0.8) {\tiny $\mathsf{\emptyset}$};
\draw[very thick,dotted,red,->] (o8) to [bend right=20] (8);
\draw[very thick,red,->] (8) to [bend right=20] (o8);
\end{scope}

\draw[blue,dotted,->] (o6) to [bend left=60] (oqq);
\draw[blue,dotted,->] (o8) to [bend right=60] (oqq);

\node[green!50!black] (l61) at (270:1.7) {\tiny $\mathsf{\emptyset}$};
\node[green!50!black] (l62) at (270:0.8) {\tiny $\mathsf{M}$};

\end{scope}
\end{scope}

\draw[fill=black!5, color=black!5] (-2.5,-0.5) rectangle (2.5,0.5);

\node[red] (oc) at (0,0) {$c'$};
\draw[very thick,dotted,green!50!black,->] (c1) to [bend right=20] (oc);
\draw[very thick,green!50!black,->] (oc) to [bend right=20] (c1);

\begin{scope}[]
\node[] (oq) at (180:2) {$q$};
\node[] (o7) at (0:2) {$7'$};

\draw[dashed,->] (oq) to [bend left=15] (oc);
\draw[dashed,->] (oc) to [bend left=15] (oq);
\draw[dashed,->] (o7) to [bend left=15] (oc);
\draw[dashed,->] (oc) to [bend left=15] (o7);

\begin{scope}[shift=(0:2)]
\node[green!50!black]  (7) at (90:2.5) {$7$};
\node[green!50!black] (l71) at (90:1.7) {\tiny $\mathsf{M}$};
\node[green!50!black] (l72) at (90:0.8) {\tiny $\mathsf{P}$};
\draw[very thick,dotted,red,->] (o7) to [bend right=20] (7);
\draw[very thick,red,->] (7) to [bend right=20] (o7);
\end{scope}

\draw[blue,dotted,->] (o7) to [bend right=60] (oc);
\draw[blue,dotted,->] (oq) to [bend left=60] (oc);

\end{scope}

\draw[very thick,dotted,red,->] (oq) to [bend right=20] (oqq);
\draw[very thick,red,->] (oqq) to [bend right=20] (oq);

\end{scope}

\begin{scope}[shift=(288:5)]
\draw[fill=black!5, color=black!5] (-0.5,-0.5) rectangle (4.5,0.5);

\node[red] (od) at (0,0) {$d$};
\node[blue,above] (star) at (0,-0.2) {$\star$};

\begin{scope}[rotate=0]
\node[] (o2) at (0:2) {$2'$};
\node[] (o3) at (0:4) {$3'$};

\draw[very thick,dotted,green!50!black,->] (d2) to [bend right=20] (od);
\draw[very thick,green!50!black,->] (od) to [bend right=20] (d2);

\draw[dashed,->] (o2) to [bend left=15] (od);
\draw[dashed,->] (od) to [bend left=15] (o2);
\draw[dashed,->] (o3) to [bend left=15] (o2);
\draw[dashed,->] (o2) to [bend left=15] (o3);

\begin{scope}[shift=(0:2)]
\node[] (2) at (90:2.5) {$2$};
\node[green!50!black] (l21) at (90:1.7) {\tiny $\mathsf{M}$};
\node[green!50!black] (l22) at (90:0.8) {\tiny $\mathsf{\emptyset}$};
\draw[very thick,dotted,red,->] (o2) to [bend right=20] (2);
\draw[very thick,red,->] (2) to [bend right=20] (o2);
\end{scope}
%
\begin{scope}[shift=(0:4)]
\node[green!50!black]  (3) at (90:2.5) {$3$};
\node[green!50!black] (l31) at (90:1.7) {\tiny $\mathsf{M}$};
\node[green!50!black] (l32) at (90:0.8) {\tiny $\mathsf{P}$};
\draw[very thick,dotted,red,->] (o3) to [bend right=20] (3);
\draw[very thick,red,->] (3) to [bend right=20] (o3);
\end{scope}

\draw[blue,dotted,->] (o2) to [bend left=30] (od);
\draw[blue,dotted,->] (o3) to [bend left=30] (od);

\end{scope}
\end{scope}

\begin{scope}[shift=(252:5)]
\draw[fill=black!5, color=black!5] (0.5,-0.5) rectangle (-4.5,0.5);
\node[] (oe) at (0,0) {$e'$};
\node[blue] (star) at (-2,0.5) {$\star$};

\draw[very thick,dotted,red,->] (oe) to [bend right=20] (e1);
\draw[very thick,red,->] (e1) to [bend right=20] (oe);

\begin{scope}[rotate=0]
\node[red] (o1) at (180:2) {$1'$};
\node[] (o4) at (180:4) {$4'$};

\draw[dashed,->] (o4) to [bend left=15] (o1);
\draw[dashed,->] (o1) to [bend left=15] (o4);
\draw[dashed,->] (o1) to [bend left=15] (oe);
\draw[dashed,->] (oe) to [bend left=15] (o1);

\begin{scope}[shift=(0:-2)]
\node[red] (1) at (270:2.5) {$1$};
\node[green!50!black] (l11) at (270:1.7) {\tiny $\mathsf{M}$};
\node[green!50!black] (l12) at (270:0.8) {\tiny $\mathsf{\emptyset}$};
\draw[very thick,red,->] (o1) to [bend right=20] (1);
\draw[very thick,dotted,red,->] (1) to [bend right=20] (o1);
\end{scope}
%
\begin{scope}[shift=(0:-4)]
\node[] (4) at (90:2.5) {$4$};
\node[green!50!black] (l41) at (90:1.7) {\tiny $\mathsf{M}$};
\node[green!50!black] (l42) at (90:0.8) {\tiny $\mathsf{\emptyset}$};
\draw[very thick,dotted,red,->] (o4) to [bend right=20] (4);
\draw[very thick,red,->] (4) to [bend right=20] (o4);
\end{scope}

\draw[blue,dotted,->] (oe) to [bend right=60] (o1);
\draw[blue,dotted,->] (o4) to [bend left=60] (o1);

\end{scope}
\end{scope}

\end{scope}

\end{tikzpicture}
\end{center}
\caption{A circle graph $G'=G+10$ with vertex $10$ being a good vertex of $G'$ adjacent to $\{3,5,7,9\}$, a chord diagram of $G'$ and the cleaned split PC-tree $\cl(\PC(G))$ marked with respect to $\{3,5,7,9\}$ \label{fig_cleaned_PCtree} 
}
\end{figure}
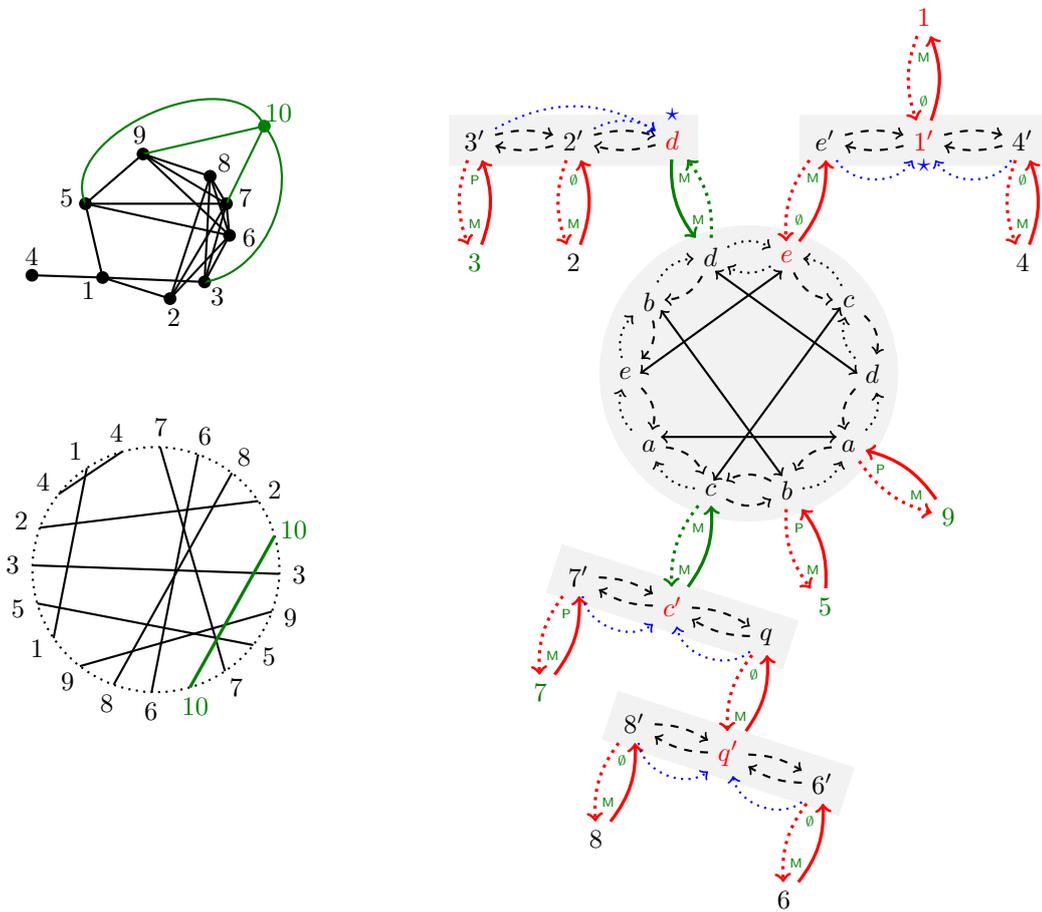

\smallskip
\item[(ii)] Then, we contract $M$ into a single node $v$ by performing for every mixed tree-edge $e=uu'$ in $M$ a node-join on $u$ and $u'$.
 
 \smallskip
As discussed above, each node-join has to return a chord diagram that preserves the consecutive property of $N_{G'}(x)$. Testing if such a chord diagram exists and computing it, can be done efficiently using the \CSC{} data-structure (see \autoref{lem_consecutive_join}). If there is no such chord diagram, then the algorithm stops and concludes that the graph $G+x$ is not a circle graph.

\smallskip
\item[(iii)] Finally, in the contracted \GLT{}, we attach a leaf $\ell_x$ adjacent to node $v$; and we add to $G(v)$ a new marker $q_x$ adjacent to the marker vertices of $P(u)$ and opposite to $\ell_x$.

\smallskip
In this last step, adding $q_x$ consists in inserting a new chord in the chord diagram of $G(v)$, which is possible since it satisfies the consecutive property of $N_{G'}(x)$ (see \autoref{lem_chord_insertion}). 
\end{itemize}
\end{enumerate}

\subsection{Recognition algorithm}
\label{sub_recognition}

Given that $\ST(G)$ (or $\PC(G)$) has been correctly marked with respect to $N_{G'}(x)$, where $G'=G+x$, the correctness of the updating algorithm described in \autoref{sub_split_tree_update} is proved by Gioan \emph{et al.}~\cite{GioanPTC14Circle} and has been discussed along its description. However, its complexity analysis relies on the split PC-tree data-structure. 
To achieve an overall linear running time, we implement the updating algorithm in such a way that each vertex $x$ is processed in amortized~$O(|N_{G'}(x)|)$ time.  Compared to the works of Gioan et al.~\cite{GioanPTC14Split,GioanPTC14Circle}, the main challenge is that, given a marker vertex~$q$ that belongs to a prime node~$u$, our split PC-tree data structure does not allow to efficiently determine the parent of~$u$.  Using the fact that such nodes are prime, we have a chord diagram for them, and~$x$ is inserted according to a LexBFS order allows us to anyway determine the parent without sacrificing the running time at least in all cases where this information is needed.  Using the information acquired in the first step, the second step can be implemented in much the same way as in the work of Gioan et al.~\cite{GioanPTC14Circle}.  The major difference is that, in contrast to the data structure they use, ours allows to implement the contraction of fully mixed edges in Case 4 in constant time as there is no need to propagate the parent pointer information.
Let us recall that, given a graph $G'=G+x$ such that $x$ is a good vertex and $G$ is a circle graph, and given the split PC-tree $\PC(G)$ encoding $ST(G)=(T,\F)$,
 the insertion algorithm works in two steps:
\smallskip
\begin{enumerate}
\item compute $T(N_{G'}(x))$, the minimal subtree of $T$ covering $N_{G'}(x)$;
\item identify the cases according to the description of \autoref{sub_split_tree_update} and update the split-tree correspondingly.
\end{enumerate}

We start discussing the implementation of the first step. For the sake of complexity efficiency, as stated by \autoref{lem_compute_subtree}, if $G'$ is a circle graph, we guarantee to compute $T(N_{G'}(x))$ in the expected time complexity. Otherwise, the algorithm may be able to conclude, already at that step, that 
$G'=G+x$ is not a circle graph. The reader has to keep in mind that if $\PC(G)$ encodes the split-tree $\ST(G)=(T,\mathcal{F})$, the tree $T$ is not explicitly stored in $\PC(G)$ since we are lacking node objects to represent the prime nodes of $\ST(G)$. However the tree-edges of $T$ are present in $\PC(G)$. The algorithm identifies these tree-edges.

\begin{lemma} \label{lem_compute_subtree}
Let $G'=G+x$ be a graph such that $x$ is a good vertex of $G'$ and $G$ is a circle graph.
Let $\PC(G)$ be the split PC-tree of $G$ encoding the split-tree $\ST(G)=(T,\mathcal{F})$. 
In time $O(|T(N_{G'}(x))|)$, we can either compute $T(N_{G'}(x))$ or conclude that $G'$ is not a circle graph.
\end{lemma}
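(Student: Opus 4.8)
The plan is to construct $T(S)$, with $S:=N_{G'}(x)$, by a single bottom-up exploration starting from the leaves of $T$ that lie in $S$, charging every elementary operation to an edge or a node of $T(S)$. Recall that $\PC(G)$ is rooted at a fixed leaf $r$; we may assume $r\notin S$ (the case $r\in S$ is simpler). For each $\ell\in S$ the unique arc leaving $\ell$ identifies the opposite marker $q_\ell$, lying in a node $u_\ell$ of $T$; we declare the tree-edge $\ell q_\ell$ to be in $T(S)$ and give $q_\ell$ weight $1$ (the number of $S$-leaves in the subtree of $T$ hanging below $q_\ell$). The exploration then repeatedly crosses, from an already reached marker $q$, the tree-edge at $q$ into the node $u$ on its other side, declares that tree-edge to be in $T(S)$, and---once all markers of $u$ that are reached from below have been collected---finalises $u$: if the sum of their weights equals $|S|$ then $u$ is the topmost node of $T(S)$ and the exploration stops; otherwise we activate the parent marker of $u$ with that accumulated weight and continue. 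In this way every node and every tree-edge of $T(S)$ is discovered exactly once. When $u$ is degenerate, the node object $\node(u)$ gives in $O(1)$ time both the list of markers of $u$ and its parent marker, so $u$ costs $O(1)$ plus $O(\deg_{T(S)}(u))$.

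The only real difficulty is a prime node $u$: $\PC(G)$ stores no node object for it, and its parent marker---the one carrying the root-flag inside the \CSC{} $\C(u)$---is a priori found only by traversing all of $\C(u)$, which can be far larger than the contribution of $u$ to $|T(S)|$. We get around this by exploiting that $G(u)$ is prime, hence (by \autoref{th_unique_prime}) has a \emph{unique} chord diagram, which $\C(u)$ encodes, together with the fact that $x$ is good in $G'$. Observe first that a marker $q$ of $u$ lies in $MP(u)$ exactly when $L(q)\cap S\ne\emptyset$, i.e.\ exactly when the tree-edge at $q$ belongs to $T(S)$; hence $MP(u)$ is in bijection with the set of $T(S)$-edges incident to $u$, and in particular $|MP(u)|=\deg_{T(S)}(u)$. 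Since $\sigma_u$ is a LexBFS ordering of $G(u)$ (\autoref{lem_LBFS_hereditary}) and $x$ is good, \autoref{lem_fully_mixed} guarantees that, provided $G'$ is a circle graph, $MP(u)$ is consecutive in $\C(u)$ with the mixed markers of $u$ as its bookends. We therefore locate the parent marker of $u$ by scanning $\C(u)$ outward, in both directions, from any marker of $u$ already reached from below, testing the root-flag at each step and charging each visited chord to the corresponding $T(S)$-edge at $u$; a scan direction is stopped as soon as it reaches a chord that is neither reached from below nor root-flagged, which certifies that $MP(u)$ has been exhausted on that side. If the accumulated weight is still below $|S|$ the parent marker is needed, and it must then be the root-flagged chord met during the scan (necessarily a bookend when it is mixed); if it fails to appear there, or if the scan is inconsistent with $MP(u)$ being a single consecutive block with mixed bookends---which it cannot be when $G'$ is a circle graph---we halt and report that $G'$ is not a circle graph. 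The work spent at $u$ is thus $O(\deg_{T(S)}(u))$, so the whole exploration costs $O(|T(S)|)$.

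I expect the main obstacle to be the scheduling of the exploration: a node must be finalised---and, if it is prime, its parent marker looked up---only after \emph{all} of its below-children that lead to an $S$-leaf have reported, since otherwise an upward walk could climb past the topmost node of $T(S)$ toward $r$, breaking the $O(|T(S)|)$ bound and wrongly inserting tree-edges. This is precisely the point where Gioan \etal~\cite{GioanPTC14Split,GioanPTC14Circle} collapse already-processed chains with the union-find structure; in our setting we instead use the LexBFS ordering $\sigma$---which induces a LexBFS ordering on every label graph (\autoref{lem_LBFS_hereditary})---to drive the exploration so that frontiers merge before any overshoot can occur, and we absorb the residual slack into the potential-based amortized analysis developed in \autoref{sec_algo}. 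The remaining point needing care is the soundness of every early abort: when $G'$ is a circle graph and $x$ is good, \autoref{lem_good_vertex} together with \autoref{th_unique_prime} and \autoref{lem_fully_mixed} force each visited prime node to exhibit exactly the consecutivity configuration our scans rely on, so any deviation legitimately certifies that $G'$ is not a circle graph. Combining this with the charging argument yields the claimed $O(|T(N_{G'}(x))|)$ running time.
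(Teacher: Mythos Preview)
Your local treatment of prime nodes is the right idea and matches the paper: scanning the \CSC{} outward from an already reached chord and invoking the consecutivity of $MP(u)$ guaranteed by \autoref{lem_fully_mixed} to locate the root-flagged marker in $O(\deg_{T(S)}(u))$ time, with any violation of that configuration certifying that $G'$ is not a circle graph.

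The gap is the scheduling, which you yourself flag as the main obstacle but do not actually resolve. Neither of your two proposed fixes works. First, ``driving the exploration by the LexBFS ordering~$\sigma$'' gives you no mechanism to detect, at an internal node~$u$, whether all of its $T(S)$-children have already reported: $\sigma$ orders the leaves, and the induced ordering $\sigma_u$ on $G(u)$ from \autoref{lem_LBFS_hereditary} says nothing about the arrival times of the upward walks at~$u$. Without that information your outward scan cannot distinguish a genuine boundary of $MP(u)$ from a marker that simply has not been reached yet. Second, appealing to the potential $\Phi$ of \autoref{sec_algo} is a category error here: the lemma asserts a per-call worst-case bound $O(|T(S)|)$, and $\Phi$ is designed to amortise the gap between $|T(N_{G'}(x))|$ and $|N_{G'}(x)|$ across insertions via contraction of the fully-mixed subtree. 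A long path of nodes above the top of $T(S)$ toward~$r$ contributes nothing to $\Phi(\ST(G))-\Phi(\ST(G'))$, so the potential cannot pay for overshoot of the traversal.

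The paper's solution is different and is the standard PC-tree device: it does \emph{not} wait for all children of a node to report. Instead the $|S|$ upward searches are interleaved in a single queue, round-robin, each advancing one step per turn; when a chord endpoint~$a$ is dequeued, only the endpoints immediately consecutive to~$a$ and to its partner~$a'$ in $\C(u)$ are tested for the root flag. The round-robin discipline prevents any single search from outrunning the rest, which bounds the total number of collected (fake) tree-edges by $2\cdot|T(S)|$; a second linear pass then stitches the visited endpoints inside each prime node into consecutive blocks, and a final connectivity check on the collected edges either yields $T(S)$ or certifies that $G'$ is not a circle graph. Replacing your ``finalise only after all children report'' strategy by this lock-step traversal closes the gap; the rest of your argument then goes through.
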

\begin{proof}
We first describe the algorithm and then analyze its complexity. 
We let $\mathcal{L}(G)$ denote the leaves of $\PC(G)$, $\mathcal{D}(G)$ the set of node objects representing the degenerate nodes of $\ST(G)$ in $\PC(G)$, and $\mathcal{C}(G)$ the set of chord endpoints that are involved in the chord diagrams representing the prime nodes of $\ST(G)$.  We set $\mathcal{R}(G)=\mathcal{L}(G)\cup \mathcal{D}(G)\cup \mathcal{C}(G)$. To identify the tree-edges of $T(N_{G'}(x))$, we search $\PC(G)$ through the set $\mathcal{R}(G)$. The algorithm is the following:
\begin{enumerate}
\item Initially every leaf in $N_{G'}(x)$ is marked as \emph{active}. Every element of $\mathcal{R}(G)\setminus N_{G'}(x)$ is considered as \emph{inactive}. The set of active elements of $\mathcal{R}(G)$ is managed in a queue $\queue(G)$. In parallel, we maintain a set  of \emph{visited} elements of $\mathcal{R}(G)$, denoted $\visited(G)$. Recall that the root $r$ of $\PC(G)$ is a leaf.

\smallskip
\item We perform a first bottom-up traversal of $\PC(G)$ from the leaves in $\queue(G)$. The objective of this step is to identify a set $\mathfrak{E}$ of tree-edges containing those of $T(N_{G'}(x))$. 
As long as either $r\notin\visited(G)$ and $|\queue(G)|\geq 2$, or $r\in\visited(G)$ and $\queue(G)\neq\emptyset$, 
we pick the head element $a\in\queue(G)$, remove it from $\queue(G)$ and add it to $\visited(G)$.
If $a\neq r$, there are three cases to consider:
\begin{itemize}
\item $a\in \mathcal{L}(G)$: Let $q$ be the opposite of $a$. The tree-edge whose extremities are $a$ and $q$ is added to $\mathfrak{E}$. Observe that $q$ is either a marker vertex of a degenerate node or $q$ is the extremity of a chord of the chord diagram representing a prime node). In the former case, if
$u\notin\queue(G)\cup\visited(G)$, then add $u$ to $\queue(G)$. 
In the case that $q\in\C(G)$, if
$q\notin\queue(G)\cup\visited(G)$, then we add $q$ to $\queue(G)$. 

\item $a\in \mathcal{D}(G)$: Let $p$ be the root marker vertex of the degenerate node $a$ and $q$ be the opposite of $p$. We proceed with $q$ as in the previous case. %

\item $a\in \mathcal{C}(G)$: Let $\{a,a'\}$ be the corresponding chord. Let $u$ denote the node of $\ST(G)$ containing that chord and $\CD(u)$ be the corresponding CSC stored in $\PC(G)$.
For each chord endpoint $b$ that is consecutive to $a$ or $a'$ in $\CD(u)$, we check if $b$ is the flagged root of $\CD(u)$. If so, let $q$ be $b$'s opposite. We proceed with $q$ as in the previous cases. Moreover we add to $\mathfrak{E}$ a so-called \emph{fake tree-edge} whose extremities are $a$ and $b$ to later retrieve this flagged root information.
\end{itemize}

\smallskip
\item We perform a second bottom-up traversal of $\PC(G)$ from the leaves in $N_{G'}(x)$. The objective is to
link the identified tree-edges of $\mathfrak{E}$. Observe that during the previous search, when visiting an endpoint of a chord in a CSC $\CD(u)$, the flagged root $b$ of $\CD(u)$ may or may not have been identified. Suppose it was when visiting the endpoint $a$. Then a fake tree-edge between $a$ and $b$ was added to $\mathfrak{E}$. We now search from $a$ the set of consecutive endpoints that were visited during the previous search. For each of them, we add to $\mathfrak{E}$ a new fake tree-edge to $a$.

At this point, we have computed a forest whose edges are $\mathfrak{E}$.
 Observe that in this implementation, if the root endpoint $b$ of a CSC $\CD(u)$ is identified, then node $u$ of $\ST(G)$ is represented by a star whose root is $b$ and possibly a set of isolated endpoints. 
 
\smallskip
\item The last step consists in checking whether $\mathfrak{E}$ allows to retrieve $T(N_{G'}(x))$. We first check whether the resulting set $\mathfrak{E}$ of (fake) tree-edges form a connected tree. If not, we return that $G'$ is not a circle graph. Otherwise, let $T(\mathfrak{E})$ denote that tree. We may have to clear the path attached to the root $s$ of $T(\mathfrak{E})$. Suppose that $s$ is distinct from the root of $\PC(G)$ or its is the root $r$ of $\PC(G)$ (which is a leaf of $\PC(G)$) but the corresponding vertex does not belong to $N_{G'}(x)$. Then remove from $T(\mathfrak{E})$, the tree-edges of the path from that root to its first descendant with at least two children. This step can be completed by an extra search.
\end{enumerate}

Since the described algorithm basically consists in three traversals, its complexity is $O(|\mathfrak{E}|)$.  The key point concerning the running time is that the round-robin execution of the upward searches in step 2 guarantees that $|\mathfrak E| \le 2 \cdot |T(N_{G'}(x))|$.

Concerning the correctness, the important point is that, if~$G'$ is a circle graph and the marker corresponding to the parent of a prime node~$u$ is in $MP(u)$, then by \autoref{lem_fully_mixed}, one of its consecutive chords (in $\CSC(u)$) is also in~$MP(u)$, and thus the parent chord will be identified at some point during the traversal.  Thus, if~$G'$ is a circle graph the algorithm correctly computes the tree-edges of~$T(N_{G'}(x))$.
\end{proof}

We now have to mark the tree-edge extremities of $\PC(G)$, each of which is either a marker vertex of a degenerate node or an endpoint of a chord in a CSC. Given that $T(N_{G'}(x))$ has been identified, the marking procedure that is fully described in~\cite{GioanPTC14Split} (Lemma 5.10 therein) runs in $O(|T(N_{G'}(x))|)$ time. Within the same time complexity, we can identify which of the cases described in \autoref{sub_split_tree_update} holds. At that step, the node-split operation on degenerate nodes is required. Let us remark that, as in a split PC-tree every degenerate node maintains a pointer to its parent node, we can perform the node-split as described in~\cite{GioanPTC14Split}.

\begin{lemma}[\cite{GioanPTC14Split}] \label{lem_degenerate_split}
Let $u$ be a degenerate node of the split PC-tree $\PC(G)$ of a circle graph. If $(A,B)$ is a split of $G(u)$, then node-splitting $u$ according to $(A,B)$ can be done in $O(|A|)$-time.
\end{lemma}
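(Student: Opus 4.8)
The plan is to reuse the existing node object $\node(u)$ for the \emph{larger} of the two nodes created by the split and to allocate a new node object only for the smaller one; every update then touches only the $O(|A|)$ marker vertices of the small side, the two markers created by the split, and a constant number of further pointers. We assume the markers of $A$ are handed to us as an explicit list and that $|A|\le |B|$; the latter is the case in every application of this lemma, and in general it can be enforced by first spending $O(|V(u)|)=O(|A|)$ time (the bound holds because then $|V(u)|<2|A|$) to extract the then shorter side $B$ from $\LL(u)$ and swapping the roles of $A$ and $B$.

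I first record what node-splitting $u$ with respect to $(A,B)$ produces, exploiting that $G(u)$ is degenerate so that the frontiers of $(A,B)$ are forced. If $G(u)$ is a clique, then $A'=A$, $B'=B$ and both resulting graphs $G(u_A)$, $G(u_B)$ are again cliques. If $G(u)$ is a star with center $c$, then either $c\in B$, in which case $A'=A$, $B'=\{c\}$, so $G(u_A)$ is a star centered at the new marker $q_A$ and $G(u_B)$ a star centered at $c$; or $c\in A$, which is symmetric. In all cases the two new nodes are again degenerate, hence again represented by node objects rather than \CSC{}s, and their types (together with the location of the center in the star case) are read off in $O(1)$ time. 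Moreover, by definition of node-split, every marker vertex other than the two new ones $q_A,q_B$ stays the extremity of the very same tree-edge as before, so its outgoing arc needs no change.

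The update itself proceeds as follows. I splice the markers of the small side out of $\LL(u)$ one at a time, each removal being $O(1)$, and link them into a fresh doubly-linked list; this leaves $\LL(u)$ as the marker list of the large side. I allocate the node object of the small side, store in it the type computed above (and, for a star, a pointer to its center), and redirect the back-pointer of each small-side marker to this new node object. Then I create $q_A$ and $q_B$, append each to the appropriate list with a back-pointer to the appropriate node object, and install the two opposite arcs realizing the new tree-edge $u_Au_B$ between them. It remains to update parent information: let $p$ be the marker recorded in $\node(u)$ whose arc leads toward the root $r$ of $\PC(G)$. If $p$ is on the large side, the large node keeps $p$ as its parent marker and I only set the parent marker of the small node to its endpoint of $u_Au_B$; if $p$ is on the small side, the small node inherits $p$, the large node becomes its child, and I reset the parent-marker field of $\node(u)$ (and its center field, if the star's center moved to the small side) to the large node's endpoint of $u_Au_B$. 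Every step other than the splicing and the back-pointer redirection is $O(1)$, so the whole operation runs in $O(|A|)$ time.

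The one place that needs care is the finite case analysis — clique versus star, and, for a star, which side contains the center, compounded with which side contains the old parent marker $p$ — together with the verification that in each case the node objects, lists, back-pointers, center pointers, parent markers and arcs produced above describe exactly the split PC-tree of the \GLT{} that results from node-splitting $u$. I expect this bookkeeping to be the main (though routine) obstacle; the rest follows directly from the constant-time primitives on doubly-linked lists and on the arc/opposite-arc representation of $\PC(G)$ described in \autoref{sec_circle}.
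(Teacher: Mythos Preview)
Your argument is correct and is essentially the natural implementation the paper has in mind: the paper does not give its own proof of this lemma but defers to~\cite{GioanPTC14Split}, merely remarking that, because degenerate nodes in $\PC(G)$ carry an explicit node object with a parent pointer, the node-split can be carried out exactly as there. Your write-up supplies precisely those details for the present data structure (reuse $\node(u)$ for the large side, splice out and re-parent the $|A|$ markers, create $q_A,q_B$ and the new tree-edge, and fix the parent/center fields by a constant case analysis), so there is nothing to compare.
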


Let us now describe how, having computed $T(N_{G'}(x))$ and marked $\PC(G)$ according to $N_{G'}(x)$, we can efficiently compute $\PC(G+x)$.

\begin{lemma} \label{lem_update}
Let $G'=G+x$ be a graph such that $x$ is a good vertex of $G'$ and $G$ is a circle graph. 
Assume that $\PC(G)$ and $T(N_{G'}(x))$ are given and that every edge extremity is marked according to $N_{G'}(x)$. In time $O(|T(N_{G'}(x))|)$, we can either compute $\PC(G')$ or conclude that $G'$ is not a circle graph.
\end{lemma}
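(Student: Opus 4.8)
The plan is to follow the four-case analysis of Gioan et al.\ recalled in \autoref{sub_split_tree_update}. By \cite{GioanPTC14Split}, exactly one of the four cases occurs, and in each case $\ST(G')$ is obtained from $\ST(G)$ by an explicitly prescribed local surgery whose correctness is already established in \cite{GioanPTC14Circle,GioanPTC14Split}; likewise, the places where we may report that $G'$ is not a circle graph are governed by \autoref{lem_fully_mixed}. Hence the whole task is to realize each surgery on the split PC-tree $\PC(G)$ within the claimed $O(|T(N_{G'}(x))|)$ budget. I would first note that, given $T(N_{G'}(x))$ together with the edge-extremity marking, deciding which of the four cases holds (and locating the relevant node or tree-edge) only needs an inspection of $T(N_{G'}(x))$ and is therefore within budget, cf.\ the marking routine of \cite[Lemma~5.10]{GioanPTC14Split}.

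The three ``degenerate-heavy'' cases are straightforward. In Case~1 with a degenerate node $u$ the update appends one marker $q_x$ to $\LL(u)$ (universal in the new clique, or a new leaf of the star), creates the leaf $\ell_x$, and links the arc pair $q_x\leftrightarrow\ell_x$; since $u$'s node object (hence its parent pointer) is untouched, this is $O(1)$. In Case~1 with the prime hybrid node $u$, I run the consecutivity test of \autoref{lem_consecutive_test} on $MP(u)$ inside $\C(u)$; if it fails, or the mixed markers are not bookends, we report that $G'$ is not a circle graph, and otherwise we insert the chord $q_x$ with neighbourhood $P(u)$ via \autoref{lem_chord_insertion} in $O(1)$ using the bookends returned by the test, and attach $\ell_x$. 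Case~2 subdivides the unique relevant tree-edge with a constant-size new degenerate node and attaches $\ell_x$ in $O(1)$. Case~3 first node-splits the unique hybrid degenerate node $u$ along $(P^*(u),E^*(u))$ using \autoref{lem_degenerate_split} (peeling off the side whose marker set is bounded by $|T(N_{G'}(x))|$, which is possible because degenerate nodes carry a parent pointer), and then reduces to Case~2.

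Case~4 is the heart of the argument, and I would implement steps (i)--(iii) of \autoref{sub_split_tree_update} as follows. (i) Clean the fully-mixed subtree $M$: node-split each degenerate $u\in M$ (peeling in each step the smaller of the perfect-marker side, the empty-marker side, and the mixed-marker side) so that it keeps at most one perfect and at most one empty marker; by the accounting of \cite{GioanPTC14Split} the total cost is $O(|T(N_{G'}(x))|)$ since the mixed-marker count summed over $M$ and the number of perfect non-empty markers are each $O(|T(N_{G'}(x))|)$. Each cleaned degenerate node has bounded degree, so convert it to a CSC in $O(1)$, flagging as root the endpoint dictated by the node object's parent pointer. For each prime $u\in M$, test with \autoref{lem_consecutive_test} that $MP(u)$ is consecutive in $\C(u)$ with its (at most two) mixed markers as bookends; if this fails for some node, or a node has more than two mixed markers, report that $G'$ is not a circle graph. (ii) Contract $M$: every mixed tree-edge of $M$ lies in $T(N_{G'}(x))$ and is accessed in $O(1)$ via its arc pair, which directly yields the two extremity endpoints to delete; for each such edge perform a consecutivity-preserving node-join of the two incident CSCs in $O(1)$ via \autoref{lem_consecutive_join}, carrying along the bookends of the preserved consecutive set. (iii) Attach $\ell_x$ and, using the final bookends to certify consecutiveness of $P(v)$ in $\C(v)$, insert the chord $q_x$ with neighbourhood $P(v)$ via \autoref{lem_chord_insertion} in $O(1)$. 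Summing, Case~4 runs in $O(|T(N_{G'}(x))|)$.

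The main obstacle, and the point where the split PC-tree improves on the union-find-based implementation of \cite{GioanPTC14Split,GioanPTC14Circle}, is making each mixed-edge contraction cost $O(1)$ even though prime nodes carry no node object and hence no directly-readable parent pointer. The key observation I would make rigorous is that the parent information propagates for free: since contracting $M$ in $T$ yields a tree, there is a unique tree-edge $e^\ast$ leaving $M$ towards the root $r$, incident to a unique node $u^\ast\in M$; for every $u\in M$ with $u\neq u^\ast$ the edge from $u$ towards $r$ stays inside $M$ and is therefore mixed, so $u$'s ``towards-root'' flag sits on an endpoint deleted by one of the node-joins, whereas $u^\ast$'s flag sits on the $e^\ast$-endpoint, which is never deleted. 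Consequently, after all contractions exactly one flagged endpoint survives in $\C(v)$ and it is the correct one; no flag is ever searched for or propagated, which is precisely what keeps the contraction constant-time. The remaining bookkeeping --- that $|T(N_{G'}(x))|$ dominates $|MP(u)|$, $|P^*(u)|$, the per-node mixed-marker counts, and the number of mixed edges --- follows exactly as in \cite{GioanPTC14Split}.
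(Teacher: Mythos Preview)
Your proposal is correct and follows essentially the same four-case analysis as the paper, invoking the same auxiliary lemmas (\autoref{lem_consecutive_test}, \autoref{lem_chord_insertion}, \autoref{lem_degenerate_split}, \autoref{lem_consecutive_join}) at the same places and with the same complexity bounds. The one notable addition is your explicit argument in Case~4 that the towards-root flag propagates for free under the contractions (exactly one flagged endpoint survives in $\C(v)$, namely the one on the unique edge $e^\ast$ leaving $M$); the paper's proof of this lemma omits this point and relies instead on the surrounding discussion of the data structure, so your version is slightly more self-contained.
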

\begin{proof}
Let us consider the four distinct cases identified in \autoref{sub_split_tree_update}.
\begin{enumerate}
\smallskip
\item \emph{$T$ contains a clique node $u$ whose marker vertices are all perfect; or $T$ contains a star node $u$ whose marker vertices are all empty except its center, which is perfect; or $T$ contains a unique hybrid node~$u$, which is prime.}

\smallskip
Suppose that $u$ is prime, since otherwise $\PC(G')$ is obtained by adding a leaf to a degenerate node which can be done in $O(1)$-time. We need to test if the chord endpoints of $MP(u)$ are consecutive in the CSC $\CD(u)$. This can be done in $O(|MP(u)|)$-time by \autoref{lem_consecutive_test}. 
If the test fails, we  can conclude that $G+x$ is not a circle graph. Otherwise we insert the new chord corresponding to $x$, which by \autoref{lem_chord_insertion} can be done in $O(1)$-time. Observe that if a marker vertex (or a chord endpoint) is perfect or mixed, then its incident tree-edge belongs to $T(N_{G'}(x))$. So we have $|MP(u)|\leqslant |T(N_{G'}(x))|$ implying that the global complexity is $O(|T(N_{G'}(x))|)$.

\smallskip
\item \emph{$T$ contains a tree-edge~$e$ whose extremities are both perfect, and this edge is unique; or $T$ contains a tree-edge with one perfect and one empty extremity, and this edge is unique.}

\smallskip
In this case, $\PC(G')$ is obtained by subdividing $e$ by a new ternary degenerate node. This can clearly be achieved in $O(1)$-time.

\smallskip
\item \emph{$T$ contains a hybrid node $u$, and this node is degenerate.}

\smallskip
In this case, $\PC(G')$ is obtained by first performing a node-split of $u$ according to $(P^*(u),E^*(u))$ and then inserting a new degenerate node. By \autoref{lem_degenerate_split} and the discussion above, this can be done in $O(|P(u)|)$-time. Since every perfect marker vertex of $u$ is incident to a tree-edge of $T(N_{G'}(x))$, the statement holds.

\smallskip
\item \emph{$T$ contains a (unique) fully mixed subtree $M$.}

\smallskip
We first have to perform the cleaning step. By \autoref{lem_degenerate_split}, this can be done in global $O(|T(N_{G'}(x))|)$ time.  Then we construct a CSC for each mixed degenerate node, which takes global $O(|T(N_{G'}(x))|)$ time, since each of them has constant size.  Finally, by \autoref{lem_consecutive_join}, the contraction of the fully mixed subtree of $\mathsf{cl}(\PC(G))$ can also be performed in $O(|T(N_{G'}(x))|)$-time. Observe that the contraction step may fail if some node join cannot return a chord diagram satisfying the consecutive property. Finally, by \autoref{lem_chord_insertion}, inserting a new chord to the CSC resulting from the series of contraction can be achieved in $O(1)$-time. \qedhere
\end{enumerate}
\end{proof}

\subsection{Amortized time complexity analysis}

A key ingredient in the complexity analysis is the following result, which bounds the size of $T(N_G(x))$ in terms of the size of~$N_G(x)$.
\begin{lemma}[\cite{RoseTL76Algorithmic,Spinrad94Recognition}]
  \label{lem_neighborhood_size}
Let $\ST(G) = (T,\mathcal F)$ be the split-tree of a graph $G$. For every vertex $x$ of $G$, $|T(N_G(x))| \le 2 \cdot |N_G(x)|$.
\end{lemma}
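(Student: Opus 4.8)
The plan is to prove the bound by induction along the structure of the split-tree, using the fact that $N_G(x)$ is exactly an \emph{accessibility set} of $T$. Since $x$ is a leaf of $T$, let $q_x$ be its marker, $q_x'$ its opposite inside the neighbouring node $w$, and observe that $N_G(x)=A(q_x)$. More generally, for any marker $q$ whose opposite $q'$ lies in a node $w$, one has the disjoint decomposition $A(q)=\bigsqcup_{p\in N_{G(w)}(q')}A(p)$, where $p$ ranges over the marker vertices of $w$ adjacent to $q'$, each contributing the leaves in its own pendant subtree. I would run the whole argument in terms of the auxiliary subtree $U(q)$: the minimal subtree of $T$ containing $w$ together with all of $A(q)$. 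This is just $T(A(q))$ extended by a (possibly empty) ``tail'' linking it to $w$, so $|T(N_G(x))|\le |U(q_x)|$ once sizes are measured in edges (equivalently, in nodes minus one). It then suffices to show $|U(q)|\le 2|A(q)|$ for every marker $q$ with $A(q)\neq\emptyset$, which I would prove by induction on the size of the pendant subtree at $q$.

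\textbf{Induction and structural input.} The decomposition above yields $U(q)=\{w\}\cup\bigcup_{p\in N_{G(w)}(q')}\bigl(\{ww_p\}\cup U(p)\bigr)$ when $q'$ has at least two neighbours in $G(w)$, and $U(q)=\{ww_p\}\cup U(p)$ when $q'$ has a single neighbour $p$. Translating into edge counts and tracking the ``slack'' $2|A(q)|-|U(q)|$, the only configuration that erodes the slack is a node at which $q'$ has exactly one neighbour, since it adds an edge without enlarging $|A(\cdot)|$. Here I would invoke the defining properties of the \emph{canonical} split-tree: every internal node has degree at least $3$; no two adjacent nodes are both cliques; and no tree-edge has extremities that are the centre of one star-node and a leaf of another star-node. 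The first two facts force that, whenever $q'$ has a unique neighbour $p$ in $G(w)$, the node $w$ is a star-node with $q'$ a leaf and $p$ its centre; the third then forbids the node $w_p$ reached through $p$ from being of the same form. Thus such ``pass-through'' nodes never occur two in a row, which caps the per-step loss of slack and lets the induction propagate.

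\textbf{Main obstacle.} The delicate step I expect to be the real work is converting ``pass-through nodes do not chain'' into the \emph{exact} constant $2$ rather than $3$ or $4$: the crude charging (each pass-through node to its unique child, each branch node to the leaves below it) only yields something like $4|N_G(x)|$, and one must amortise the two charges together, essentially showing that every leaf of $T(N_G(x))$ is responsible for at most one pass-through ancestor and, spread over a branch node's subtrees, at most one further edge. A second subtlety to handle carefully is the node $\rho$ of $T(N_G(x))$ closest to $x$: it may have degree $2$ in $T(N_G(x))$ yet behaves like a branch node (its marker towards $x$ is adjacent to the markers of both of its children), so it must be excluded from the pass-through analysis and checked to contribute only a bounded additive term. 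Once this bookkeeping is done, the induction closes and gives $|T(N_G(x))|\le |U(q_x)|\le 2|A(q_x)|=2|N_G(x)|$.
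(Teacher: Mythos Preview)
The paper does not prove this lemma: it is stated with a citation and used as a black box, so there is no ``paper's own proof'' to compare against.

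Your structural analysis is correct and is the standard route. The key point---that for an internal node $u\neq r$ of $T(N_G(x))$ with parent-marker $q_u$ the children in $T(N_G(x))$ are exactly $N_{G(u)}(q_u)$, so $u$ has degree $2$ iff $G(u)$ is a star with $q_u$ a leaf---is right, and the canonical split-tree conditions do forbid two consecutive such pass-through nodes. That already gives $|T(N_G(x))|=O(|N_G(x)|)$, which is all the paper actually uses.

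Your caution about the exact constant is well-founded; in fact the inequality with constant $2$ is \emph{false} as stated. Take the caterpillar split-tree with spine $u_1,\dots,u_5$, leaves $x$ and $y_1$ at $u_1$, leaf $y_i$ at $u_i$ for $2\le i\le 5$, and a second leaf $y_0$ at $u_5$; label $u_1,u_3,u_5$ by $K_3$ and label $u_2,u_4$ by stars whose centers are the markers towards $u_3$ and $u_5$ respectively. This is a canonical split-tree (no two adjacent cliques, no forbidden star--star edge), and one checks directly that $N_G(x)=\{y_1,y_3,y_5,y_0\}$ while $T(N_G(x))$ contains all five $u_i$ together with these four leaves, so $|T(N_G(x))|=9>8=2\,|N_G(x)|$. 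Extending the spine pushes the ratio towards~$3$.

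So you will not be able to close the induction at constant $2$; the amortisation you sketch cannot succeed because the example shows the slack genuinely goes negative. Your ``crude charging'' (which yields a constant around $3$ or $4$) is essentially the right answer, and it suffices: in the paper the lemma feeds only into big-$O$ estimates in \autoref{lem_amortized}, where any fixed constant works. The factor $2$ in the lemma statement appears to be an imprecision carried over from the cited sources.
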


Given the split PC-tree $\PC(G)$ of a circle graph $G$, testing if $G'=G+x$ is a circle graph amounts to computing $\PC(G')$. For this we have access to $T(N_{G'}(x))$, with $\ST(G)=(T,\F)$, and not to $T'(N_{G'}(x))$, with $\ST(G')=(T',\F')$. Observe that  \autoref{lem_update} establishes an insertion complexity in time linear in the size of $T(N_{G'}(x))$. This is not sufficient to establish an overall linear running time. Indeed, the size of $T(N_{G'}(x))$ can be significantly larger than $|N_{G'}(x)|$. 
In that case, \autoref{lem_neighborhood_size} says that $|T'(N_{G'}(x))|$ is significantly smaller than~$|T(N_{G'}(x))|$. This is the case when a large fully-mixed subtree is contracted into a single prime node. So to circumvent this issue, we show that inserting a good vertex $x$ in a circle graph $G$ takes amortized time~$O(|N_{G'}(x)|)$ if~$G'=G+x$ is a circle graph. We use the method of potentials introduced in~\cite{SleatorT85}; see~\cite[Ch. 17.3]{cormen01introduction} for a modern exposition.

We define a potential function~$\Phi$ on \GLT{}'s representing a split decomposition.  Let~$(T,\mathcal F)$ be a \GLT.  For an inner node~$u$ of $T$, we define :

\smallskip
\centerline{$
\Phi_u= \left \{
\begin{array}{ll}
1 &\mbox{ if $u$ is non-degenerate,} \\
\deg_{T}(u)-2 &\mbox{ if $u$ is degenerate.}
\end{array}
\right .
$}

\noindent
\smallskip
And we set $\Phi(T,\mathcal F) = \sum_{u \in T} \Phi_u$. Note that, by definition, the potential is non-negative. 
The \emph{amortized cost} to insert vertex $x$ is

\smallskip
\centerline{$a(x)=t(x)+\Phi(\ST(G+x))- \Phi(\ST(G)),$}

\smallskip
\noindent
where $t(x)$ is the \emph{actual cost}, which is given by \autoref{lem_update}, that is $O(|T(N_{G'}(x))|)$.

The intuition behind the definition of this potential function is the following. When a large fully-mixed subtree into contracted into a single node, which is a costly operation, then the potential will significantly decrease. By contrast, when a new degenerate node or new leaf adjacent to a degenerate node is added, which can be done efficiently, then the potential increases, but only slightly. This implies the overall amortized linear running time.

\begin{lemma} \label{lem_amortized}
Let $G'=G+x$ be a graph such that $x$ is a good vertex of $G$ and $G$ is a circle graph.
If $G'$ is a circle graph, then~$\PC(G')$ can be computed from~$\PC(G)$ in amortized~$O(|N_{G'}(x)|)$ time.
\end{lemma}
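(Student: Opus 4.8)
The plan is to run the potential argument set up just above. Write $k=|N_{G'}(x)|$, $T=\ST(G)$, $T'=\ST(G')$, and let $t(x)$ be the actual cost of the update. Since $G$ and $G'$ are circle graphs, the preprocessing (\autoref{lem_compute_subtree}, the marking procedure of~\cite{GioanPTC14Split}) together with \autoref{lem_update} give $t(x)=O(|T(N_{G'}(x))|)$, so $a(x)=t(x)+\Phi(T')-\Phi(T)=O(|T(N_{G'}(x))|)-(\Phi(T)-\Phi(T'))$, and it remains to show this is $O(k)$. Two ingredients drive the argument. First, \autoref{lem_neighborhood_size} applied to the vertex $x$ of the graph $G'$ gives $|T'(N_{G'}(x))|\le 2k$. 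Second, the degree identity $\deg(u_A)+\deg(u_B)=\deg(u)+2$ for a node-split shows that a node-split of a degenerate node into two degenerate nodes preserves $\Phi$, since then $\Phi_{u_A}+\Phi_{u_B}=(\deg(u_A)-2)+(\deg(u_B)-2)=\deg(u)-2=\Phi_u$. I would then treat the four cases of \autoref{sub_split_tree_update} in turn.

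In Cases~1--3, $T'$ differs from $T$ only by a bounded-size local change: a pendant leaf attached at a node, or a tree-edge subdivided by a ternary degenerate node (possibly preceded by one degenerate node-split), plus the insertion of a few markers. Hence the minimal subtree spanning $N_{G'}(x)$ changes by only $O(1)$ nodes, so $|T(N_{G'}(x))|\le|T'(N_{G'}(x))|+O(1)\le 2k+O(1)$ and $t(x)=O(k)$; moreover $\Phi(T')-\Phi(T)=O(1)$, because a pendant leaf is not an inner node and merely raises the degree of its neighbour by one, a subdividing ternary degenerate node contributes potential $1$, and the preliminary degenerate node-split preserves $\Phi$ by the identity above. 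So $a(x)=O(k)$ in these cases.

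The real work is Case~4, where $T'$ is obtained by cleaning $\ST(G)$, contracting the cleaned fully-mixed subtree $\cl(M)$ into a single prime node $v$, and attaching a pendant leaf $\ell_x$ at $v$. I would argue as follows. (i) Cleaning is a sequence of node-splits of degenerate nodes into degenerate nodes, hence preserves $\Phi$; and attaching $\ell_x$ at the non-degenerate node $v$ leaves $\Phi$ unchanged. (ii) Every node of $\cl(M)$ is an inner node of a split tree and so has potential at least $1$ (a degenerate inner node has degree $\ge 3$, a non-degenerate one has potential exactly $1$), while $\Phi_v=1$; therefore $\Phi(T)-\Phi(T')=\sum_{u\in\cl(M)}\Phi_u-1\ge|\cl(M)|-1$. (iii) On the cost side, every mixed tree-edge separates two leaves of $N_{G'}(x)$, so $M\subseteq T(N_{G'}(x))$; each component of $T(N_{G'}(x))\setminus M$ is a subtree hanging off $M$ that contains a leaf of $N_{G'}(x)$, is left untouched by cleaning and contraction, and hence survives in $T'$ inside $T'(N_{G'}(x))$. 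Summing, these components contribute at most $|T'(N_{G'}(x))|\le 2k$ nodes, so $|T(N_{G'}(x))|\le 2k+|M|\le 2k+|\cl(M)|$ and $t(x)=O(k+|\cl(M)|)$. Putting the pieces together, $a(x)=t(x)-(\Phi(T)-\Phi(T'))\le O(k+|\cl(M)|)-(|\cl(M)|-1)=O(k)$, the $O(|\cl(M)|)$ term of the running time being cancelled by the potential drop (absorbing the implicit constant of \autoref{lem_update} into the unit of $\Phi$, which leaves the $O(1)$ changes of Cases~1--3 intact). Summing $a(x)$ over all insertions and using $\Phi\ge 0$ then yields the stated amortized bound.

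The step I expect to be the main obstacle is exactly the accounting in Case~4: making precise that the one expensive operation — collapsing a possibly large fully-mixed subtree — is fully paid for by the released potential. This hinges on the two delicate points above, namely that the subtrees hanging off $M$ contribute only $O(k)$ nodes to $T(N_{G'}(x))$ beyond $M$ itself (which is where \autoref{lem_neighborhood_size} must be invoked in the \emph{new} tree $T'$, not in $T$), and that cleaning preserves $\Phi$ while the contraction strictly releases it. The circle-graph hypothesis is used throughout Case~4 to ensure that the update of \autoref{sub_split_tree_update} indeed succeeds and that \autoref{lem_update} applies — via \autoref{lem_fully_mixed}, which bounds the number of mixed markers of each node of $M$ and thereby keeps the cleaned degenerate nodes of bounded degree.
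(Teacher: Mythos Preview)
Your proof is correct and follows essentially the same approach as the paper: the same potential function, the same case split, and the same key move of applying \autoref{lem_neighborhood_size} in $T'$ rather than $T$ so that the released potential in Case~4 cancels the $|M|$ term. Your treatment of Case~4 is in fact slightly more robust than the paper's, since you use the inequality $\Phi(T)-\Phi(T')\ge|\cl(M)|-1$ (from $\Phi_u\ge 1$ for every inner node) rather than relying on the claim that every cleaned degenerate node has degree exactly~$3$; and you make explicit the rescaling of $\Phi$ to absorb the hidden constant, which the paper leaves implicit.
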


\begin{proof}
Let us denote $\ST(G)=(T,\F)$ and $\ST(G')=(T',\F')$.  By \autoref{lem_update}, the actual cost $t(x)$ is 
  in~$O(|T(N_{G'}(x))|)$.  We now consider the four cases identified in
  \autoref{sub_split_tree_update} and in each of them determine the
  potential difference and the resulting amortized running time.
\begin{enumerate}
  \item As~$\ST(G')$ is obtained from~$\ST(G)$ by either adding a
    chord to a prime node or by adding a new leaf, we
    have~$\Phi(\ST(G')) - \Phi(\ST(G)) \leq 1$.  Moreover, we
    have~$|T(N_{G'}(x))| = |T'(N_{G'}(x))|$. So by \autoref{lem_neighborhood_size}, $|T(N_{G'}(x))| \le 2 \cdot |N_{G'}(x)|$. 
    Thus, $t(x)$ is in $O(|N_{G'}(x)|)$ and $\Phi(\ST(G+x))- \Phi(\ST(G))$ is constant, which yields an amortized
    running time $a(x)$ in~$O(|N_{G'}(x)|)$.

  \smallskip
  \item In this case~$\ST(G')$ is obtained from~$\ST(G)$ by subdividing
    an edge with a new degenerate node of degree $3$.
    Thus we have~$\Phi(\ST(G')) - \Phi(\ST(G)) =1$ and
    $|T(N_{G'}(x))| = |T'(N_{G'}(x))|-1$. So by \autoref{lem_neighborhood_size}, $|T(N_{G'}(x))| = |T'(N_{G'}(x))|-1 \le 2 \cdot |N_{G'}(x)| -1$.
    Thus, $t(x)$ is in $O(|N_{G'}(x)|)$ and $\Phi(\ST(G+x))- \Phi(\ST(G))$ is constant, which yields an amortized
    running time $a(x)$ in~$O(|N_{G'}(x)|)$.

  \smallskip
  \item In this case~$\ST(G')$ is obtained from~$\ST(G)$ by splitting a hybrid
    node~$u$ that is degenerate into two adjacent nodes~$v,w$ and then using the previous
    case to handle the edge~$vw$. Observe that performing a node-split on a degenerate node does not change the potential. Moreover as discussed in the previous case, subdividing a tree-edge to insert a degenerate node of degree $3$ increases the potential of the resulting GLT by $1$. It follows that 
    $\Phi(\ST(G')) - \Phi(\ST(G)) = 1$. Observe that $|T(N_{G'}(x))| = |T'(N_{G'}(x))| - 2$, then by \autoref{lem_neighborhood_size}, $|T(N_{G'}(x))| = |T'(N_{G'}(x))| - 2 \le 2 \cdot |N_{G'}(x)| -2$.
Thus, $t(x)$ is in $O(|N_{G'}(x)|)$ and $\Phi(\ST(G+x))- \Phi(\ST(G))$ is constant, which yields an amortized
    running time $a(x)$ in~$O(|N_{G'}(x)|)$.
   
\smallskip    
\item Suppose that the fully-mixed subtree of $\ST(G)$ contains $d$ degenerate nodes and $p$ prime nodes. Since each node-split of a degenerate node leaves the potential unchanged, we have $\Phi(\cl(\ST(G))= \Phi(\ST(G))$. Observe that in $\cl(\ST(G))$, every degenerate node $u$ has degree~$3$, implying that $\Phi_u=1$. Thus every node (degenerate or prime) of the fully mixed subtree of $\cl(\ST(G))$ contributes $1$ to $\Phi(\cl(\ST(G)))$. Since $\ST(G')$ is obtained by contracting the fully-mixed subtree of $\cl(\ST(G))$ to a single prime node, it follows that $\Phi(\ST(G'))=\Phi(\cl(\ST(G)))-d-p+1$. So we have $\Phi(\ST(G'))-\Phi(\ST(G))=-d-p+1$.
 Let us now analyze the size of $T'(N_{G'}(x))$ compared to $T(N_{G'}(x))$.
Observe that a degenerate node may be split twice, but then in the contraction step, one of the up to three resulting nodes disappears in $T'(N_{G'}(x))$. Moreover every prime node of $T(N_{G'}(x))$ also disappears during the contraction step. It follows that $|T(N_{G'}(x))|\leq |T'(N_{G'}(x))|+d+p-1$.
Then, by \autoref{lem_neighborhood_size}, $|T(N_{G'}(x))| \leq 2 \cdot |N_{G'}(x)| +d+p-1$.  It follows that the amortized running time is $a(x)=|T(N_{G'}(x))|+\Phi(\ST(G'))-\Phi(\ST(G))\leq 2 \cdot |N_{G'}(x)|$. \qedhere
  \end{enumerate}
\end{proof}

\begin{theorem}
Let $G$ be a graph on $n$ vertices and $m$ edges. Deciding if $G$ is a circle graph can be done in $O(n+m)$.
\end{theorem}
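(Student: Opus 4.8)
The plan is to assemble the preceding ingredients into an incremental algorithm that inserts the vertices of $G$ one at a time, in a fixed LexBFS order, maintaining the split PC-tree of the prefix processed so far.

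First I would reduce to the connected case: compute the connected components of $G$ in $O(n+m)$ time and run the algorithm separately on each, which is legitimate because a graph is a circle graph if and only if each of its connected components is, and the running times add up. On a connected $G$, I would then compute one LexBFS ordering $\sigma=v_1,\dots,v_n$ in $O(n+m)$ time~\cite{RoseTL76Algorithmic} and set $G_i=G[\{v_1,\dots,v_i\}]$. A short argument shows that $v_i$ is a good vertex of $G_i$ for every $i$: since $v_n$ is visited last it refines the label of no other vertex, so running LexBFS on $G-v_n$ with the same vertex priority reproduces $v_1,\dots,v_{n-1}$, and iterating this, $v_1,\dots,v_i$ is a LexBFS ordering of $G_i$; moreover connectivity of $G$ forces each $G_i$ to be connected with $N_{G_i}(v_i)\ne\emptyset$ for $i\ge 2$, so the standing hypotheses of \autoref{sec_algo} are met.

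The core loop maintains $\PC(G_i)$: starting from the trivial $\PC(G_1)$, for each $i\ge 2$, assuming inductively that $G_{i-1}$ is a circle graph with $\PC(G_{i-1})$ at hand, I would (i) compute $T(N_{G_i}(v_i))$ via \autoref{lem_compute_subtree}; (ii) mark the tree-edge extremities of $\PC(G_{i-1})$ and identify which case of \autoref{sub_split_tree_update} applies, using \cite[Lemma~5.10]{GioanPTC14Split} and \autoref{lem_degenerate_split}; and (iii) apply \autoref{lem_update} to either produce $\PC(G_i)$ or report that $G_i$, hence $G$, is not a circle graph, in which case the algorithm halts. If all $n$ vertices are processed successfully, it reports that $G$ is a circle graph (and $\PC(G)$ is obtained as a by-product). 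Correctness in the positive direction follows from heredity of circle graphs, so every $G_i$ is a circle graph, together with the correctness of the update procedure of Gioan et al.~\cite{GioanPTC14Circle} (via \autoref{lem_fully_mixed} and \autoref{theo_circle_split}); in the negative direction, at the first index $i$ with $G_i$ not a circle graph one has $G_{i-1}$ a circle graph and $v_i$ good in $G_i$, so \autoref{lem_compute_subtree} and \autoref{lem_update} apply, and their guarantees---together with \autoref{theo_circle_split}, which rules out a spurious success---ensure the failure is detected.

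For the running time I would invoke the potential $\Phi$ and \autoref{lem_amortized}, which gives that the amortized cost $a(v_i)=t(v_i)+\Phi(\ST(G_i))-\Phi(\ST(G_{i-1}))$ of step $i$ is $O(|N_{G_i}(v_i)|)$; summing over $i$ the potential telescopes, and since $\Phi\ge 0$ and $\Phi(\ST(G_1))=0$ one gets $\sum_i t(v_i)\le\sum_i a(v_i)$, which is $O\big(\sum_i|N_{G_i}(v_i)|\big)=O(m)$ because each edge is counted exactly once, at the insertion of its later endpoint. Adding the $O(n+m)$ spent on the component decomposition and on LexBFS, and (in the negative case) a single failing iteration of cost $O(|T(N_{G_i}(v_i))|)=O(n+m)$, yields the claimed bound. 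I expect the real difficulty to lie not in this final assembly but in the results it rests on---\autoref{lem_compute_subtree} (recovering the parent arc of a prime node without a node object) and \autoref{lem_amortized} (the potential argument that absorbs the cost of large fully-mixed contractions); within the assembly itself, the one delicate point is the negative-direction correctness, i.e.\ arguing via \autoref{theo_circle_split} that the update step never silently succeeds on a non-circle graph.
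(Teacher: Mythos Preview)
Your proposal is correct and follows essentially the same approach as the paper: process vertices in LexBFS order, invoke \autoref{lem_compute_subtree} and \autoref{lem_update} at each step, and use the potential argument of \autoref{lem_amortized} with telescoping to get $O(m)$ total actual cost in the positive case. Your treatment is in fact more explicit than the paper's in several places---you spell out the reduction to connected components, the fact that each prefix $\sigma[1..i]$ is a LexBFS ordering of $G_i$ (hence $v_i$ is good), and the telescoping of the potential---whereas the paper defers correctness entirely to~\cite{GioanPTC14Circle} and is terse on these points. For the failing iteration you bound its cost crudely by $O(|T(N_{G_i}(v_i))|)\le O(n)\subseteq O(n+m)$, which is perfectly valid; the paper instead says this cost is ``compatible with the amortized complexity analysis of \autoref{lem_amortized}'', which is vaguer but intends the same thing.
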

\begin{proof}
Since the correctness of the algorithm is proved in Gioan \etal~\cite{GioanPTC14Circle}, we only discuss the complexity analysis. 
If $G$ is a  circle graph, then by \autoref{lem_amortized}, the algorithm builds the split PC-tree $\PC(G)$ in amortized linear time. So suppose that $G$ is not a circle graph. Let $\sigma$ be the LexBFS ordering in which the vertices are processed by the algorithm and $x$ be the earliest vertex in $\sigma$ and $S$ be the subset of vertices containing all vertices $y$ such that $y<_{\sigma} x$. Then $G[S]$ is a circle graph and we note $\ST(G[S])=(T_S,\F_S)$. The fact that $G[S]+x$ is not a circle graph is detected either during the computation of $T_S(N(x))$ or when we try to insert the chord corresponding to $x$ to a chord diagram representing a prime node in the spit PC-tree in hand. By \autoref{lem_compute_subtree} and by \autoref{lem_update}, this can be decided in time $O(|T_S(N(x))|)$ which is compatible with the amortized complexity analysis of \autoref{lem_amortized}.
\end{proof}


\bibliography{references}

\end{document}